\newcommand{\sgn}{\mathrm{sgn}}
\newcommand{\ket}[1]{\left| #1 \right>} 
\newcommand{\bra}[1]{\left< #1 \right|} 
\newcommand{\braket}[2]{\left< #1 \vphantom{#2} \right|
	\left. #2 \vphantom{#1} \right>}
\newcommand{\vv}[1]{\ensuremath{\mathbf{#1}}} 
\newcommand{\gv}[1]{\ensuremath{\boldsymbol{#1}}}
\newcommand{\avg}[1]{\left< #1 \right>} 
\newtheorem{thm}{Theorem}
\newtheorem*{lem}{Lemma}
\newdefinition{rmk}{Remark}
\begin{document}

\begin{frontmatter}

\title{$\mathbb{Z}_k^{(r)}$-Algebras, FQH Ground States, and Invariants of Binary Forms}

\author{Hamed Pakatchi}
\ead{pakatchi@alum.mit.edu}
\affiliation{organization={Massachusetts Institute of Technology}, addressline={77 Massachusetts Avenue}, 	city={Cambridge}, postcode={02139}, state={MA}, country={USA}}

	\begin{abstract}
		A prominent class of model FQH ground states is those realized as correlation function of $\mathbb{Z}_k^{(r)}$-algebras. In this paper, we study the interplay between these algebras and their corresponding wavefunctions. In the hopes of realizing these wavefunctions as a unique densest zero energy state, we propose a generalization for the projection Hamiltonians. Finally, using techniques from invariants of binary forms, an ansatz for computation of correlations $\avg{\psi(z_1)\cdots\psi(z_{2k})}\prod_{i<j}(z_i-z_j)^{2r/k}$ is devised. We provide some evidence that, at least when $r=2$, our proposed Hamiltonian realizes $\mathbb{Z}_k^{(2)}$-wavefunctions as a \emph{unique} ground state.
	\end{abstract}

\end{frontmatter}

\section{Introduction}
Inspired by Laughlin's seminal work \cite{Laughlin_Original}, analyzing trial wavefunctions has been one of the main tools to research fractional quantum Hall (FQH) systems. A promising method to construct and study model FQH wavefunctions, especially non-Abelian ones \cite{wen1991non, Moore-Read}, is to realize them as conformal blocks in particular rational conformal field theories (RCFT). Moore-Read pioneered this approach \cite{Moore-Read} via the construction of the Pfaffian state (aka Moore-Read state) based on the Ising CFT. The Pfaffian state and its generalizations, namely the $\mathbb{Z}_k$ Read-Rezayi (RR) states \cite{Read-Rezayi}, now lay the foundation of our understanding of non-Abelian states.

Concretely, in the CFT point of view toward \emph{bosonic} quantum Hall states, one first recognizes the `electron' as a \emph{bosonic} chiral vertex operator $V_e(z)$ in an appropriate CFT. For our purposes, the underlying CFT is always the direct product of a parafermionic theory and a free boson theory. In an FQH system with filling fraction $\nu$, a sensible choice would be $V_e(z)=\psi(z):\exp[i\phi(z)/\sqrt{\nu}]:$ \cite{Moore-Read}. Here, $\psi$ is the parafermion that generates the cyclic symmetry (see below), $\phi$ is a free chiral boson, and $:\: :$ symbolizes normal ordering. The trial wavefunction for a quantum Hall droplet on the plane or the sphere is then the following correlation function:
\begin{equation}
	\Psi(z_1, \cdots, z_{N})\propto
	\avg{\mathcal{O}_{\mathrm{bg}}(\infty)V_e(z_1)\cdots V_e(z_{N})} = \avg{\psi(z_1)\psi(z_2)\cdots \psi(z_{N})}\prod_{i<j}(z_i-z_j)^{1/\nu}
\end{equation}
The operator $\mathcal{O}_{\mathrm{bg}}$ represents the neutralizing background charge that we put at infinity. With this convention about the neutralizing background, the wavefunction $\Psi$ becomes a translational invariant homogeneous symmetric polynomial. Ideally, one has access to a (pseudopotential) Hamiltonian \cite{Haldane_Sphere_Pseudo, Simon_Pseudopotentials}, describing a particular FQH phase of matter on a disk/sphere, and the manufactured wavefunction $\Psi$ would be its \emph{unique} densest zero-energy eigenstate (i.e., the ground state). However, aside from a handful of examples, as influential as those rare cases may be, designing models satisfying the `uniqueness' condition has proven difficult in the literature. Therefore, part of the current paper's mission would be to develop modified Hamiltonians with the uniqueness issue in mind.

In practice, it is sometimes more feasible to procure a ``nice'' polynomial with specific local properties and retroactively engineer a (pseudopotential) Hamiltonian fitting those local properties. As a basic example, consider a wavefunction (i.e., symmetric polynomial) $\Psi(z_1, \cdots, z_N)$ that vanishes of order $2r$ when $k+1$ particles are fused but does not vanish when fewer particles are fused. This local property automatically makes $\Psi$ the zero-energy state to the projection Hamiltonian $\mathscr{H}_{k+1}^{2r}$ that projects out any state where any cluster of $k + 1$ particles has relative angular momentum $<2r$ (see Ref. \cite{Simon_projection} for more details on projection Hamiltonians). We want to strengthen this vanishing property. To do so, let $\Psi_n$ denote the ground state wavefunction of a particular FQH phase when prepared with $N=nk$ electrons. A more stringent variant of the vanishing property would then be:
\begin{equation}
	\Psi_{n}(w^{\times k}, z_{k+1}, \cdots, z_{nk})=
	\prod_{i=k+1}^{nk}(w-z_i)^{2r}\Psi_{n-1}(z_{k+1}, \cdots, z_{nk})
\end{equation}
Following the authors in Ref. \cite{estienne2010clustering}, we say the wavefunction is a $(k,2r)$ \emph{clustering polynomial} if it is a conformally covariant polynomial satisfying this local property. Superficially, clustering states are appealing because if one takes a $k$-cluster of electrons to infinity, one recovers the `same' wavefunction only with fewer particles. On a classification level, as pointed out in Refs. \cite{Pattern_of_Zeros, wen2008topological, barkeshli2009structure}, the CFT approach and clustering polynomials are closely tied.

To further comment on clustering properties and introduce some of the key players in the current paper, we need to glance at the underlying parafermionic CFTs. The relevant parafermionic CFTs have a cyclic fusion algebra, say $\mathbb{Z}_k$. They consist of $k$ primary fields $\psi_0\equiv 1,\psi_1\equiv \psi, \psi_2, \cdots, \psi_{k-1}$ and the simple fusion rules $\psi_a\star \psi_b=\psi_{a+b\pmod {k}}$. The parafermion $\psi\equiv \psi_1$ is the generator of the cyclic symmetry under the fusion rules. The cyclic nature of the fusion algebra is responsible for (generally weaker versions of) the clustering property of the corresponding FQH wavefunction. Translating the fusion rules into an operator product algebra (OPA), we have (Notation: $\psi_a^+\equiv \psi_{k-a}$):
\begin{subequations}
	\begin{align}
		&\psi_a(z)\psi_b(w)(z-w)^{h_{a}+h_b-h_{a+b}}=C_{a,b}\psi_{a+b}(w)+\cdots &(0< a+b<k)\\
		&\psi^+_{a}(z)\psi^+_{b}(w)(z-w)^{h_a+h_b-h_{a+b}}=C_{a,b}\psi^+_{a+b}(w)+\cdots &(0< a+b<k)\\
		&\psi_a(z)\psi_a^+(w)(z-w)^{2h_a}= 1+ 
		{\textstyle\frac{2h_a}{c}}(z-w)^2L(w)+\cdots
	\end{align}
\end{subequations}
Here, $C_{a,b}$ are the structure constants, $c$ is the central charge and $L$ is the energy-momentum tensor. Moreover, the ellipsis only involves fields with a higher scaling dimension than the leading field by an integer. In truth, with $\mathcal{W}$ being the underlying chiral algebra, all subsequent fields are $\mathcal{W}$-descendants of the leading term. There exist many associate solutions to the above OPA. In this paper, we are only interested in solutions with scaling dimensions $h(\psi_a)\equiv h_a=a(k-a)/2\nu$, where $\nu=k/2r$ (with integer $k,r$); the symbol $\nu$ can be identified as the filling fraction of the corresponding FQH wavefunction. Such a theory will be called a \emph{$\mathbb{Z}_k^{(r)}$ (current) algebra}. The special solution with $r=1$ (i.e. $\mathbb{Z}_k^{(1)}$-algebra) is the classic construction of Zamolodchikov-Fateev \cite{Zamolodchikov_Fateev_Parafermion} and $\mathbb{Z}_k^{(1)}$ wavefunctions are the Read-Rezayi states \cite{Read-Rezayi}. Thus, $\mathbb{Z}_k^{(r)}$ states are the generalization of Read-Rezayi states.

Note that the electron vertex operator $V_e(z)=\psi(z):\exp[i\sqrt{2r/k}\:\phi(z)]:$, with $\psi$ being the generator of a $\mathbb{Z}_k^{(r)}$-algebra, has the statistics of a boson $V_e(z)V_e(w)=V_e(w)V_e(z)$, and an integer scaling dimension equal to $r$. To respect the spin-statistics theorem, we have disallowed half-integer values for $r$. However, this choice would exclude states like Gaffnian \cite{Gaffnian} from the models (i.e., $k=2, r=3/2$), and not all authors would opt for such an omission. For reference, what we named a $\mathbb{Z}_k^{(r)}$-algebra would be called a $\mathbb{Z}_k^{(2r)}$ theory in Ref. \cite{estienne2010clustering}, and they do include cases like Gaffnian as well. Regardless of allowing half-integers, as is pointed out in Ref. \cite{estienne2010clustering}, using the operator product expansions (of the form $\psi_a(z)\psi(w)$) repeatedly, the wavefunction
\begin{equation}
	\Psi_n (z_1, \cdots, z_{nk}) \equiv \frac{1}{(\prod_{a=0}^{k-1}C_{1,a})^n}\avg{\psi(z_1)\cdots \psi(z_{nk})}\prod_{i<j}(z_i-z_j)^{2r/k}
\end{equation}
can easily be seen as a $(k,2r)$ clustering polynomial. Here, the subscript $n$ refers to the size of the system and reminds us that we have $N=nk$ particles. The current paper will unravel a secondary local property of these wavefunctions called \emph{separability}. The new property is concerned with the explicit form of the projection of the wavefunction when a cluster of $k+1$ particles has a relative angular momentum of exactly $2r$. As discussed in the summary section, we use separability to design modified Hamiltonians that realize these wavefunctions as likely the unique, densest zero-energy state.

In the literature, initially, the main motivation to investigate $\mathbb{Z}_k^{(r)}$-algebras was two conjectures about (negative rational)  $(k,\rho)$ admissible Jack polynomials \cite{Feigin}, and their role as trial wavefunctions \cite{Bernevig_Haldane_Model, Bernevig_Haldane_Cluster}. The Jacks are an interesting avenue for generalizing  Read-Rezayi states to non-Abelian states with more general filling fractions. In their debut as trial wavefunctions \cite{Bernevig_Haldane_Model}, it was conjectured that these Jacks are clustering polynomials. As for the second conjecture, before coming to prominence as trial FQH states, these specialized Jack polynomials were first studied in Ref. \cite{Feigin} as a means to describe the space of symmetric polynomials that vanish as $k+1$ variables coalesce. In the same paper, and later on in Ref. \cite{feigin2003symmetric}, it was conjectured that these Jacks are certain correlation functions in the $\mathrm{WA}_{k-1}(k+1, k+\rho)$ minimal models. However, $\mathrm{WA}_{k-1}(k+1, k+\rho)$ is a (non-unitary) special case of $\mathbb{Z}_k^{\rho/2}$-algebras. Thus, the former conjecture is a consequence of the latter one. In Ref. \cite{estienne2009relating}, by studying the degenerate representations of $\mathrm{WA}_{k-1}(k+1, k+\rho)$, both conjectures were proved. We should also mention that the conjecture regarding the clustering property of Jacks was also proved independently via the representation theory of Cherednik algebras in Ref. \cite{zamaere}.

In general, FQH polynomials often have hidden structures, historically leading to new physical insight into the quantum Hall effect. For example, we have already mentioned how researchers use trial wavefunctions to reverse engineer (pseudopotential) Hamiltonians. In principle, one may also use such wavefunctions to reconstruct the underlying CFT (with various degrees of success). Assuming a given wavefunction is indeed a conformal block, one can obtain the central charge from the wavefunction alone \cite{bernevig2009central,estienne2010clustering, wen1994chiral}. In the current paper, we obtain further CFT data from these wavefunctions. To what degree and how these trial wavefunctions encode the current algebra inside them remains to be seen. However, it is safe to say that: as our technical ability to calculate these particular conformal blocks advances, our knowledge about these FQH states and their hidden structures will also improve. At the moment, aside from a handful of exceptions, the best tools currently available are only applicable situationally and even then usually only help us to compute the $2k$ point correlation function (the so-called principal wavefunction)
\begin{equation}
	\Psi_2 (z_1, \cdots, z_{2k}) \equiv \frac{1}{(\prod_{a=0}^{k-1}C_{1,a})^2}\avg{\psi(z_1)\cdots \psi(z_{2k})}\prod_{i<j}(z_i-z_j)^{2r/k}
\end{equation}
Even this limited version of the wavefunction requires special machinery to compute. To our knowledge, the only $\mathbb{Z}_k^{(r)}$-algebras with fully computed wavefunctions of arbitrary size are the following: Read-Rezayi states \cite{Read-Rezayi, cappelli} ($k$ arbitrary, $r=1$), Haffnian ($k=r=2$) \cite{Haffnian}, and correlation functions of $\mathcal{N}=1$ superconformal field theory ($k=2, r=3$) \cite{simon2008correlators}. As for the computations of $\Psi_2$ alone, in Ref. \cite{estienne2010clustering}, the principal wavefunction of $\mathbb{Z}_2^{(5)}$, $\mathbb{Z}_k^{(2)}$ with $k\leq 5$ are calculated in terms of Jack polynomials. The current paper provides additional machinery to compute principal wavefunctions by utilizing the so-called \emph{theory of invariants of binary forms}. In particular, we compute the principal $\mathbb{Z}_k^{(2)}$ wavefunction for all values of $k$ (section \ref{sec_quartic}) and the principal $\mathbb{Z}_2^{(r)}$ wavefunctions for all values of $r$ (\ref{appendix_direct}).

So far, we have not mentioned the topological nature of the quantum Hall phases of matter. A signature of quantum Hall systems is an energy gap between the ground state and the bulk excitations. Generally speaking, to ensure such a gap, the CFT describing the FQH system is expected to be \emph{unitary} and \emph{rational} \cite{read2009conformal}. As for concrete evidence behind the unitarity expectation, while outside the scope of models considered here, the Haldane-Rezayi (HR) \cite{haldane1988spin} and Gaffnian \cite{simon2007construction} are the two extensively studied examples: The CFT behind the Gaffnian is the $\mathcal{M}(5,3)$ minimal model, which has central charge $c=-\frac{3}{5}$, hence is non-unitary. As for the HR state, Ref. \cite{gurarie1997haldane} shows that it, too, is related to a non-unitary CFT. It has been argued that the HR state is the wavefunction at a phase transition \cite{read2000paired}, and thus is not the ground state of a gapped system (also see Ref. \cite{hermanns2011irrational}). In the case of the Gaffnian, the gaplessness was checked numerically (e.g., \cite{jolicoeur2014gaplessness}) and confirmed in Ref. \cite{estienne2015correlation} using Matrix Product State (MPS) formalism. In addition to non-unitarity concerns, there is also the question of \emph{rationality} of the obtained theories. As a simple non-example, consider the $\mathbb{Z}_2^{(2)}$-algebra. The $\mathbb{Z}_2^{(2)}$ wavefunction, dubbed the Haffnian, has been studied in detail by Green \cite{Haffnian} (also see Ref. \cite{hermanns2011irrational}). In this case, $\psi(z)=i\partial \phi(z)$, with $\phi(z)$ a chiral free boson. It is well-known that for each $\alpha\in \mathbb{R}$ the vertex operator $V_\alpha(z)=:\exp[i\alpha \phi(z)]:$ is a $\psi$-primary field. Hence, the spectrum is infinite and this CFT describes a gapless phase. As for the general current algebras, given ``large enough'' $k,r$, the central charge in a generic $\mathbb{Z}_k^{(r)}$-algebra is a free parameter and can assume irrational values. Given the brief discussion about the gap and rationality, we end this section by defining the scope and limitations of the current paper. It is well beyond the reach of the current paper to find what values of central charge (and possibly other free CFT data) would make the CFTs rational. In addition, the CFTs studied here include many non-unitary theories, requiring caution in the context of the FQH systems. Moreover, we lack the tools to adequately analyze the existence of a gap (or lack thereof) for the models discussed here.

\section{Motivation and Summary of Results}
The main ambition of this paper is to design a model Hamiltonian to realize the $\mathbb{Z}_k^{(r)}$ wavefunction (abbreviated WF) as \emph{the unique} ground state. More descriptively, we are searching for a multi-particle pseudopotential Hamiltonian \cite{Simon_Pseudopotentials} for which a given $\mathbb{Z}_k^{(r)}$-WF is the \emph{unique} densest zero-energy state. This demand is necessary since quantum Hall Hamiltonians on a disk/sphere must have a unique ground state. Our approach can be summarized as follows:
\begin{enumerate}
	\item We \emph{claim} that the entirety of a $\mathbb{Z}_k^{(r)}$-algebra is encoded inside a \emph{single} (explicit) polynomial $\chi$.
	\item We construct a pseudopotential Hamiltonian $H_\chi$ (in a natural way) parametrized by this proxy $\chi$.
\end{enumerate}
The idea is that if both the current algebra and the model Hamiltonian describe the same quantum Hall phase of matter, then there should be a one-to-one correspondence between them. Accordingly, with $\chi$ as an intermediate, the respective model Hamiltonian would reflect the entirety of the current algebra in this methodology.

The two key players in the current paper are the \emph{minimal polynomial} $\chi$ (the presumed proxy for a $\mathbb{Z}_k^{(r)}$-algebra) and the \emph{principal} $\mathbb{Z}_k^{(r)}$-wavefunction. The definitions of these polynomials are as follows (Notation: $w^{\times a}$ means $w$ repeated $a$ times):
\begin{align}
	\Psi_2(z_1, \cdots, z_{2k})&=
	\frac{1}{g_k^2} \avg{\psi(z_1)\cdots \psi(z_{2k})}\prod_{i<j}(z_i-z_j)^{\frac{1}{\nu}}\\
	\chi(z_1, \cdots, z_{k+1})
	&=\lim_{w\to \infty} \frac{\Psi_2(w^{\times k-1}, z_1, \cdots, z_{k+1})}{w^{2r(k-1)}}
	= \frac{1}{g_k}
	\avg{\psi^+(\infty)\psi(z_1)\cdots\psi(z_{k+1})}\prod_{i<j}(z_i-z_j)^{\frac{1}{\nu}}
\end{align}
where, $g_0=1$, and for $1\leq a\leq k$ we have  $g_a= \prod_{j=0}^{a-1}C_{1,j}$. The minimal polynomial $\chi$ is a homogeneous, translational invariant, symmetric polynomial in $k+1$ variables and degree $2r$. Such a polynomial is called a $(k+1,2r)$ \emph{semi-invariant} and their linear space is denoted by $\mathcal{T}_{k+1}^{2r}$. Additionally, due to the particular normalization we choose for the wavefunctions, the minimal polynomials \emph{always} have a normalization $\chi(1,0^{\times k})=1$.

To present the Hamiltonian $H_\chi$, let us review a few concepts. We use the notation $\mathscr{H}_{k+1}^{2r-1}$ for the projection operator that projects out any state where any cluster of $k + 1$ particles has relative angular momentum $<2r$. Traditionally, this is the most prominent model Hamiltonian used in the literature. For $k=1$ (and $r$ arbitrary), this leads to Haldane's (gapped) Hamiltonian \cite{Haldane_Sphere_Pseudo} for Laughlin $2r$-state. For $r=1$ (and $k$ arbitrary), one recovers the Read-Rezayi (gapped) Hamiltonian \cite{Read-Rezayi}. Unfortunately, the fact that $\dim \mathcal{T}_{k+1}^{2r}>1$ for almost all $(k,r)$ pairs is a big obstacle for $\mathscr{H}_{k+1}^{2r-1}$ admitting a unique densest zero-energy state. We would like to modify $\mathscr{H}_{k+1}^{2r-1}$ in a natural way so that $\dim \mathcal{T}_{k+1}^{2r}>1$ is no longer relevant to uniqueness of ground states. The projection Hamiltonian $\mathscr{H}_{k+1}^{2r-1}$ is blind to the fine details of the projection to the sector where $(k+1)$-cluster have \emph{exactly} relative angular momentum $2r$. We assert that those fine details matter and look for a modification that includes them. The guiding observation is the following local property: denoting by $\mathcal{P}_{k+1}^{2r}(z_1, \cdots, z_{k+1})$ the projection to the sector where variables $z_1, \cdots, z_{k+1}$ have relative angular momentum equal to $2r$, we show in this paper that  $\mathbb{Z}_k^{(r)}$-WFs satisfy (Notation: $\widehat{z}=(z_1+\cdots+z_{k+1})/(k+1)$ is the center-of-mass)
\begin{equation}
	\mathcal{P}_{k+1}^{2r}(z_1, \cdots, z_{k+1})\Psi_n =\chi(z_1, \cdots, z_{k+1})\prod_{i>k+1}\left(\widehat{z}-z_i\right)^{2r}\Psi_{n-1}\left(\widehat{z}, z_{k+2}, \cdots, z_{nk}\right)
\end{equation}
The function $\chi$ is the minimal polynomial and is independent of the size $n$. This local property is called \emph{separability}. Utilizing separability, we propose the following pseudopotential Hamiltonian
\begin{equation}
	H_\chi = \mathscr{H}_{k+1}^{2r-1}+ V
	\sum_{i_1<i_2\cdots<i_{k+1}}
	\mathcal{P}_{\chi^\perp}\mathcal{P}_{k+1}^{2r}(z_{i_1}, \cdots, z_{i_{k+1}}) , \qquad (V>0)
\end{equation}
Treating $\chi$ as a vector in $\mathcal{T}_{k+1}^{2r}$, the operator $\mathcal{P}_{\chi^\perp}$ projects to the hyperplane in $\mathcal{T}_{k+1}^{2r}$ with normal vector $\chi$. Due to construction, among $\mathbb{Z}_k^{(r)}$-WFs, only those with minimal polynomial $\chi$ are the densest zero-energy states of $H_\chi$. While this modification resolves the $\dim \mathcal{T}_{k+1}^{2r}>1$ issue, we still need to study if $H_\chi$ has a unique ground state. We want to clearly state that we have no proof that $H_\chi$ possesses a unique densest zero-energy state. Instead, we shall provide indirect evidence supporting the idea that $H_\chi$ Hamiltonians possess a unique ground state.

We base our argument regarding the uniqueness of the ground state on a triplet of postulates. The following diagram summarizes these postulates:
\[
\chi\xrightarrow{\textsc{Fix}} \Psi_2\xrightarrow{\textsc{Fix}} \mathbb{Z}_k^{(r)}\text{algebra}\xrightarrow{\textsc{Fix}} \Psi_n
\]
We read this diagram as follows: (1) knowing $\chi$ (together with the fact that $\Psi_1=1$ and its consequences) is enough to completely determine $\Psi_2$; (2) Determining $\Psi_2$ is enough to fix all of the free parameters of the $\mathbb{Z}_k^{(r)}$-algebra that gives birth to $\Psi_2$; (3) No two \emph{distinct} $\mathbb{Z}_k^{(r)}$-algebras can give birth to \emph{identical} wavefunctions. Much of this paper is devoted to the justification (at least for $r=1,2$) of $\Psi_2\to \mathbb{Z}_k^{(r)}\text {algebra}$ and $\chi\to \Psi_2$. As $\mathbb{Z}_k^{(r)}\text {algebra}\to \Psi_n$ (for all $n$) is the least controversial yet hardest to check, we have postponed its study until future work.

Let us illustrate this entire program with a well-known, familiar, and relatively simple example: correlation functions in $\mathcal{N}=1$ superconformal field theory (SCFT). Recall that the following operator product expansions (OPEs) describe the $\mathcal{N}=1$ SCFT, aka $\mathcal{N}=1$ super Virasoro algebra:
\begin{align*}
	L(z)L(w)&=\frac{\frac{1}{2}c}{(z-w)^4}+\frac{2L(w)}{(z-w)^2}+
	\frac{\partial L(w)}{z-w}+\mathrm{regular}\\
	L(z)\psi(w)&=\frac{\frac{3}{2}\psi(w)}{(z-w)^2}+
	\frac{\partial \psi(w)}{z-w}+\mathrm{regular}\\
	\psi(z)\psi(w) &= \frac{1}{(z-w)^{3}} + \frac{3}{c}\frac{L(w)}{z-w}+\mathrm{regular}
\end{align*}
Here, $c$ is the central charge and can take any real value; the first OPE declares $L$ as the energy-momentum tensor, the second states that $\psi$ has scaling dimension $h=3/2$, and the last describes how $\psi$ fuses with itself. A quick check shows that this theory matches the definition of $\mathbb{Z}_2^{(3)}$-algebras exactly. While we do not often have the luxury of being able to compute wavefunctions of arbitrary size, fortunately, Simon \cite{simon2008correlators} has calculated the $\mathbb{Z}_2^{(3)}$-WFs $\Psi_A$ for an arbitrary number of particles $N=2n$ (Notation: $z_{ij}=z_i-z_j$):
\begin{align*}
	f_A(z_1, z_2, w_1, w_2)&\equiv A(z_1-w_1)^3(z_1-w_2)^3(z_2-w_1)^3(z_2-w_2)^3+
	(z_1-w_1)^4(z_1-w_2)^2(z_2-w_1)^2(z_2-w_2)^4\\
	\Psi_{A} (z_1, \cdots, z_{2n}) &\propto \mathscr{S} \prod_{1\leq r<s\leq n}^n f_{A} (z_{2r-1}, z_{2r}, z_{2s-1}, z_{2s})
\end{align*}
where $\mathscr{S}$ stands for symmetrization and $A=c/3-1$ is a free parameter. Before proceeding with the analysis, we should mention that, given the confines of this example, our methodology partially aligns with the arguments presented in Ref. \cite{jackson2013entanglement}, section II.A. To expand on the content of Ref. \cite{jackson2013entanglement}, let us compute the minimal polynomial:
\[
\chi_A(z_1, z_2, z_3) = \frac{1}{2(A+1)}\mathscr{S}[A(z_1-z_3)^3 (z_2-z_3)^3+(z_1-z_3)^4 (z_2-z_3)^2]
\]
Note that $\chi_1=\mathscr{S}[(z_1-z_3)^3 (z_2-z_3)^3]$ and $\chi_2=\mathscr{S}[(z_1-z_3)^4 (z_2-z_3)^2]$ constitute a basis for $\mathcal{T}_3^6$. Therefore, the knowledge of the value $A$ is entirely the same as knowing the minimal polynomial. In other words, by fixing $\chi_A$, we determine $A$, which fixes the CFT and all of the $\Psi_n$ wavefunctions. The Hamiltonian $H_{\chi_A}$, which Ref. \cite{jackson2013entanglement} also hints at, is then really a one-parameter family of Hamiltonians (depending on $A$), which picks $\Psi_A$ as the densest zero-energy state.

If $\chi\to \Psi_2\to \mathbb{Z}_k^{(r)}\text {algebra}\to\Psi_n$ is indeed true, then $\chi$ acts as a proxy for the $\mathbb{Z}_k^{(r)}$-algebra in the Hamiltonian $H_\chi$. In other words, no two current algebras will have the same minimal polynomial. Thus, \emph{among wavefunctions descending from a $\mathbb{Z}_k^{(r)}$-algebra}, the model Hamiltonian $H_\chi$ has exactly one zero-energy state. Consequently, unless $H_\chi$ has a densest zero-energy state that is not a $\mathbb{Z}_k^{(r)}$-WF for certain choice of free parameters, the Hamiltonian $H_\chi$ has a unique ground state. Whether such a non-CFT ground state exists is beyond the scope of this paper. Regarding the gap, we make another \emph{guess}: ``If a $\mathbb{Z}_k^{(r)}$-algebra is unitary and has a finite spectrum, and $\chi$ is a proxy for such a current algebra, then $H_\chi$ is gapped.''

\subsection{$\Psi_1=1$ and Evidence for $(\Psi_1,)\Psi_2\to \mathbb{Z}_k^{(r)}\text {algebra}$}
Upon fixing a $\mathbb{Z}_k^{(r)}$-algebra, one can construct its wavefunctions in $nk$ variables ($n\geq 1$ arbitrary integer). We denote this wavefunction by $\Psi_n$ and often work with the sequence $\gv{\Psi}=(\Psi_1, \Psi_2, \cdots, \Psi_n, \cdots)$.
Quite generally, if no two distinct $\mathbb{Z}_k^{(r)}$-algebras can lead to an identical wavefunction sequence, the first ``few'' elements in this sequence should contain all of the classifying information about the current algebra. To expand on this comment, let $\alpha$ be some free parameter of the CFT. If changing $\alpha$ does not alter any of the $\Psi_n$, we violate the correspondence between current algebras and wavefunctions. Thus, there exist a smallest $n_\alpha$ such that $\Psi_{n_\alpha}$ is a continuous non-constant function of $\alpha$. Now since the number of free parameters is finite, there exists some $l$ such that determining $(\Psi_1, \Psi_2, \cdots, \Psi_l)$ would completely fix all of the parameters of the $\mathbb{Z}_k^{(r)}$-algebra. Our postulate is now stating that the first \emph{two}, i.e. $\Psi_1, \Psi_2$, already fix the current algebra.

As we will prove in subsection \ref{sec_structure_polynomial}, $\Psi_1$, i.e. the first wavefunction, is \emph{trivial}:
\begin{equation}
	1=\frac{1}{g_k}\avg{\psi(z_1)\cdots\psi(z_k)}\prod_{i<j}(z_i-z_j)^{\frac{1}{\nu}}=
	\Psi_1(z_1, \cdots, z_k)
\end{equation}
Despite $\Psi_1$ being trivial as a polynomial, as a correlation function, the above statement is highly non-trivial. As we will show, the relation $\Psi_1=1$ leads to a constraint on the structure constants:
\begin{equation}
	C_{a,b} = \frac{g_{a+b}}{g_ag_b} = \prod_{j=0}^{a-1} \frac{C_{1,j+b}}{C_{1,j}}
\end{equation}
However, beyond this constraint, the remaining CFT Data is carried by the so-called \emph{principal wavefunction} $\Psi_2$ (or its specialization $\chi$). The justification of $\chi \to \mathbb{Z}_k^{(r)}\text {algebra}$ (this is slightly more convenient than $\Psi_2\to \mathbb{Z}_k^{(r)}\text {algebra}$) is thus reduced to studying how the CFT data -- e.g. structure constants and central charge, etc. -- can be explicitly read off from the semi-invariant $\chi$. We prove that since $\chi\in \mathcal{T}_{k+1}^{2r}$ (i.e. due to the structure of $\chi$, not its specifics), together with $\chi(1^{\times a+1},0^{\times k-a})=C_{1,a}^2$ and $C_{a,b}=g_{a+b}/g_ag_b$, the structure constants $C_{a,b}$ with $a+b\leq k$ are as follows ($\Gamma$ is the gamma function)
\begin{equation}
	C_{a,b}^2 = \frac{(a+b)!(k-a)!(k-b)!}{k!a!b!(k-a-b)!}
	\prod_{p=1}^{r-1}\frac{\Gamma(t_p+a+b)\Gamma(t_p+k-a)\Gamma(t_p+k-b)\Gamma(t_p)}{\Gamma(t_p+k-a-b)\Gamma(t_p+a)\Gamma(t_p+b)\Gamma(t_p+k)}
	\label{eq:intro_Cab}
\end{equation}
Here, $t_1, t_2, \cdots, t_{r-1}$ are \emph{free} parameters of the CFT (when $k+1>2r$; when $k+1\leq 2r$, some of $t_i$ are redundant). Complementing this observation with charge conjugation symmetry $C_{k-a, k-b}=C_{a,b}$, all structure constants are acquired. Moving on to the central charge, using the existing techniques in the literature \cite{bernevig2009central,estienne2010clustering, wen1994chiral}, we can read off $c$ from a different specialization of $\chi$:
\begin{equation}
	\xi(x):=\chi(1,x, 0^{\times k-1}) = 1 - \frac{x}{\nu} + \left(\frac{\frac{1}{\nu}(\frac{1}{\nu}-1)}{2}+\frac{2h^2}{c}\right)x^2+O(x^3)
\end{equation}
i.e. $c$ can be found from the coefficient of $x^2$ in the polynomial $\xi(x)$.

As a quick sanity check, consider the $\mathbb{Z}_k^{(1)}$-algebras. 
In this case, $\dim\mathcal{T}_{k+1}^{2}=1$, and the constraint $\chi(1,0^{\times k})=1$ completely fixes the minimal polynomial:
\[
\chi(z_1, \cdots, z_{k+1})=\frac{1}{k}\sum_{i<j}(z_i-z_j)^2
\]
From this, one can find $c=2(k-1)/(k+2)$; also $C_{1,a}^2=\chi(1^{\times a+1}, 0^{\times k-a})=(a+1)(k-a)/k$. Complementing this with $C_{a,b}=g_{a+b}/g_ag_b$, the $r=1$ case of Eq. \eqref{eq:intro_Cab} is obtained. Unsurprisingly, these results match what Zamolodchikov and Fateev have reported upon discovering these parafermionic theories \cite{Zamolodchikov_Fateev_Parafermion}.

\subsection{Evidence for $(\Psi_1,)\chi\to \Psi_2$ for $r=1,2$}
In this introductory subsection, we will limit ourselves to two concrete examples: $\mathbb{Z}_2^{(2)}$ and
$\mathbb{Z}_3^{(2)}$ algebras. The current algebra $\mathbb{Z}_2^{(2)}$ describes a free boson (i.e. $\psi(z)=i\partial \phi(z)$), has $c=1$ and zero degrees of freedom. In contrast, $\mathbb{Z}_3^{(2)}$-algebras have one free parameter $t$. The central charge $c$ and the structure constant $C_{1,1}$ of the theory are given by
\[
c = 2\left(1-\frac{3}{(2t+3)(2t+1)}\right),
\qquad C^2_{1,1} = \frac{4}{3}\frac{(t+1)^2}{t(t+2)}
\]
The $\mathbb{Z}_3^{(2)}$-algebra, and its representations, are studied in depth in Ref. \cite{zamolodchikov1987representations}. Moreover, the principal wavefunction $\Psi_2$ are computed in Ref. \cite{estienne2010clustering} using Jack polynomials. Using these two examples, we will demonstrate an alternative approach to computations of this type. The principal $\mathbb{Z}_3^{(2)}$-WF, in particular, can capture some of the nuances one often encounters carrying out these computations. At the same time, it is still simple enough to discuss in an introductory and minimalistic manner. Via these examples, we illustrate how to check $\chi\to \Psi_2$ by explicitly computing both $\chi$ and $\Psi_2$, \emph{but without directly using the OPEs}.

Finding the minimal polynomial is relatively simple. When $2r=4$, we have $\dim \mathcal{T}_{2+1}^{4}=1$ and $\dim \mathcal{T}_{3+1}^{4}=2$. We are looking for $\chi\in \mathcal{T}_{k+1}^4$ satisfying $\chi(1,0^{\times k})=1$. For $k=2$, there is only one choice:
\[
\chi_2(z_1, z_2, z_3)=\frac{1}{2}\sum_{i<j}(z_i-z_j)^4
\]
In contrast, when $k=3$, we have a continuous family (parametrized by $\beta$) of semi-invariants $\chi_3$:
\[
\chi_3(z_1, z_2, z_3, z_4)=\frac{1}{3}\sum_{i<j}(z_i-z_j)^{4}+
\frac{2\beta}{3}
\big[(z_1-z_2)^2(z_3-z_4)^2
+(z_1-z_3)^2(z_2-z_4)^2
+(z_1-z_4)^2(z_2-z_3)^2
\big]
\]
Using the fact that $\chi(1,1,0,0)=C_{1,1}^2$, we then find that $\beta= 1/t(t+2)$. However, note that $\beta$ works just as well as $t$ for parameterizing the CFT. By calculating $\chi_2(1,x,0)$ and $\chi_3(1,x,0,0)$ one can rediscover the central charges that we listed above. To summarize, we wrote down the most general semi-invariant $\chi \in \mathcal{T}_{k+1}^{2r}$, applied the normalization condition $1=\chi(1,0^{\times k})$, and then identified the remaining free parameters with the CFT data through certain specializations.

The computation of $\Psi_2$ follows a similar methodology, albeit the procedure is much more involved. We begin by creating an ansatz for the principal $\mathbb{Z}_k^{(r)}$-WF $\Psi_2$. The starting point is the fact that $\Psi_2$ is conformally covariant: Let $f(z)=(az+b)/(cz+d)$ be an arbitrary M\"{o}bius map with $ad-bc\neq 0$. Then 
\[
\Psi_2(f(z_1), \cdots, f(z_{2k}))=\prod_{i=1}^{2k}(df/dz|_{z_i})^{2r/2}
\Psi_2(z_1, \cdots, z_{2k})
\]
One says that $\Psi_2$ is a uniform state on a sphere, with $2k$ particles and $2r$ flux quanta (aka $(2k,2r)$ uniform state). Now, there is an isomorphism between uniform states on the sphere and so-called \emph{binary invariants} (see \cite{olverinv} for an introduction to binary invariants). We discuss this isomorphism in section \ref{sec_binary}. It is worth mentioning that the binary invariant that corresponds to a uniform state (i.e., a wavefunction) is essentially its second quantized formulation in terms of lowest Landau level orbits. 
The critical point is that uniform states over a sphere with $2r$ flux quanta ($r=1,2,3,4$) are already classified. These mathematical endeavors began in the late 19th century and spanned several generations. While these classification efforts are made in the literature on binary invariants, hopefully, this paper can bridge the language barrier for the quantum Hall research community. Reporting the end-result, the most general $(2k,4)$ uniform state, with $k=2,3$ are
\[
\Psi^{[k=2]}_2(z_1, \cdots, z_4)=\frac{1}{2!^2 2!^2}
\mathscr{S}\left[\prod_{p=0,1}(z_{2p+1}-z_{2p+2})^4\right]
\]
and 
\[
\begin{aligned}
	\Psi^{[k=3]}_2(z_1, \cdots, z_6)&=\frac{1}{2!^3 3!^2}
	\mathscr{S}\left[\prod_{p=0,1,2}(z_{2p+1}-z_{2p+2})^4\right]
	\\
	&+\frac{\alpha}{3!3!^2}
	\mathscr{S}\left[
	\prod_{p=0,1}
	(z_{3p+1}-z_{3p+2})^2
	(z_{3p+1}-z_{3p+3})^2
	(z_{3p+2}-z_{3p+3})^2
	\right]
\end{aligned}
\]
Here, $\alpha$ is a free parameter we need to fix. It turns out that $\alpha=\beta=1/t(t+2)$. While this clean $\alpha=\beta$ relation is an artifact of the simplicity of $3\leq k<6$ (the general $k$ is more involved), the overall theme remains the same: For $k>2$, the $\mathbb{Z}_k^{(2)}$-algebras and subspace of $\mathcal{T}_{k+1}^{4}$ with $\chi(1,0^{\times k})=1$ are both one-dimensional. We can choose either $t$ or $\beta$ as the parameter. The most general $(2k,4)$ uniform state is explicitly known up to coefficients like $\alpha$. 
Thus, one must find the $\alpha$ coefficients via various specializations and relate them to the CFT data. We omit the details at this introductory stage. The sections \ref{sec_binary} and \ref{sec_about} will greatly expand on the above paragraph and generalize it appropriately.

In subsection \ref{sec_quartic}, we calculate the principal $\mathbb{Z}_k^{(2)}$ wavefunctions for an arbitrary value of $k$ and provide a relatively clean presentation for it. Similar to the abovementioned examples, the calculation relies on an ansatz for principal wavefunction that uses binary invariants. In addition, we have extensively used the graph-theoretic point of view toward FQH states that we developed elsewhere \cite{pakatchi2018quantum}. In subsections \ref{sec_sextic} and \ref{sec_octavic}, we have carried out the procedure for $r=3,4$ as well. However, we are not as successful in those cases as in $r=2$. Some of the coefficients in the ansatz remain undetermined in those cases. Determining these unknown coefficients will require further technical innovations and possibly a better understanding of $\mathbb{Z}_k^{(3)}, \mathbb{Z}_k^{(4)}$-algebras. We also compute the principal $\mathbb{Z}_2^{(r)}$ wavefunctions (arbitrary $r$) in \ref{appendix_direct}, although the ansatz and method used is different from that in the main body.

\section{Uniform States on a Sphere}
\label{sec:uniform}
We begin by reviewing the notion of uniform states (over a sphere) and their relevance to a quantum Hall system. Consider a gas of
$N$ spin-polarized bosonic/fermionic electrons over a sphere with radius $R=1/2$. Due to a magnetic monopole at the center, $N_\phi$ flux quanta is passing through the sphere. This setup is sometimes called the Haldane sphere \cite{Haldane_Sphere_Pseudo}. We are interested in describing the states in the lowest Landau level (LLL). Using stereographic projection (see Fig. \ref{streo}), the location of a particle on the sphere is parametrized by a complex coordinate $z$. We take the inner product of the Hilbert space to be
\begin{equation}
	\braket{\phi_2}{\phi_1} = \int \frac{dx dy}{(1+z\bar{z})^{2+N_\phi}}\phi_1(z)\overline{\phi_2(z)}
\end{equation}
The LLL is $(N_\phi+1)$-fold degenerate with an orthonormal basis $\ket{m}$, $0\leq m\leq N_\phi$, where $\braket{z}{m}\propto z^m$.
\begin{figure}
	\centering
	\includegraphics[scale=.5]{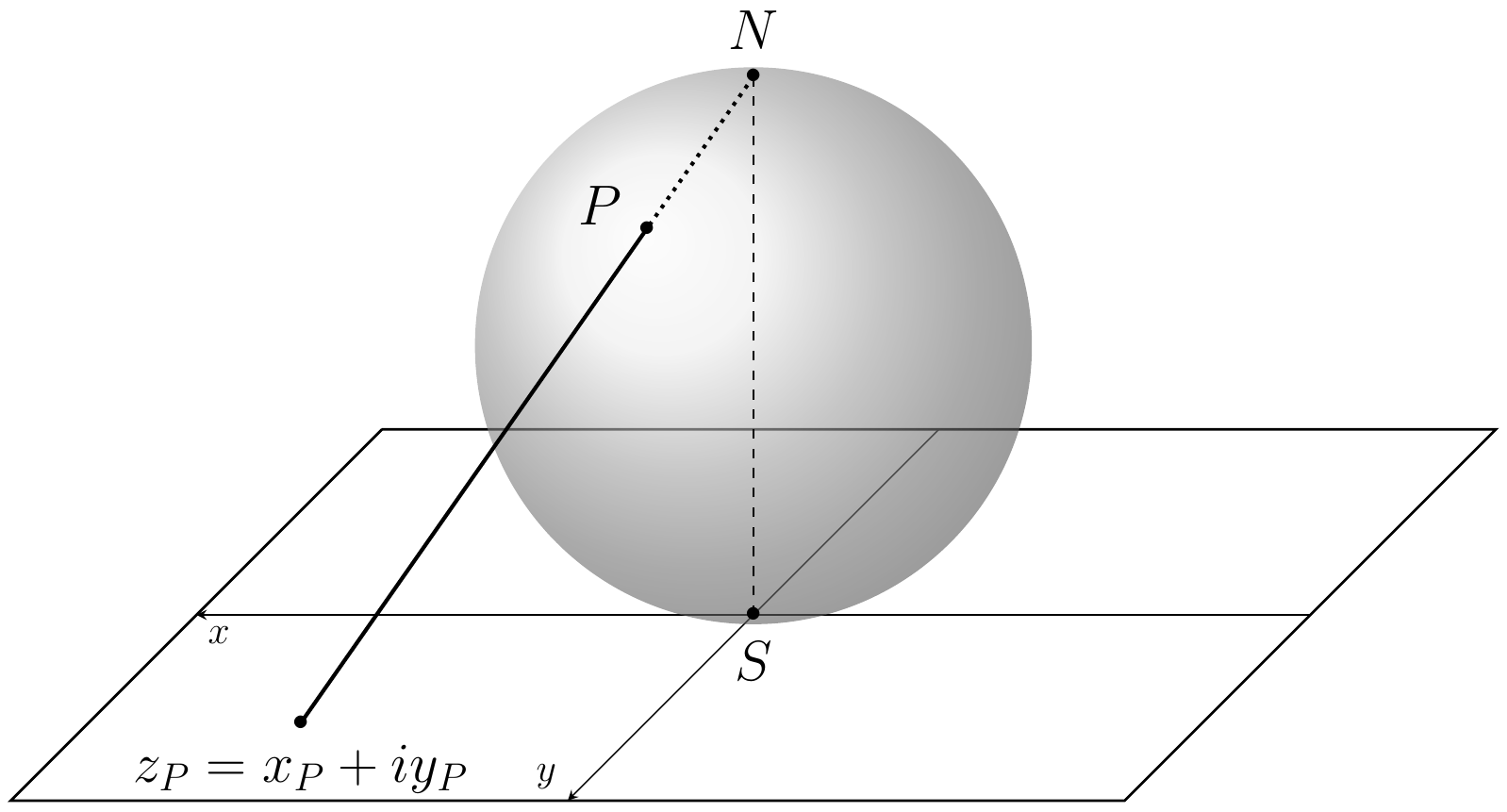}
	\caption{Stereographic Projection: Let $S,N$ be the south and north pole. Make the complex plane tangential to $S$, oriented so that $\hat{x}\times \hat{y}$ points inward. The intersection of the ray $NP$ with the complex plane is the \emph{complex coordinate} $z_P$ (near zero) of the point $P$.}
	\label{streo}
\end{figure}
Extending the inner product to the $N$-body Hilbert space, wavefunctions in the LLL are polynomials $P(z_1, \cdots, z_N)$. In this article, we focus on bosonic systems (Any fermionic wavefunction is the product of a bosonic wavefunction and a Laughlin-Jastrow factor $\prod_{i<j}(z_i-z_j)$). A bosonic wavefunction $P(z_1, \cdots, z_N)$ in the LLL is a symmetric polynomial where the degree of no variable $z_i$  can exceed $N_\phi$. We assume that the quantum Hall wavefunctions are all confined to the LLL.

One of the characteristics of a fractional quantum Hall ground state (i.e., in the absence of local bulk excitations) is that the one-particle density is uniform over the sphere. The uniformity is equivalent to the wavefunction $P$ being $SO(3)$ invariant; i.e., $L^2P=0$ in the induced representation of $SO(3)$ in the presence of $N_\phi$ flux quanta. More concretely, 
\begin{subequations}
	\begin{align}
		L_-P&:=
		\sum_i \partial_iP
		=0\\
		L_zP&:= \sum_i \left(-\frac{N_\phi}{2}+z_i\partial_i\right)P = 0\\
		L_+P &:= \left(\sum_{i}N_\phi z_i-z_i^2\partial_i\right)P = 0
	\end{align}
\end{subequations}
These equations are a variant of the Ward identities in conformal field theory and can be reformulated in terms conformal transformations: For any arbitrary M\"{o}bius map $f(x):=(ax+b)/(cx+d)$ (with $ad-bc\neq 0$) the wavefunction $P$ transform via
\begin{equation}
	P(f(z_1), \cdots, f(z_N))=\prod_{i=1}^{N}\left(\frac{df}{dx}\right)_{x=z_i}^{\frac{1}{2}N_\phi}P(z_1, \cdots, z_N)
\end{equation}
We say $P$ is an \emph{$(N,N_\phi)$ uniform state}. Furthermore, at the onset of the quantum Hall effect, $N, N_\phi$ are related through the constraint
$N_\phi = N\nu^{-1}-\mathcal{S}$; here $\nu$ is the filling fraction (a rational number) and $\mathcal{S}$ is the shift (an integer).

\section{$\mathbb{Z}_k$ Parafermionic Conformal Field Theories}
\label{sec:para}
This paper is concerned with model FQH ground states that descend from $\mathbb{Z}^{(r)}_k$ parafermionic current algebras. We will define the relevant conformal field theories (CFT) in this section.

\subsection{Zamolodchikov-Fateev Parafermions}
A $\mathbb{Z}_k$ parafermionic CFT, with some chiral algebra $\mathcal{W}$ (see \ref{appendix_chiral} for a review), consists of $k$ holomorphic  $\mathcal{W}$-primary fields: $1, \psi(z), \psi_2(z), \cdots, \psi_{k-1}(z)$. These fields make a representation of the cyclic group $\mathbb{Z}_k$ via the fusion rules $\psi_a\star \psi_b=\psi_{a+b}$ (addition done in $\mathbb{Z}_k$, $\psi_0\equiv 1$, and $\psi_1\equiv \psi$). The transformation $\psi_a\mapsto \psi_a^+\equiv \psi_{k-a}$ is the \emph{charge conjugation}. We denote the scaling dimension of $\psi_a$ by $h_a$. The relative scaling dimension is given by
\begin{equation}
	\gamma(a,b)=h_a+h_b-h_{a+b}, \qquad \gamma\equiv \gamma(1,1)
\end{equation}
The field space of the chiral algebra $\mathcal{W}$ can be graded as $\mathcal{W}=\bigoplus_{\Delta\in \mathbb{N}}\mathcal{W}_\Delta$, where $\mathcal{W}_\Delta$ is the subspace of chiral field with spin/scaling dimension $\Delta$. We assume that $\dim \mathcal{W}_1=0$ and $\mathcal{W}_2=\mathrm{span}(L)$, where $L$ is the energy-momentum tensor. The operator product expansions (OPEs) between the parafermions $\psi_a$'s are given by
\begin{subequations}
	\label{OPEs_eq}
	\begin{align}
		\psi_a(z)\psi_b(w)&=\frac{C_{a,b}}{(z-w)^{\gamma(a,b)}}\psi_{a+b}(w)+\cdots, \qquad (a+b\neq 0)\\
		\psi_a(z)\psi_a^+(w)&=\frac{1}{(z-w)^{2h_a}}\left(1+\frac{2h_a}{c}L(w)+\cdots\right)
	\end{align}
\end{subequations}
where $C_{a,b}$'s are the structure constants, and $c$ is the central charge. The fields in the ellipsis have scaling dimensions that differ from the leading field by a non-zero integer. We require the CFT to be symmetric under charge conjugation; i.e., if we replace all of $\psi_a$'s with their charge conjugation $\psi_a^+$,  then no correlation function changes. An associative solution to this operator product algebra (OPA) is called a \emph{$\mathbb{Z}_k$ parafermionic algebra}. The fields $\psi_a$'s are the parafermions introduced by Zamolodchikov-Fateev in Ref. \cite{Zamolodchikov_Fateev_Parafermion}. For these parafermions the \emph{(semi-)locality condition} \cite{noyvert, bakalov} is given by
\begin{equation}
	\label{locality_eq}
	\psi_a(z)\psi_b(w)(z-w)^{\gamma(a,b)}=\mu_{a,b}\psi_b(w)\psi_a(z)(w-z)^{\gamma(a,b)}
\end{equation}
where $0\neq \mu_{a,b}\in \mathbb{C}$ is called the \emph{commutation factor}. For the reader's convenience, we have rederived \cite{Pattern_of_Zeros} the most general solution for scaling dimensions consistent with associativity in  \ref{appendix_pattern}. 

\subsection{$\mathbb{Z}_k^{(r)}$-Algebras \& $\mathbb{Z}_k^{(r)}$-Wavefunctions}
In this paper, we will utilize and study only a sub-class of $\mathbb{Z}_k$ parafermionic algebras. The associative solutions of interest are those with the following scaling dimensions:
\begin{equation}
	\label{scaling_Zkr_eq}
	h_a = \frac{a(k-a)}{2}\gamma, \qquad \gamma = \frac{2r}{k}, \qquad (r\in \mathbb{Z}_{+})
\end{equation}
We call any such associative solution a \emph{$\mathbb{Z}_k^{(r)}$ (current) algebra}. For these current algebras, the commutation factors are trivial: For any pair $a,b$, we have
\begin{equation}
	\psi_a(z)\psi_b(w)(z-w)^{ab\gamma}=\psi_b(w)\psi_a(z)(w-z)^{ab\gamma}
\end{equation}
Alternatively, since $D(a,b)=ab\gamma - \gamma(a,b)=S_{a+b}-S_a-S_b$ is an even integer for $\mathbb{Z}_k^{(r)}$-algebras (See \ref{appendix_pattern} and Eq. \eqref{eq_pattern_Zkr}), we may also write
\begin{equation}
	\label{eq_semilocality}
	\psi_a(z)\psi_b(w)(z-w)^{\gamma(a,b)}=\psi_b(w)\psi_a(z)(w-z)^{\gamma(a,b)}
\end{equation}
Moving on to the correlation functions, corresponding to a $\mathbb{Z}_k^{(r)}$-algebra we construct an infinite sequence of wavefunctions $\gv{\Psi}=(\Psi_1, \Psi_2, \cdots, \Psi_n, \cdots)$ as follows:
\begin{equation}
	\label{krwavefunciton_eq}
	\Psi_n(z_1, \cdots, z_{nk})=\frac{1}{\prod_{j=0}^{k-1}C_{1,j}^n}\avg{\psi(z_1)\psi(z_2)\cdots\psi(z_{nk})}\prod_{i<j}(z_i-z_j)^{\gamma}
\end{equation}
Due to the semi-locality condition, the functions $\Psi_n$ are symmetric polynomials. The holomorphic Ward identities further show that $\Psi_n$ is an $(N,N_\phi)$ uniform states with $N=nk$ and $N_\phi=(n-1)r$
(i.e. uniform state over sphere with $N$ particles and $N_\phi$ flux quanta). The filling fraction $\nu$ and the shift $\mathcal{S}$ are respectively
$\nu=\gamma^{-1}=k/2r$ and $\mathcal{S}=2r$. Finally, note that \cite{estienne2010clustering}
\begin{equation}
	\begin{aligned}
		\Psi_{n+1}(w^{\times k}, z_1, \cdots, z_{nk})&\equiv
		\lim_{w_1\to w}\lim_{w_2\to w}\cdots 
		\lim_{w_k\to w}\Psi^{(k,r)}_{n+1}(w_1, w_2, \cdots, w_k, z_1, \cdots, z_{nk})\\
		& = \prod_{i=1}^{nk}(z_i-w)^{2r}\Psi^{(k,r)}_n(z_1, \cdots, z_{nk})
	\end{aligned}
	\label{Zkr_clustering}
\end{equation}
One says the sequence $\gv{\Psi}=(\Psi_n)_{n\geq 1}$ is \emph{$(k,2r)$-clustering} (or $k$-clustering of degree $2r$).

Throughout the paper, the wavefunction  $\Psi_n$ defined in Eq. \eqref{krwavefunciton_eq} is called a \emph{$\mathbb{Z}_k^{(r)}$-wavefunction (WF)} of size $n$.
The $\mathbb{Z}_k^{(r)}$-WFs of size $n=1,2$ are called the \emph{trivial} and \emph{principal} $\mathbb{Z}_k^{(r)}$-wavefunctions respectively. Moreover, 
a reference to symbol $\Psi$ (without any $n$ index) will automatically mean $n=2$; i.e. $\Psi$ is the principal $\mathbb{Z}_k^{(r)}$-wavefunction ($2k$ variables).

\section{$\mathbb{Z}_k^{(r)}$-Wavefunctions as Model FQH Ground States}
The pseudopotential formalism is easier to work with in the plane/disk geometry. Thus, let us begin with some comments on FQH states in plane geometry. The states with $N$ `bosonic electrons' in the LLL are described by
\begin{equation}
	\Phi(z_1, \cdots, z_N) = P(z_1, \cdots, z_N)\exp\left[-\sum_i |z_i|^2/4\ell_B^2\right]
\end{equation}
where $P$ is a symmetric polynomial and $B(2\pi \ell_B^2)=h/e=$ magnetic flux quantum. Similar to the sphere, we can absorb the geometric measure (the Gaussian) into the definition of the inner product and consider $P$ itself as the wavefunction. If $P$ is an $(N, N_\phi)$ uniform state over a sphere, and if we think of the norm of $\Phi$ (in plane geometry) as a droplet, the wetting area of $|\Phi|^2$ would be roughly $N_\phi (2\pi \ell_B^2)$. In other words, the lower $N_\phi$ is, the higher the density.

In plane geometry, projected to the lowest Landau level, the effective Hamiltonians of fractional quantum Hall systems are pure multi-electron interaction. Using $b$ as the number of electrons involved in the interaction, in plane geometry, the effective interaction $V(z_1, \cdots, z_b)$ is both translational and rotational invariant. To get from these interactions to pseudopotentials, let us introduce some terminology. Some of the core players in this paper are the so-called \emph{semi-invariants}. A $(b,m)$ semi-invariant is a polynomial with $b$ variables, that is symmetric, translational invariant, and homogeneous of degree $m$. We denote the linear space of $(b,m)$ semi-invariants  by $\mathcal{T}_{b}^m$. If $\ket{m,q}$ is an orthonormal basis for $\mathcal{T}_{b}^m$ ($1\leq q\leq \dim \mathcal{T}_b^m$), the $b$-body interaction Hamiltonian is generally of the form  \cite{Simon_projection}:
\begin{equation}
	H = \sum_{i_1<\cdots<i_b}\:
	\sum_{m\geq 0} \:
	\sum_{q,q'=1}^{\dim \mathcal{T}_b^{m}}
	\ket{m,q;z_{i_1},\cdots, z_{i_b}}
	V^{[m]}_{q, q'}
	\bra{m, q';z_{i_1},\cdots, z_{i_b}}
	\label{hamiltonian_sp_eq}
\end{equation}
The matrices $V^{[m]}$ are called the \emph{pseudopotentials} and should be positive semi-definite (so that the ground state has zero energy).
The \emph{interaction number} $b$, together with the pseudopotentials $V^{[m]}$ are the data of the Hamiltonian. A side-effect of the plane geometry is that it is infinite, and thus $N_\phi$ (the max degree of a variable) for a zero-energy state can be arbitrarily large. To circumvent this issue and also simulate the more realistic geometry that is of a disk, it is customary to call the \emph{densest} (smallest $N_\phi$) zero-energy state of $H$ as \emph{the} ground state. We want to emphasize that an actual FQH Hamiltonian has to have a \emph{unique} ground state in disk/sphere topology.

At this stage, the best option to study FQH states from the Hamiltonian point of view is to design \emph{models}. Thus, one must make a particular choice for the interaction number $b$ and the pseudopotentials $V^{[m]}$. A realistic quantum Hall Hamiltonian is gapped (incompressibility), has a unique ground state, and ground state would be uniform if put on a sphere. Thus, if a model pseudopotential is reasonable in the above sense, its densest zero-energy state is called a \emph{model FQH ground state}. The simplest of model Hamiltonians is the \emph{Haldane Hamiltonian} \cite{Haldane_Sphere_Pseudo}:
\begin{equation}
	\mathscr{H}_{2}^{2r-1}=
	\sum_{m=0}^{2r-1}V_m
	\sum_{i< j}
	\mathcal{P}_{2}^m(z_{i}, z_j), \qquad (V_m>0)
\end{equation}
Here and onwards, for a general $b$, $\mathcal{P}_b^m(z_I) \equiv \sum_{q}\ket{m,q;z_I}\bra{m,q;z_I}$ is the projection operator to the sector where a $b$-cluster of electrons with coordinates $z_I\equiv z_{i_1}, \cdots, z_{i_{b}}$ has a relative angular momentum $m$. The model ground state of $\mathscr{H}_{2}^{2r-1}$ is the $2r$-Laughlin state. The natural generalization \cite{Gaffnian, Simon_Pseudopotentials} of Haldane's construction would be
\begin{equation}
	\mathscr{H}_{k+1}^{r-1}=
	\sum_{m=0}^{r-1}V_m
	\sum_{i_1< \cdots<i_{k+1}}
	\mathcal{P}_{k+1}^m(z_I), \qquad (V_m>0)
\end{equation}
We call these the \emph{projection Hamiltonians}. For specific values of $k,r$, the model Hamiltonian $\mathscr{H}_{k+1}^{r-1}$ works exceptionally well and has a unique ground state. Other than Laughlin, three other examples are: Read-Rezayi ($\nu=k/r=k/2$) \cite{Read-Rezayi}, Gaffnian ($\nu=k/r=2/3$) \cite{Gaffnian}, and Haffnian ($\nu=k/r=2/4$) \cite{Haffnian, Simon_projection}. However, for the majority of $(k,r)$ values, we have $\dim \mathcal{T}_{k+1}^{r-1}>1$ and thus, these Hamiltonians often do not have a unique densest zero-energy state. This section aims to construct a family of model Hamiltonians that have a better chance of realizing $\mathbb{Z}_k^{(r)}$-wavefunctions as a \emph{unique} ground state (the gap should be studied separately). The construction is based on a novel local property called \emph{separability}.

\subsection{Separability}
\label{sec_sep}
Let us first establish the required terminology. Throughout the discussion, the quantities $k,r>0$ are two fixed integers. The notation $a=pk+q$ is reserved, where $p\geq 0$ and $0<q\leq k$. The functions $\mathcal{N}(n)=kn$ and $\mathcal{N}_\phi(n)=2r(n-1)$ signify the number of particles and number of flux quanta at ``size $n$''. The multi-index $I$ stands for the set $\{i_1, \cdots, i_{a}\}$ (assuming $i_1<\cdots<i_a$). The complementary multi-index $\overline{I}$ is such that $I\cap \overline{I}=\emptyset$ and $I\cup \overline{I}=\{1,2,\cdots, \mathcal{N}(n)\}$. Given a multi-index $J={j_1, \cdots, j_l}$, we define $z_J={z_{j_1}, \cdots, z_{j_l}}$ and $\widehat{z}_J=(z_{j_1}+\cdots+z_{j_l})/l$ (i.e. the center-of-mass). The pattern of zeros $S=(S_1, S_2, \cdots, S_a, \cdots)$ \cite{Pattern_of_Zeros} is understood as the infinite integral sequence:
\begin{equation}
	\label{eq_pattern_Zkr}
	S_a = 2r\left[\frac{p(p-1)}{2}k + pq\right]
\end{equation}
Finally, $\gv{\Psi}$ stands for a sequence of polynomials $\gv{\Psi}=(\Psi_n)_{n\geq 1}$. In the current discussion, The polynomial do not have to be $\mathbb{Z}_k^{(r)}$-wavefunctions. However, the sequence $\gv{\Psi}$ needs to satisfy the following criteria:
\begin{enumerate}
	\item Uniformity: $\Psi_n$ is an $(\mathcal{N}(n),\mathcal{N}_\phi(n))$ uniform state.
	\item $(k,2r)$ Clustering: We have $\Psi_1=1$ and
	$
	\Psi_{n+1}(w^{\times k}, z_1, \cdots, z_{nk})=\prod_{i=1}^{nk}(w-z_i)^{2r}\Psi_n(z_1, \cdots, z_{nk})
	$.
	\item Squeezablity: For all $a$ and $m<S_a$ we have $\mathcal{P}_{a}^{m}\Psi_n=0$. Furthermore, $\mathcal{P}_{a}^{S_a}\Psi_n\neq 0$.
\end{enumerate}
Ignoring the gap issue, we may think of $\Psi_n$ as the model ground state of a hypothetical model Hamiltonian in $\mathcal{N}(n)$ particles. The third condition states that the minimal relative angular momentum that any cluster of $a$ electrons can carry in $\gv{\Psi}$ equals $S_a$. 

On top of the previous properties, we add the following:
We say $\gv{\Psi}$ is \emph{$a$-separable} if there exists a polynomial $\chi^{(a)}\in \mathcal{T}_a^{S_a}$, independent of $n$ (for all $n>p$), such that 
\begin{equation}
	\mathcal{P}_a^{S_a}(z_I)\Psi_n =\chi^{(a)}(z_1, \cdots, z_{a})\prod_{i\in \overline{I}}(z_i-\widehat{z}_I)^{2pr}\Psi_{n-p}(\widehat{z}_I^{\times q}, z_{\overline{I}})
\end{equation}
If $\chi^{(a)}$ exists, we call it the \emph{$a$-factor} of $\Psi_n$. The $(k+1)$-factor is called the \emph{minimal polynomial} of $\gv{\Psi}$ and is simply denoted as $\chi$. We say $\gv{\Psi}$ is \emph{fully separable} if it is $a$-separable for all $a$. As we will show in \S\ref{section_parafermion_OPE}, all $\mathbb{Z}_k^{(r)}$-wavefunctions are fully separable.

The $a$-factor is an element of $\mathcal{T}_a^{S_a}$; i.e. a semi-invariant. However, not every semi-invariant can be a factor. Note that if $\gv{\Psi}$ is $a$-separable, then we can find its factor $\chi^{(a)}$ from $\Psi_{p+1}$ via
\begin{equation}
	\label{eq_chi_f}
	\chi^{(a)}(z_1, \cdots, z_a)=
	\lim_{w\to \infty} w^{-2\mathcal{J}_a} \Psi_{p+1}(w^{\times k-q}, z_1, \cdots, z_a)
\end{equation}
where $2\mathcal{J}_a=a\mathcal{N}_\phi(p+1)-2S_a$. Hence, due to the clustering property, the $a$-factors need also to satisfy
\begin{equation}
	\chi^{(a)}(w_1^{\times k}, w_2^{\times k}, \cdots, w_{p}^{\times k}, 0^{\times q})=\prod_{1\leq i<j\leq p}(w_i-w_j)^{2kr}\prod_{i=1}^p w_i^{2qr}
	\label{eq_factor_constraint}
\end{equation}
We denote by $\mathcal{T}^\star_a$ the subspace of $\mathcal{T}_a^{S_a}$ satisfying \eqref{eq_factor_constraint}. Now, corresponding to any $\omega\in \mathcal{T}_a^\star$ we define a model Hamiltonian
\begin{equation}
	\label{bbody_H_eq}
	H_{\omega}=\mathscr{H}_{a}^{S_a-1}+
	V'\sum_{I}\left(1-\frac{\ket{\omega;z_I}\bra{\omega;z_I}}{\braket{\omega}{\omega}}\right)\mathcal{P}_a^{S_a}, \qquad (V'>0)
\end{equation}
If $\gv{\Psi}$ is a uniform $(k,2r)$-clustering squeezable $a$-separable sequence with $a$-factor $\omega$, then it is a densest zero-energy state of the Hamiltonian $H_\omega$ \cite{pakatchi2018quantum}. Furthermore, if $\gv{\Psi}$ is not $a$-separable, or if it is $a$-separable but with an $a$-factor different than $\omega$, then it would not be an eigenstate of $H_\omega$. As it is clear from the definition, the Hamiltonian $H_\omega$ is a modification of the projection Hamiltonian $\mathscr{H}_a^{S_a-1}$. The two cases where $a=k+1$ and $a=2k$ are of particular interest to us.  Suppose $\gv{\Psi}$ is both $(k+1)$ and $2k$-separable, with minimal polynomial $\chi$. The $2k$-factor of $\gv{\Psi}$ is nothing but its principal wavefunction $\Psi_2$. Referring to \eqref{eq_chi_f}, it is clear that $\chi$ is fixed by $\Psi_2$. In other words, while $\gv{\Psi}$ is a ground state for both $H_\chi$ and $H_{\Psi_2}$, naively, it is reasonable to think $H_{\Psi_2}$ is more restrictive than $H_\chi$.

The above observation leads to the following question:  
Assuming we are only interested in fully separable $(k,2r)$-clustering squeezable FQH ground states, should we abandon the long-standing belief that FQH Hamiltonians of $k$-clustering states are $(k+1)$-body interactions? After all, the $2k$-body Hamiltonian $H_{\Psi_2}$ looks more promising to get a \emph{unique} ground state. Alternatively, perhaps, despite the naive expectations, in actuality, the two Hamiltonians are equivalent! We cannot give definitive answers to these questions in this paper. However, by restricting our attention to $\mathbb{Z}_k^{(r)}$-WFs (i.e., the prime example of fully separable states), we would investigate whether or not the minimal polynomial $\chi$ can fix the principal wavefunction $\Psi_2$. An affirmative answer is evidence of the equivalence of $H_\chi$ and $H_{\Psi_2}$. This is the subject of sections \ref{sec_binary} and \ref{sec_about}. The evidence gets stronger if $\chi$ fixes the entire $\mathbb{Z}_k^{(r)}$-algebra. The latter requires obtaining the structure constants, central charge, and chiral weights as data encoded inside the minimal polynomial. We pursue this study in section \ref{sec_analysis}.

\subsection{Multi-Parafermion OPEs \& Full Separability of $\mathbb{Z}_k^{(r)}$-Wavefunctions}
\label{section_parafermion_OPE}
To complete the discussion, in this section, we show that $\mathbb{Z}_k^{(r)}$-WFs are indeed fully separable. Define $g_0=1$ and $g_a=\prod_{j=0}^{a-1}C_{1,j}$ for all $1\leq a\leq k$. The key quantity that we need to compute is the $a$-parafermion OPE/field $\Phi_a(z_1, \cdots, z_a)$ defined as
\begin{equation}
	\Phi_a(z_1, \cdots, z_a):= \frac{1}{g_k^p g_q}\psi(z_1)\psi(z_2)\cdots\psi(z_a)\prod_{1\leq i<j\leq a}(z_i-z_j)^{\gamma}
\end{equation}
Ordinarily, one computes such an OPE by repeatedly utilizing the two-point OPEs $\psi_a(z)\psi_b(w)$ (after the associativity of a solution is confirmed). This method is very inefficient and computationally very expensive. Here, we will provide an alternative. Due to the simple fusion rules $\psi_a\star \psi_b=\psi_{a+b}$, it is immediate that $\Phi_a$ lies in the $\mathcal{W}$-family of $\psi_{a}$. In fact, to the lowest order, we expect
\begin{equation}
	\Phi_a(z_1, \cdots, z_a)= \chi^{(a)}(z_1, \cdots, z_a)\psi_a\left(\widehat{z}_a\right)+\cdots
	\label{multipara_eq}
\end{equation}
where $\widehat{z}_a=(z_1+\cdots+z_a)/a$ is the center of mass and $\chi^{(a)}(z_1, \cdots, z_a)$ is (at this stage) just some function. We claim that:
\begin{thm}
	\label{thm_multipara}	
	The function
	$\chi^{(a)}(z_1, \cdots, z_a)$ belongs to $\mathcal{T}_a^{S_a}$.
\end{thm}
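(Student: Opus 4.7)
The plan is to derive an explicit formula expressing $\chi^{(a)}$ as the leading coefficient in $w$ of $\Psi_{p+1}(w^{\times k-q}, z_1,\ldots,z_a)$ as $w\to\infty$, and then to read off the four defining properties of $\mathcal{T}_a^{S_a}$ (polynomiality, symmetry, translation invariance, and homogeneity of degree $S_a$) from the corresponding known properties of the symmetric polynomial $\Psi_{p+1}$.

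First, I would isolate $\chi^{(a)}$ by inserting the charge-conjugate field $\psi_a^+=\psi_{k-q}$ at $w$ and sending $w\to\infty$. Since $\langle\psi_a^+(w)\psi_a(\hat z_a)\rangle = (w-\hat z_a)^{-2h_a}$ and the $\mathcal{W}$-descendants and higher primaries in the ellipsis of \eqref{multipara_eq} decay strictly faster than $w^{-2h_a}$, one has
\[
\chi^{(a)}(z_1,\ldots,z_a) \;=\; \lim_{w\to\infty} w^{2h_a}\,\bigl\langle \psi_a^+(w)\,\Phi_a(z_1,\ldots,z_a)\bigr\rangle.
\]

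Next, I would realize $\psi_a^+(w)=\psi_{k-q}(w)$ as the fusion limit of $k-q$ copies of $\psi$ coming together at $w$. Substituting and combining Vandermonde factors from both the fusion expression and the definition of $\Phi_a$, the underlying $(p+1)k$-point parafermion correlator reassembles into $\Psi_{p+1}(w^{\times k-q},z)$ divided by the cross-Jastrow factor $\prod_{j=1}^a(w-z_j)^{(k-q)\gamma}$, with the normalization constants collapsing to $C_{q,k-q} = g_k/(g_qg_{k-q})$. Using the identity $2h_a - a(k-q)\gamma = -2\mathcal{J}_a$ and the fact that the maximum degree of $w$ in $\Psi_{p+1}(w^{\times k-q},z)$ is $(k-q)\mathcal{N}_\phi(p+1) = 2\mathcal{J}_a$, the limit is finite and yields
\[
\chi^{(a)}(z_1,\ldots,z_a) \;=\; C_{q,k-q}\,\lim_{w\to\infty} w^{-2\mathcal{J}_a}\,\Psi_{p+1}(w^{\times k-q},z_1,\ldots,z_a),
\]
which (up to normalization) agrees with \eqref{eq_chi_f}. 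The four required properties then follow from this formula almost immediately: polynomiality is clear since the right-hand side extracts the coefficient of the top power $w^{2\mathcal{J}_a}$ in the polynomial $\Psi_{p+1}$; symmetry in $z_1,\ldots,z_a$ inherits from the full symmetry of $\Psi_{p+1}$ with the $w$'s held fixed; translation invariance follows by substituting $w\to w-c$ inside the limit and invoking global translation invariance of $\Psi_{p+1}$; and the degree count
\[
\deg\chi^{(a)} = \deg\Psi_{p+1} - 2\mathcal{J}_a = rkp(p+1) - 2rp(k-q) = rp\bigl[(p-1)k+2q\bigr] = S_a
\]
fixes the homogeneity.

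The main obstacle is the second step: one must carefully track the $g_a$ normalizations in $\Phi_a$ and in the fusion realization of $\psi_{k-q}(w)$, and verify that the singular pieces $\prod_{i<j\leq k-q}(w_i-w_j)^{-\gamma}$ arising from the OPE of the $k-q$ coincident parafermions are precisely canceled by the accompanying Vandermonde factor so that the $w_i\to w$ limit is finite and equals the $(p+1)$-point function. A secondary technical point is to check, using the standard expansion of a primary field in terms of its $\mathcal{W}$-descendants, that the subleading terms in \eqref{multipara_eq} really contribute to $\langle\psi_a^+(w)\Phi_a\rangle$ with fall-off strictly smaller than $w^{-2h_a}$ at infinity, so that they drop out of the limit defining $\chi^{(a)}$.
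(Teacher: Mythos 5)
Your proposal is correct and follows essentially the same route as the paper: both identify $\chi^{(a)}$ with the leading $w^{2\mathcal{J}_a}$-coefficient of $\Psi_{p+1}(w^{\times k-q},z_1,\dots,z_a)$ via the fusion $\psi^{k-q}\to\psi_{k-q}=\psi_q^+$ and then read off symmetry, translation invariance, and homogeneity from the corresponding properties of $\Psi_{p+1}$. Two small points in your favor: your direct degree count $\deg\chi^{(a)}=\deg\Psi_{p+1}-2\mathcal{J}_a=rp[(p-1)k+2q]=S_a$ replaces the paper's deferral to the appendix, and your explicit prefactor $C_{q,k-q}=g_k/(g_qg_{k-q})$ correctly tracks a normalization that the paper's Eq.~\eqref{chib_eq} silently absorbs (it equals $1$ only after $g_qg_{k-q}=g_k$ is established later from $\Psi_1=1$); since it is a nonzero constant it does not affect membership in $\mathcal{T}_a^{S_a}$.
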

\noindent
Before proving the theorem, note that in Eq. \eqref{multipara_eq}, the fields appearing in ellipsis are $\mathcal{W}$-descendants of $\psi_a$. If $\phi$ is one such descendant with scaling dimension $h_a+n$, then it will be accompanied by a translation invariant homogeneous symmetric polynomial of degree $\deg \chi^{(a)}+n$ in the OPE. We will only need the leading term of this OPE in the main body of the paper. We present an ansatz for higher order terms in \ref{appendix_OPE}.
\newproof{pot}{Proof of Theorem \ref{thm_multipara}}
\begin{pot}
	Note that
	\begin{equation}
		\label{chib_eq}
		\lim_{w\to \infty} w^{-2pr(k-q)}\Psi_{p+1}(w^{\times k-q}, z_1, \cdots, z_{a})=
		\avg{\psi_q^+(\infty)\Phi_a(z_1, \cdots,z_a)}
		=\chi^{(a)}(z_1, \cdots, z_a)
	\end{equation}
	where we have used the $a$-point OPE ansatz. [Throughout, given a field $\phi$ of scaling dimension $\Delta$, the notation $\phi(\infty)$ will mean $\lim_{w\to \infty}w^{2\Delta}\phi(w)$].
	The left-most quantity in the equality is a homogeneous translational invariant symmetric polynomial in $z_1, \cdots, z_a$. We refer the reader to \ref{appendix_pattern} for a proof of  $\deg \chi^{(a)}=S_a$ .
\end{pot}
Multi-parafermion OPEs allow us to compute the wavefunction's leading ``relative behavior'' $\Psi_n$. Note that (with $\vv{w}=w_1, \cdots, w_a$, $\vv{z}=z_1, \cdots, z_{nk-a}$ and $\widehat{w}=(w_1+\cdots+w_a)/a$):
\[
\begin{aligned}
	\Psi_n(\vv{w}, \vv{z})&=
	\frac{1}{g^{n-(p+1)}_kg_{k-q}}
	\avg{\Phi_a(\vv{w})\psi(z_1)\cdots \psi(z_{nk-a})}
	\prod_{i=1}^a\prod_{j=1}^{nk-a}(w_i-z_j)^{\gamma}
	\prod_{i<j}(z_i-z_j)^{\gamma}\\
	&=
	\frac{\chi^{(a)}(\vv{w})}{g_{n-(p+1)}^kg_{k-q}}
	\avg{\psi_q(\widehat{w})\psi(z_1)\cdots \psi(z_{nk-a})}
	\prod_{j=1}^{nk-a}(\widehat{w}-z_j)^{a\gamma}
	\prod_{i<j}(z_i-z_j)^{\gamma}+\cdots\\
	&=\chi^{(a)}(\vv{w})
	\prod_{i=1}^{nk-a}(z_i-\widehat{w})^{2pr}
	\Psi_{n-p}(\widehat{w}^{\times q}, \vv{z})+\cdots
\end{aligned}
\]
Here, the ellipsis consists of terms with relative angular momentum higher than $S_a$ (in $\vv{w}$-variables). In other words, using the notation of \S\ref{sec_sep}, $\mathcal{P}_a^{m}(z_I)\Psi_n=0$ for all $m<S_a$, and
\begin{equation}
	\label{local_eq}
	\mathcal{P}_a^{S_a}(z_I)\Psi_n=
	\chi^{(a)}(z_I)\prod_{i\in \overline{I}}(z_i-\widehat{z}_I)^{2pr}\Psi_{n-p}(\widehat{z}_I^{\times q}, z_{\bar{I}})
\end{equation}
This concludes the proof that $\mathbb{Z}_k^{(r)}$-wavefunctions are fully separable and squeezable.

\section{Polynomialities}
\label{sec_analysis}
As discussed in the motivation section, our main hypothesis is that $\chi\xrightarrow{\mathrm{fix}} \Psi_2 \xrightarrow{\mathrm{fix}} \mathbb{Z}_k^{(r)}\text{algebra}\xrightarrow{\mathrm{fix}} \gv{\Psi}$. The validity of the hypothesis, in turn, makes the $H_\chi$ Hamiltonians viable and natural as models. In this section, we will explore if and how the principal (or its specialization $\chi$) can fix various CFT data; e.g. structure constants $C_{a,b}$, central charge $c$, chiral weights $h_a^W$, etc.

\subsection{A General Observation}
As usual, let us begin by introducing a bit of terminology. Fixing some $0\leq b<k$, suppose $\phi(z)$ be a quasi-primary $\mathcal{W}$-descendant of $\psi_{b}^+(z)$. We use the notation $\beta(\phi)=b$ and write the scaling dimension of $\phi$ in the form $h_\phi=h_b+l(\phi)$. We call $l(\phi)$ the \emph{level} of $\phi$. In the study of $\mathbb{Z}_k^{(r)}$-wavefunctions, especially the principal ones, correlation functions of the following form are a common sight:
\begin{equation}
	T_\phi(z_1, \cdots, z_{k+\beta(\phi)})=\frac{1}{g_kg_{\beta(\phi)}}
	\avg{\phi(\infty)\psi(z_1)\cdots\psi(z_{k+\beta(\phi)})}\prod_{i<j}(z_i-z_j)^{\gamma}
\end{equation}
We call $T_\phi$ the principal semi-invariant corresponding to $\phi$. Various data of the $\mathbb{Z}_k^{(r)}$-algebra is encoded inside of semi-invariants of this form. We can extract these data by performing specializations. For example, a beneficial specialization would be:
\begin{equation}
	\tau_\phi(a)=T_\phi(1^{\times a+\beta(\phi)},0^{\times k-a})=
	\frac{g_{a+\beta(\phi)}g_{k-a}}{g_k g_{\beta(\phi)}}\avg{\phi(\infty)\psi_{a+\beta(\phi)}(1)\psi^+_a(0)}
\end{equation}
Our first task in this section is to prove that $\tau_\phi(a)$ is a polynomial in the (integer) cluster parameter $a$. We use the umbrella term \emph{Polynomiality} when a certain $\mathbb{Z}_k^{(r)}$ quantity parametrized by $a$ depends on it polynomially. The polynomialities coming from the special cases $\phi=\psi_b^+$ (for $b\neq 0$) and $\phi=W\in \mathcal{W}$ are studied in some detail in this section.

\begin{thm}
	\label{thm_polynomiality}
	$\tau_\phi(a)$ is a polynomial of degree $2r\beta(\phi)+l(\phi)$ with the following properties:
	\begin{enumerate}
		\item It enjoys a symmetry $\tau_\phi(k-a-\beta(\phi))=(-1)^{l(\phi)} \tau_\phi(a)$.
		\item It is divisible by $(a+\beta(\phi))(k-a)$.
		\item $\tau_\phi(0)=\delta_{l(\phi),0}$.
	\end{enumerate}
\end{thm}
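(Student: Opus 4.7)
The plan is to extract all four properties from the fact that $T_\phi$ is a symmetric, translation-invariant, homogeneous polynomial of degree $D := 2r\beta(\phi)+l(\phi)$, together with Möbius covariance of the two-point function of $\phi$. First I would check $T_\phi\in\mathcal{T}_{k+\beta(\phi)}^{D}$ by applying Theorem~\ref{thm_multipara} to $\Phi_{k+\beta(\phi)}$: the leading $\mathcal{W}$-primary in its OPE is $\psi_{\beta(\phi)}$ (since $\psi^{k+b}$ fuses to $\psi_b$) with polynomial prefactor of degree $S_{k+\beta(\phi)}=2r\beta(\phi)$, and a level-$n$ descendant has prefactor of degree $2r\beta(\phi)+n$; contracting with $\phi(\infty)$, a level-$l(\phi)$ quasi-primary, selects the level-$l(\phi)$ piece and gives the stated degree.

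For polynomiality in $a$, I would expand $T_\phi$ in power sums $p_j(z)=\sum_i z_i^j$; at the specialization every $p_j$ collapses to $m:=a+\beta(\phi)$, so $\tau_\phi(a)$ is a polynomial in $m$ (hence in $a$) of degree $\leq D$ by the weight constraint on monomials in $p_j$'s. For divisibility, translation invariance lets me re-express $T_\phi$ (which has no constant term when $D>0$) as a polynomial in the centered power sums $\tilde p_j=\sum_i(z_i-\bar z)^j$ for $j\geq 2$; a direct evaluation gives
\[
\tilde p_j\big|_{(1^{\times a+\beta(\phi)},0^{\times k-a})}
= \frac{(a+\beta(\phi))(k-a)}{(k+\beta(\phi))^j}\Bigl[(k-a)^{j-1}+(-1)^j(a+\beta(\phi))^{j-1}\Bigr],
\]
which is manifestly divisible by $(a+\beta(\phi))(k-a)$, whence so is $\tau_\phi(a)$. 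The symmetry $\tau_\phi(k-a-\beta(\phi))=(-1)^{l(\phi)}\tau_\phi(a)$ then follows from $T_\phi(1-z_1,\ldots,1-z_N)=(-1)^D T_\phi(z)$ (translation invariance plus homogeneity), since $z_i\mapsto 1-z_i$ swaps $0$ and $1$ and maps $a\leftrightarrow k-a-\beta(\phi)$, while $(-1)^D=(-1)^{l(\phi)}$ because $2r\beta(\phi)$ is even.

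For $\tau_\phi(0)$, I would split the parafermion string into two clusters and apply Theorem~\ref{thm_multipara} on each: at the specialization $(1^{\times\beta(\phi)},0^{\times k})$, the cluster OPEs collapse $\Phi_{\beta(\phi)}(1^{\times\beta(\phi)})$ to $\chi^{(\beta(\phi))}(1,\ldots,1)\psi_{\beta(\phi)}(1)=\psi_{\beta(\phi)}(1)$ and $\Phi_k(0^{\times k})$ to $\chi^{(k)}(0,\ldots,0)\cdot 1=1$, since both $\chi^{(\beta(\phi))}$ and $\chi^{(k)}$ are identically $1$ (their degrees $S_{\beta(\phi)},S_k$ vanish and they are pinned to $1$ by $\Psi_1=1$), while higher descendants have translation-invariant prefactors of positive degree that vanish at coincident arguments. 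The cross-cluster Jastrow factor equals $\prod(1-0)^\gamma=1$, so $\tau_\phi(0)=\langle\phi(\infty)\psi_{\beta(\phi)}(1)\rangle$, and Möbius covariance of two-point functions of quasi-primaries forces this to vanish unless the scaling dimensions agree, i.e.\ unless $l(\phi)=0$, in which case the standard normalization returns $1$. The main obstacle I anticipate is upgrading the bound ``$\leq D$'' to exact equality $D$: this requires a nondegeneracy statement showing the coefficient of $a^D$ in $\tau_\phi$ does not vanish, which seems to depend sensitively on the $\mathcal{W}$-descendant structure of $\phi$ and is not implied by the elementary arguments above.
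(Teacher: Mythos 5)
Your proof is correct and takes essentially the same route as the paper: both decompose $T_\phi$ in products of the centered power sums $P_n=\sum_i(z_i-\widehat{z})^n$, evaluate each at the specialization to obtain divisibility by $(a+\beta(\phi))(k-a)$ together with the reflection symmetry, and reduce $\tau_\phi(0)$ to the two-point function $\langle\phi(\infty)\psi_{\beta(\phi)}(1)\rangle$, which vanishes unless $l(\phi)=0$. The exact-degree caveat you raise at the end is real, but it is equally unaddressed in the paper's own proof, which by the same basis argument only establishes $\deg_a\tau_\phi\leq 2r\beta(\phi)+l(\phi)$ (possible cancellation of leading coefficients among the $P_\lambda$ is not ruled out there either).
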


\begin{proof}
	For simplicity, let $b=\beta(\phi)$ and $l=l(\phi)$.
	Beginning with the last condition, note that $\tau_\phi(0)= \avg{\phi(\infty)\psi_b(1)}$ which is only non-zero if $\phi=\psi_b^+$, i.e. $l=0$. Regarding the degree, the proof that
	$\deg T_\phi=2br+l$ can be found in \ref{appendix_OPE}.
	We prove the rest by constructing a convenient basis for $\mathcal{T}^{2br+l}_{k+b}$. Define
	\begin{equation}
		\label{basis_eq}
		P_n(z_1, \cdots, z_{k+b})=
		\sum_{i=1}^{k+1}\left(z_i - \widehat{z}\right)^n\in \mathcal{T}_{k+b}^n
	\end{equation}
	where $\widehat{z}=(z_1, \cdots, z_{k+b})/(k+b)$ is the center-of-mass. Note that $P_1=0$. Let $\pi^*(2br+l, k+b)$ be the set of partitions $\lambda=(\lambda_1, \lambda_2, \cdots)$ that satisfy  $|\lambda|=\lambda_1+\cdots+\lambda_\ell=2br+s$ and $2\leq \lambda_i\leq k+b$ for all $i$. For each $\lambda\in \pi^*(2br+l, k+b)$, define $P_\lambda =\prod_{j} P_{\lambda_j}$. Then $P_\lambda$ is a basis for $\mathcal{T}_{k+b}^{2br+l}$. Now note that
	\[
	\tau_n(a)\equiv P_n(1^{\times a+b}, 0^{\times k-a}) = \frac{(a+b)(k-a)^n + (-1)^n(k-a)(a+b)^n}{(k-b)^n}
	\]
	which is a polynomial in $a$ of degree $n$ (the coefficient of $a^{n+1}$ is zero). Since $n>1$, $(a+b)(k-a)$ divides $\tau_n(a)$. Moreover, $\tau_n(k-a-b)=(-1)^n\tau(a)$. The assertion now follows from the fact that $T_\phi$ is a linear combination of $P_\lambda$'s, and $P_\lambda$'s are themselves a product of $P_n$'s.
\end{proof}
\noindent
Note that the polynomiality of $\tau_\phi(a)$ is purely a consequence of $T_\phi$ being a semi-invariant.

\subsection{Structure Constants}
\label{sec_structure_polynomial}
As a first example of polynomialities, we will shortly consider $\phi=\psi^+$ (i.e. $\beta(\phi)=1$ and $l(\phi)=0$). We will use this polynomiality to determine the structure constants $C_{a,b}$. It would be helpful if we first proved a few general identities concerning $C_{a,b}$'s. Firstly, from charge conjugation symmetry it is immediate that $C_{a,b}=C_{k-a, k-b}$. The identity $C_{a,b}=C_{b,a}$ is a straightforward consequence of the semi-locality relation \eqref{eq_semilocality}. The next identity is obtained by studying the following 3-point correlation function:
\[
C_{a+1, k-a} = \avg{\psi^+(\infty)\psi_{a+1}(1)\psi^+_a(0)}
\]
However, utilizing the semi-locality relation \eqref{eq_semilocality}, we may alternatively write
\begin{align*}
C_{a+1, k-a}&=\avg{\psi^+(u)\psi_{a+1}(v)\psi_a^+(w)}(u-v)^{\gamma(k-1, a+1)}(u-w)^{\gamma(1, a)}(v-w)^{\gamma(a+1, k-1)}\\
&=\avg{\psi^+(u)\psi_a^+(w)\psi_{a+1}(v)}(u-v)^{\gamma(k-1, a+1)}(u-w)^{\gamma(1, a)}(w-v)^{\gamma(a+1, k-1)}\\
&=C_{k-1, k-a}
\end{align*}
In short, $C_{a+1, k-a}=C_{1,a}$. Putting these identities to use, we now show that $\tau_{\psi^+}(a)=C_{1,a}^2$. Note that $T_{\psi^+}$, the principal semi-invariant of $\psi^+$, is the minimal polynomial $\chi$. Therefore, we have (Recall that $g_0=1$ and $g_a =\prod_{j=0}^{a-1}C_{1,j}$)
\begin{equation}
	\tau_{\psi^+}(a)=\chi(1^{\times a+1}, 0^{\times k-a})=\frac{g_{a+1}g_{k-a}}{g_k}\avg{\psi^+(\infty)\psi_{a+1}(1)\psi^+_a(0)}=
	\frac{g_ag_{k-a}}{g_k} C_{1,a}C_{a+1,k-a}=\frac{g_ag_{k-a}}{g_k} C_{1,a}^2
\end{equation}
We will shortly prove that $g_{a}g_{k-a}=g_k$. Therefore, as a corollary of Theorem \ref{thm_polynomiality}, $C_{1,a}^2$ is a polynomial in $a$ of degree $2r$ which is divisible by $(a+1)(k-a)$, has the symmetry $C_{1,k-1-a}^2=C_{1,a}^2$, and $C_{1,0}^2=1$. Consequently, $C_{1,a}^2$ has roots at $-1, k$, and at some points $-t_p, t_p+k-1$ for $1\leq p\leq r-1$. In other words,
\begin{equation}
	\label{C1a2_eq}
	C_{1,a}^2 = \frac{(a+1)(k-a)}{k}\prod_{p=1}^{r-1}\frac{(t_p+a)(t_p+k-1-a)}{t_p(t_p+k-1)}
\end{equation}
The parameters $t_1, t_2, \cdots, t_{r-1}$ are part of the classification data of the $\mathbb{Z}_k^{(r)}$-algebra. When $k+1\geq 2r$, these parameters are generically distinct. When $k+1<2r$ some of these roots will coincide.

To find the rest of the structure constants and obtain even more identities between the structure constant, we will utilize the trivial wavefunction:
\begin{equation}
	\Psi_1(z_1, \cdots, z_k) = \frac{1}{g_k}\avg{\psi(z_1)\cdots \psi(z_k)}\prod_{i<j}(z_i-z_j)^{\gamma}
\end{equation}
Using the OPEs, by performing a sequence of limits, it is straightforward to show that
\begin{equation}
	\Psi_1(x^{\times a}, y^{\times k-a})=\frac{g_a g_{k-a}}{g_k}\avg{\psi_a(x)\psi_a^+(y)}(x-y)^{a(k-a)\gamma}
	=
	\frac{g_a g_{k-a}}{g_k}
	\label{eq:1st_id}
\end{equation}
and
\begin{equation}
	\begin{aligned}
		\Psi_1(x^{\times k-a-b}, y^{\times a}, z^{\times b})&=\frac{g_{k-a-b}g_a g_{b}}{g_k}\avg{\psi^+_{a+b}(x)\psi_a(y)\psi_b(z)}(x-y)^{(k-a-b)a\gamma}
		(x-z)^{(k-a-b)b\gamma}
		(y-z)^{ab\gamma} \\
		&= 
		\frac{g_{k-a-b}g_{a+b}}{g_k}
		\frac{g_a g_{b}}{g_{a+b}}C_{a,b}
	\end{aligned}
	\label{eq:2nd_id}
\end{equation}
At the same time, the trivial wavefunction $\Psi_1$ is a $(k,0)$ uniform state; hence it is a constant! Using the $a=0$ special case of Eq. \eqref{eq:1st_id} we find that $\Psi_1=1$. Consequently, the identity \eqref{eq:1st_id} turns into $g_ag_{k-a}=g_k$ for all $a$, and the identity \eqref{eq:2nd_id} morphs into
\begin{equation}
	C_{a,b} = \frac{g_{a+b}}{g_ag_b} = \prod_{j=1}^{b-1}\frac{C_{1,a+j}}{C_{1,j}}
	\label{eq:hidden}
\end{equation}
Despite $\Psi_1$ being trivial as a polynomial, it results in a highly non-trivial constraint between the structure constants. Combining Eqs. \eqref{C1a2_eq} and the constraint \eqref{eq:hidden}, we find (Notation: $\Gamma$ is gamma function)
\begin{equation}
	C_{a,b}^2 = \frac{(a+b)!(k-a)!(k-b)!}{k!a!b!(k-a-b)!}
	\prod_{p=1}^{r-1}\frac{\Gamma(t_p+a+b)\Gamma(t_m+k-a)\Gamma(t_m+k-b)\Gamma(t_m)}{\Gamma(t_m+k-a-b)\Gamma(t_m+a)\Gamma(t_m+b)\Gamma(t_m+k)}
\end{equation}
Note that this is a straightforward generalization of the structure constants of $\mathbb{Z}_k^{(1)}$ and $\mathbb{Z}_k^{(2)}$ algebras first reported in Ref. \cite{Zamolodchikov_Fateev_Parafermion}. A more symmetric formula can be provided for $C^2_{a,b,c}=\avg{\psi_a(\infty)\psi_b(1)\psi_c(0)}$ with $a+b+c=k$:
\begin{equation}
	\label{Cabc_eq}
	C_{a,b,c}^2 = \frac{(a+b)!(a+c)!(b+c)!}{a!b!c!(a+b+c)!}
	\prod_{p=1}^{r-1}\frac{(t_p)_{a+b}(t_p)_{a+c}(t_p)_{b+c}}{(t_p)_{a}(t_p)_{b}(t_p)_{c}(t_p)_{a+b+c}}
\end{equation}
where $(x)_n=x(x+1)\cdots (x+n-1)$ is the Pochhammer symbol. In what follows, we will also use the notation $(x)_n^-=x(x-1)\cdots (x-n+1)$ for the falling factorial.

\subsection{Polynomiality of Chiral Weights}
\label{sec_poly_chiral}
Let $W$ be a quasi-primary field of scaling dimension $\Delta$ in the chiral algebra $\mathcal{W}$ (See \ref{appendix_chiral} for a review). The fact that $\psi_a$ is a $\mathcal{W}$-primary field is reflected in the OPE of $W(z)\psi_a(w)$ as:
\begin{equation}
	W(z)\psi_a(w) = \frac{h_a^W\psi_a(w)}{(z-w)^\Delta}+\frac{(W_{-1}\psi_a)(w)}{(z-w)^{\Delta-1}}+\cdots
\end{equation}
The quantity $h_a^W$ is called the \emph{$W$-weight} of $\psi_a$. In the polynomiality formalism, if we choose $\phi=W$ (with $\beta(\phi)=0$ and $l(\phi)=\Delta$), the above OPE can be used to determine the specialization $\tau_W(a)$:
\begin{equation}
	\tau_W(a) =
	T_W(1^{\times a}, 0^{\times k-a})=\avg{W(\infty)\psi_a(1)\psi^+_{a}(0)}=h_a^W
\end{equation}
This shows that the $W$-weight $h_a^W$ is a polynomial of degree $\Delta$ in $a$ with the symmetry $h_{k-a}^W=(-1)^{\Delta}h_a^W$ and $h_0^W=0$. For example, when $\Delta=2$ (which means $W=L$) we rediscover $h_a=a(k-a)\gamma/2$. For a spin-$3$ field $W_3$ (which is necessarily a Virasoro primary), we have
\begin{equation}
	\label{eta_eq}
	w_a \equiv h_a^{W_3} = \frac{a(k-a)(k-2a)}{3}\eta
\end{equation}
where $\eta$ plays a similar role for $W_3$ that  $\gamma$ plays for $L$. 
\begin{rmk}
	It can be shown that the field corresponding to $\ket{\partial;W,a}\propto (2W_{-1}L_0-\Delta L_{-1}W_0)\ket{a}$ decouples from the $\mathbb{Z}_k^{(r)}$-algebra (see \ref{appendix_null}). In other words,  for $a+b< k$
	\begin{equation}
		\psi_a(z)\psi_b(w)=C_{a,b}\left(\psi_{a+b}(w)+
		\frac{a}{a+b}(z-w)\:\partial\psi_{a+b}(w)
		\right)+O(z-w)^2
	\end{equation}
	Put differently, the level one descendant $(W_{-1}\psi_a)$ is essentially $(\Delta h_a^W/2h_a)\partial \psi_a$.
\end{rmk}

\subsection{Conformal Blocks $\xi_a(x)$, Central Charge and Constraints on Chiral Weights}
\label{sec_central}
The specializations $\tau_\phi(a)$ are not the only important quantities one can compute from the semi-invariants $T_\phi$. As an example, consider the following specialization of $\chi$:
\begin{equation}
	\begin{aligned}
		\xi_a(x)\equiv&
		\chi(1,x^{\times a}, 0^{\times k-a})
		=\lim_{w\to \infty} w^{-2r(k-1)}\Psi_2(w^{\times k-1}, 1, x^{\times a},0^{\times k-a})\\
		=&\avg{\psi^+(\infty)\psi(1)\psi_a(x)\psi_a^+(0)}(1-x)^{a\gamma}x^{2h_a}
	\end{aligned}
\end{equation}
This relation recognizes $\xi_a(x)$, a specialization of the minimal polynomial, as a conformal block. Though this identity is not easy to work with, it nonetheless contains information about the CFT/wavefunction.

We start by writing down the OPE of $\psi_a\psi_a^+$. Let $\{W^i\}$ be a basis for the field space $\mathcal{W}$, and $\Delta_i$ the scaling dimension of $W^i$. The following formula is a special case of the formalism introduced in Ref. \cite{nahm} (also see \cite{blumenhagenCFT}):
\begin{equation}
	\psi_a(x)\psi_a^+(y) (x-y)^{2h_a}= \sum_{i}\widehat{h}^i_a\sum_{n\geq 0}
	\frac{(x-y)^{\Delta_i+n}}{n!}\frac{(\Delta_i)_n}{(2\Delta_i)_n}\partial^n W^i(w)
	\label{qOPE_eq}
\end{equation}
Here $\widehat{h}_a^i$ is defined via the relation
$h_a^i = \sum_{j}d_{ij}\widehat{h}^j_a\delta_{\Delta_i, \Delta_j}$ (with $\langle W^i(\infty)W^j(0)\rangle=d_{ij}\delta_{\Delta_i, \Delta_j}$). Using this OPE, the block $\xi_a(x)$ can be computed:
\begin{equation}
	\xi_a(x) = \avg{\psi^+(\infty)\psi(1)\psi_a(x)\psi_a^+(0)}(1-x)^{a\gamma}x^{2h_a}
	=
	(1-x)^{a\gamma}
	\left(
	1+
	\sum_{s\geq 2}
	\Xi_a^{\Delta}
	(-x)^{\Delta}{}_2F_1(\Delta, \Delta; 2\Delta;x)
	\right)
\end{equation}
where $\Xi^\Delta_a:=\sum_{i}\widehat{h}_a^i h^i
\delta_{\Delta_i,\Delta}$, and ${}_2F_1(a,b;c;x)$ is the Gauss hypergeometric function. The polynomiality of chiral weights results in polynomiality of $\Xi^\Delta_a$. To the first few orders, we can expand
\begin{equation}
	\xi_a(x)= 1-a\gamma x+\left\{\frac{(a\gamma)_{2}^-}{2}+\Xi_a^2\right\}x^2-
	\left\{\frac{(a\gamma)_{3}^-}{6}
	+(a\gamma-1)\Xi_a^2+\Xi_a^3
	\right\}x^3+\cdots
\end{equation}
Assuming we are given the minimal polynomial $\chi$ beforehand, we can compute the coefficient of $x^2, x^3, x^4, \cdots$ in $\xi_a(x)$ from $\chi$. Knowing these coefficients allows us to inductively fix $\Xi_a^{\Delta}$ for all $\Delta$. Given our current knowledge about the internal structure of $\mathbb{Z}_k^{(r)}$, it is unclear if the minimal polynomial is capable of fixing all of the chiral weights. Nonetheless, through the above discussion, $\chi$ enforces constraints on the chiral weights of each scaling dimension $\Delta$ via $\Xi_a^\Delta$.

The above considerations can provide us with the central charge. Since $d_{LL}=c/2$, we find that $\widehat{h}_a^L=2h_a/c$ and $\Xi_a^2 = 2hh_a/c$. This, in turn, yields:
\[
\xi_1(x)=\chi(1,x,0^{\times k-a}) = 1 - \gamma x + \left(\frac{\gamma(\gamma-1)}{2}+\frac{2h^2}{c}\right)x^2+O(x^3)
\]
Therefore, one can read off the central charge $c$ from the coefficient of $x^2$ in the specialization $\chi(1, x, 0^{\times k-a})$.

\subsection{Central Charge \& First Few Chiral Weights of $\mathbb{Z}_k^{(1)}$-Algebra}
To conclude the polynomiality section,  we will rediscover the findings of Zamolodchikov-Fateev in Ref. \cite{Zamolodchikov_Fateev_Parafermion} regarding the $\mathbb{Z}_k^{(1)}$-algebras. We have already found the structure constants:
\begin{equation}
	C_{a,b}^2 = \frac{(a+b)!(k-a)!(k-b)!}{k!a!b!(k-a-b)!}
\end{equation}
We would use the fact that the chiral algebra $\mathcal{W}$ is generated by $L, W_3, \cdots, W_k$ for $\mathbb{Z}_k^{(1)}$-algebra. This is not required to find the central charge but allows us to find the $W_3$ and $W_4$-weights. For simplicity, we use the notation $W,U$ for $W_3, W_4$, and their chiral weights will be denoted by $w_a, u_a$ respectively. We use the normalization convention $d_{WW}=c/3, d_{UU}=c/4$. Other than $1, L, W, U$, up to level 4, we also have a quasi-primary chiral field $\Lambda = (LL)-\frac{3}{10}\partial^2 L$. The weight of $\Lambda$ is denoted by $\lambda_a$. It is straightforward to find
\begin{equation}
	\lambda_a = h_a\left(h_a+\frac{1}{5}\right), \qquad d_{\Lambda\Lambda}=\frac{(5c+22)c}{10}
\end{equation}

To compute the central charge and the weights, we first need to find the minimal polynomial $\chi$.
For $\mathbb{Z}_k^{(1)}$-algebras $\chi\in \mathcal{T}_{k+1}^2$ and there is exactly one such polynomial satisfying $\chi(1, 0^{\times k})=1$; namely
\begin{equation}
	\chi(z_1, \cdots, z_{k+1})=\frac{1}{k}\sum_{1\leq i<j\leq k+1}(z_i-z_j)^{2}
\end{equation}
Computing the specialization $\xi_a(x)$, we find
\begin{equation}
	\label{xik1}
	\xi_a(x)=\chi(1, x^{\times a}, 0^{\times k-a})=
	1-\frac{2a}{k}x + \frac{a(k-a+1)}{k}x^2
\end{equation}
On the other hand, by interpreting $\xi_a(x)$ as a conformal block, we can write
\begin{equation}
	\begin{aligned}
		\xi_a(x) = (1-x)^{a\gamma}&\left\{1+ \frac{hh_ax^{2}}{d_{LL}}{}_2F_1(2,2,4,x)
		-
		\frac{ww_ax^{3}}{d_{WW}}{}_2F_1(3,3,6,x)
		+\left[\frac{\lambda\lambda_a}{d_{\Lambda\Lambda}}+\frac{uu_a}{d_{UU}}\right]x^4
		{}_2F_1(4,4,8,x)+\cdots
		\right\}
	\end{aligned}
\end{equation}
Expanding in powers of $x$ and comparing with Eq. \eqref{xik1} we find
\begin{subequations}
	\begin{align}
		c&= \frac{2(k-1)}{k+2}\\
		w_a&=\frac{a(k-a)(k-2a)}{(k-1)(k-2)}w_1\\
		w_1&=\frac{(k-1)}{3}\sqrt{\frac{2(k-2)(3k+4)}{k^3(k+2)}}\\
		u_a&=
		\frac{a(k-a)}{(k-1)(k-3)}\left[
		(a-2)(k-a-2)
		+\frac{(k-16)}{2(k-2)}(a-1)(k-a-1)
		\right]u_1\\
		u_1&=(k-1)\sqrt{\frac{(k-2)(k-3)(2k+1)(3k+4)}{2k^4(k+2)(16k+17)}}
	\end{align}
\end{subequations}
In principle, if some information is known about the chiral algebra of $\mathbb{Z}_k^{(r)}$ algebras, a similar procedure could determine some of the chiral weights in the general case. However, we will refrain from doing so for $r>1$. This is because we have incomplete information about the $\mathcal{W}$-algebra in these cases. Moreover, the formulas for these weights are significantly more complicated for larger $r$.

\section{Anstaz for Principal $\mathbb{Z}_k^{(r)}$-Wavefunctions}
\label{sec_binary}
To determine the principal $\mathbb{Z}_k^{(r)}$-wavefunction $\Psi$, the idea is first to build a basis for the space of $(2k,2r)$ uniform states; which we denote by $\mathcal{U}_{2k}^{2r}$. If $\mathfrak{f}_m$ is such a basis, then we can write:
\[
\Psi_{\mathrm{ansatz}}(z_1, \cdots, z_{2k})= \frac{1}{k!^2}\sum_{m} \alpha_m \mathfrak{f}_m(z_1, \cdots, z_{2k})
\]
We call $\alpha_m$ the \emph{ansatz constants} or $\alpha$-constants of $\Psi$ (with respect to the basis $\mathfrak{f}_m$). Our design for the basis is an application of classical invariant theory, in particular, invariants of binary forms. We will spend some time in this section, briefly but adequately, introducing the necessary mathematics. The discussion on the basis alone has no intersection with conformal field theory. The CFT essence of the $\mathbb{Z}_k^{(r)}$-WFs leaves its mark on the ansatz constants, not the basis. In other words, the general program for finding principal wavefunctions is as follows:
\begin{enumerate}
	\item Fix some $r$ and construct a set of ``generators'' for $\mathcal{U}^{2r}=\bigoplus_{k\geq 0}\mathcal{U}_{2k}^{2r}$. \\[2pt]
	Despite the word ``generator'' needing clarification, this method yields an explicit basis $\mathfrak{f}_m$ for $\mathcal{U}_{2k}^{2r}$.
	
	\item Writing $\Psi_{\mathrm{ansatz}}=k!^{-2}\sum_m \alpha_m \mathfrak{f}_m$, fix $\alpha_m$ so that $\Psi_{\mathrm{ansatz}}$ matches the principal $\mathbb{Z}_k^{(r)}$-WF $\Psi$.\\[2pt]
	This is done by performing various specializations of $\Psi_{\mathrm{ansatz}}$ and relating the $\alpha$-constants to CFT data.
\end{enumerate}
In this paper, we will present the basis for $r=1,2,3,4$. The process of determination of the ansatz constants is fully carried out for $r=1,2$ (i.e., we explicitly find the principal wavefunction). For $r=3,4$, we cannot determine all of the $\alpha$-constants. It is unclear if this is because $\chi\xrightarrow{\mathrm{fix}} \Psi$ fails in this case, or if there is extra structure (currently unknown to us) in $\mathbb{Z}_k^{(r)}$-algebras ($r>2$) that allows us to fix the rest of the $\alpha$-constants. The development of tools for finding the $\alpha$-constants, together with the explicit computation of principal wavefunctions, is the subject of section \ref{sec_about}.

\subsection{Invariants of Binary Forms}
A \emph{binary $n$-form} (or a binary $n$-ic) is a degree $n$ homogeneous polynomial in two formal variables $\mathbf{X}=(X, Y)^t$ (hence the name \emph{binary}). By convention, we write this polynomial in the following fashion:
\begin{equation}
Q_n(\mathbf{X}\mid \mathbf{a}) = \sum_{j=0}^n \binom{n}{j}a_j X^j Y^{n-j}
\end{equation}
We call $\vv{a}=(a_0,a_1, \cdots,a_n)^t$ the  \emph{coefficient vector} of the binary $n$-ic. The group $\mathrm{GL}_2(\mathbb{C})$ naturally acts on $\mathbf{X}$ via matrix multiplication (i.e. for $M\in \mathrm{GL}_2(\mathbb{C})$, we have $\mathbf{X}\mapsto \mathbf{X}'=M\mathbf{X}$). Using the binary $n$-ic, we can find a linear representation $R:\mathrm{GL}_2(\mathbb{C})\to \mathrm{GL}_{n+1}(\mathbb{C})$ which acts on the coefficients. This is done via the following implicit equality
\begin{equation}
\forall M\in \mathrm{GL}_2(\mathbb{C}):\qquad Q_n(M\mathbf{X}\mid R_M\mathbf{a})=Q_n(\mathbf{X}\mid\mathbf{a})
\end{equation}
The representation $R$ is called the \emph{induced representation} for the coefficients. A polynomial in the coefficients $I(a_0, a_1, \cdots , a_n)\in \mathbb{C}[a_0, \cdots, a_n]$ is called an \emph{invariant of the binary $n$-ic} if for all $M\in \mathbf{GL}_2$ we have
\begin{equation}
	I(\vv{a}) = (\det M)^{\mathrm{wt}(I)}I(R_M\vv{a}), \qquad \mathrm{wt}(I)\in \mathbb{N}
\end{equation}
The quantity $\mathrm{wt}(I)$ is called the \emph{weight} of $I$. Working out the special case $M=\lambda \,\mathrm{Id}_{2}$ ($\mathrm{Id}_s$ the $s\times s$ identity matrix, $0\neq \lambda\in \mathbb{C}$), we have $R_M=\lambda^{-n} \,\mathrm{Id}_{n+1}$ and therefore $I$ is necessarily homogeneous. If $m$ is the degree of $I$, then $2\mathrm{wt}(I) = nm$. The space of invariants of binary $n$-ics with degree $m$ is denoted by $\mathcal{B}_n^m$. We have $\mathcal{B}_n^0=\mathbb{C}$ and $\mathcal{B}_n^{m_1}\mathcal{B}_n^{m_2}\subset \mathcal{B}_n^{m_1+m_2}$. Consequently, the direct sum $\mathcal{B}_n=\bigoplus_{m\geq 0}\mathcal{B}_n^m$ is a \emph{graded $\mathbb{C}$-algebra} called the \emph{algebra of invariants of the binary $n$-ic}.

\paragraph{Example: Binary Quadratics} Utilizing the definitions, the binary 2-form/quadratic is $Q_2(X,Y) = a_0 X^2 + 2a_1 XY+a_2Y^2$. In this case, the induced representation of $M\in \mathrm{GL}_2$ can be found to be
\begin{equation}
M=
\begin{pmatrix}
\alpha & \beta\\
\gamma & \delta
\end{pmatrix}
\mapsto
R_M=
\frac{1}{(\alpha \delta - \beta \gamma)^2}
\begin{pmatrix}
\alpha^2 &- 2\alpha \beta & \beta^2\\
-\alpha\gamma & \alpha\delta+\beta\gamma & -\beta\delta \\
\gamma^2 & -2\gamma \delta &\delta^2
\end{pmatrix}
\end{equation}
The \emph{discriminant} $\Delta= a_0a_2-a_1^2$ is an  invariant of the binary quadratic ($\deg\Delta = \mathrm{wt}(\Delta) = 2$) and has the smallest possible degree. It can be shown (see the remark in \ref{appendix_direct}) that $\mathcal{B}_2=\mathbb{C}[\Delta]$ (as a graded algebra); in other words, the invariants of the binary quadratic are $1, \Delta, \Delta^2, \cdots$. The invariant $\Delta^k$ will be used to construct the principal $(k,1)$-wavefunction (aka $\mathbb{Z}_k$ Read-Rezayi state for $N=2k$).

\paragraph{Example: Binary Cubics} The binary cubic form is $Q_3=a_0Y^3+3a_1XY^2+3a_2XY+a_3X^3$. Similar to binary quadratics, the algebra $\mathcal{B}_3$ is generated by the cubic discriminant $\Delta_3$ (see \ref{appendix_direct}); i.e. up to scaling, the only invariants of binary cubics are $1, \Delta_3, \Delta_3^2, \cdots$. The cubic discriminant is of degree $4$ (weight $6$) and is given by
\begin{equation}
\label{cubic_dis_eq}
\Delta_3 = a_0^2a_3^2 - 6 a_0a_1a_2a_3+4a_0a_2^3-3a_1^2a_2^2+4a_1^3a_3
\end{equation}
Compactly, we have $\mathcal{B}_3=\mathbb{C}[\Delta_3]$.

\subsection{First vs Second Quantization: The Isomorphism  $\mathcal{B}_n^m\simeq \mathcal{U}_m^n$}
\label{sec_iso}
To construct an ansatz for principal $\mathbb{Z}_k^{(r)}$-wavefunctions, we will first construct an isomorphism between the spaces $\mathcal{B}_n^m$ and $\mathcal{U}_m^n$. As we will describe in this subsection, this isomorphism is a restriction of the isomorphism between first quantization (uniform states) to second quantization (binary invariants). Let us begin by exploring this quantum language in the context of states lying in the lowest Landau level.

\subsubsection{First and Second Quantization}
The lowest Landau level consists of $N_\phi+1$ `orbitals' $\ket{m}$ with $0\leq m\leq N_\phi$. The $m$th orbital carries a $L_z$-angular momentum $m$. Let $\nu_m$ be the number of bosons occupying the $m$th orbital and $N=\nu_0+\nu_1+\cdots+\nu_{N_\phi}$ the total number of particles. We often abbreviate all of the information in a single array $\nu=(\nu_0, \nu_1, \cdots, \nu_\phi)$, referred to as the bosonic occupation. In the second quantized language, the state corresponding to $\nu$ is represented as
\begin{equation}
	\ket{\nu}= \: a_0^{\nu_0}a_1^{\nu_1}\cdots a_{N_\phi}^{\nu_{N_\phi}}\ket{0}\equiv \mathbf{a}^{\nu}\ket{0}
\end{equation}
Here, $a_m$ is interpreted as the operator creating a boson at $m$th orbital. The first quantized formulation can now be obtained as follows. Let $N_{m}=\nu_0+\nu_1+\cdots+\nu_{m-1}$ be the number of particle with angular momentum $<m$ (with $N_0=0$). We define the corresponding \emph{symmetric monomial} as (caution: our convention is a bit different than the standard one)
\begin{equation}
	m_\nu(z_1,z_2,\cdots, z_N)=\mathscr{S}\left[\prod_{m=0}^{N_\phi}\prod_{j=1}^{\nu_m}z_{N_{m}+j}^{m}\right]
\end{equation}
where $\mathscr{S}$ is the symmetrization operator. Then, one finds that
\begin{equation}
	\braket{z_1, z_2, \cdots, z_N}{\nu} = \mathscr{C}(N,N_\phi)\:
	N!\:
	m_\nu(z_1, \cdots, z_N)
\end{equation}
where $\mathscr{C}(N,N_\phi)$ is a universal normalization (not dependent on $\nu$) which is irrelevant for our considerations. 

The inverse process, going from first quantization to second quantization, is called the \emph{umbral evaluation} in the literature on classical invariant theory (going back $\sim$ 150 years ago, way before the discovery of quantum mechanics, let alone quantum Hall effect). The main idea, which is purely algebraic, is as follows. Let $\mathscr{A}=\mathbb{C}[a_0, a_1, a_2, \cdots]$ be the space of polynomials in infinite variables. The umbral evaluation is the map $\mathscr{U}: \mathbb{C}[z_1, \cdots, z_N]\to \mathscr{A}$ linearly extending the following:
\begin{equation}
	\mathscr{U}[z_1^{p_1}z_2^{p_2}\cdots z_N^{p_N}]=a_{p_1}a_{p_2}\cdots a_{p_N}
\end{equation}
As a consequence, we find that $\mathscr{U}[m_\nu(z_1, \cdots, z_N)]=N!\:\mathbf{a}^{\nu}$, where $\nu=(\nu_0, \cdots, \nu_{N_\phi})$ is any occupation (i.e. $\sum_i \nu_i=N$) in an LLL with $N_\phi$ flux quanta. As we next explain, this first/second quantization isomorphism descends to an isomorphism $\mathcal{Q}: \mathcal{U}_N^{N_\phi}\xrightarrow{\sim} \mathcal{B}_{N_\phi}^N$. The symbol $\mathcal{Q}$ is reminder that the isomorphism is a `quantization' (from first to second). However, it is often easier to work with $\mathcal{Q}^{-1}$ (a `de-quantization' if you will).

\subsubsection{Symmetrized/Umbralized Graph Polynomials}
Let us start by introducing \emph{(regular)-graph-monomials}. In a previous work \cite{pakatchi2018quantum}, we studied the connection between graph theory and model FQH ground states in some depth. Here, we will review the basic definitions. For our purposes, a (weighted directed) graph $G$ is a pair of data $(V,\mu)$. The set $V$ is called the node-set, and we take it to be $V=\{1,2,\cdots, m\}$. We often use the notation $|G|$ (instead of $|V|$) for the number of nodes in $G$. The function $\mu: V\times V\to \mathbb{Z}_{\geq}$ is the multiplicity function of $G$. We understand $\mu(x,y)$ as the multiplicity of an arrow $x\to y$ in graph $G$. We further specialize to graphs where $\mu(i,j)\mu(j,i)=0$ for any pair $i,j\in V$ (i.e. all arrows between $i,j$ go in the same direction). A graph $G$ is \emph{$n$-regular} (or regular of degree $n$) if for all $j\in V$ we have
\begin{equation}
	n = \sum_{i\in V}\mu(i,j)+\mu(j,i)
\end{equation}
Given an $n$-regular graph $G$, we define its \emph{graph-monomial} as
\begin{equation}
	\mathscr{M}[G](z_1, \cdots, z_m)=\prod_{i,j=1}^m(z_i-z_j)^{\mu(i,j)}
\end{equation}
Suppose $G$ is such that $I=(\mathscr{U}\circ\mathscr{M})[G]\neq 0$. Then $I$ is a polynomial of the form
\[
I=\sum_{\substack{\nu_0+\nu_1+\cdots+\nu_n=m\\
		\nu_1+2\nu_2+\cdots +n\nu_n=nm/2
}}
N_{\vec{\nu}}[G]
\prod_{j=0}^{n}a_j^{\nu_j}
\]
with $N_{\vec{\nu}}[G]\in \mathbb{Z}$. For future convenience, we define $u(G)=\mathrm{gcd}(N_{\vec{\nu}}[G])$ (greatest common divisor of all coefficients). These are combinatorical factors that reflect the symmetries of $G$. If $(\mathscr{U}\circ\mathscr{M})[G]=0$, let $u(G)=\infty$. It is straightforward to show that $u(G_1\sqcup G_2)=u(G_1)u(G_2)$ with $\sqcup$ being the disjoint union. We define
\begin{equation}
	\Upsilon[G] = \frac{1}{u(G)}(\mathscr{U}\circ \mathscr{M})[G], \qquad
	\Psi[G] = \frac{1}{u(G)}(\mathscr{S}\circ \mathscr{M})[G]
\end{equation}
called \emph{umbralized graph polynomial} (UMP) and \emph{symmetrized graph monomial} (SGP) respectively. While using the symbol $\Psi$ for SGP is an abuse of language, we use it to remind ourselves that $\Psi[G]$ is wavefunction. Clearly, $\mathcal{Q}^{-1}(\Upsilon[G])=\Psi[G]$ ($\mathcal{Q}$ being the `quantization' isomorphism). This can be extended to an isomorphism $\mathcal{Q}^{-1}:\mathcal{B}_n^m\xrightarrow{\sim} \mathcal{U}_m^n$ as a corollary to the following theorem:
\begin{thm} \label{thm_iso}
	Throughout, all graphs are $n$-regular in $m$ nodes and sums are finite. The following are true:
	\begin{enumerate}
		\item For any invariant $I\in \mathcal{B}_n^m$, there exists graphs $G_i$ and  $t_i\in \mathbb{C}$ such that $X=\sum_{i} t_i \Upsilon[G_i]$.
		
		\item For any uniform state $X\in \mathcal{U}_m^n$, there exists graphs $G_i$ and  $t_i\in \mathbb{C}$ such that $X=\sum_{i} t_i \Psi[G_i]$.
		
		\item Let $P=\sum_i t_i \mathscr{M}[G_i]$ for some graphs $G_i$ and $t_i\in \mathbb{C}$. Then $\mathscr{U}[P]=0$ if and only if $\mathscr{S}[P]=0$.
	\end{enumerate}
\end{thm}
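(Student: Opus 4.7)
The plan is to dispatch the three parts in the order (3), (2), (1). Part (3) is tautological once one groups monomials by their $S_m$-orbits; part (2) is the classical first fundamental theorem (FFT) of binary invariant theory translated into graph language; part (1) then follows from (2) via the isomorphism supplied by (3).

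For (3), let $R_{m,n}$ denote the space of polynomials in $z_1,\ldots,z_m$ with $\deg_{z_i}\le n$. Expand any $P\in R_{m,n}$ as $P=\sum_{\vec{p}} c_{\vec{p}}\,z_1^{p_1}\cdots z_m^{p_m}$ and group the multi-indices $\vec{p}$ into $S_m$-orbits, each labeled by an occupation $\nu=(\nu_0,\ldots,\nu_n)$ with $\sum_j\nu_j=m$. Both $\mathscr{U}$ and $\mathscr{S}$ are manifestly orbit-invariant: $\mathscr{U}[z^{\vec{p}}]=\mathbf{a}^{\nu(\vec{p})}$ because the $a_j$'s commute, and $\mathscr{S}[z^{\vec{p}}]$ equals a fixed symmetric monomial depending only on $\nu(\vec{p})$ by relabeling of the summation variable. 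Since $\{\mathbf{a}^\nu\}$ is a basis of the degree-$m$ part of $\mathscr{A}$ restricted to $a_0,\ldots,a_n$, and the symmetric monomials $\{m_\nu\}$ form a basis of $R_{m,n}^{S_m}$, both vanishing conditions $\mathscr{U}[P]=0$ and $\mathscr{S}[P]=0$ reduce to the same combinatorial criterion: every orbit-sum $\sum_{\vec{p}\in\mathrm{orb}(\nu)} c_{\vec{p}}$ vanishes. Hence $\ker\mathscr{U}=\ker\mathscr{S}$.

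For (2), I would invoke the FFT: on the $m$-fold product of the standard $2$-dimensional representation of $SL_2$, the ring of $SL_2$-invariant polynomials is generated by the brackets $[ij]=\alpha_1^{(i)}\alpha_2^{(j)}-\alpha_2^{(i)}\alpha_1^{(j)}$. Under the umbral specialization $\alpha_1^{(i)}=z_i$, $\alpha_2^{(i)}=1$ -- the affine chart in which the weighted M\"obius action on the $z$'s matches the induced action on binary $n$-ic coefficients -- each bracket becomes a difference $(z_i-z_j)$. Since a uniform state $X\in\mathcal{U}_m^n$ is by definition an $S_m$-symmetric $SL_2$-invariant of bidegree $(n,\ldots,n)$, the FFT expresses it as a symmetric polynomial in the differences; the bidegree constraint forces each resulting monomial to be $\mathscr{M}[G]$ for some $n$-regular graph $G$ on $m$ nodes. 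Symmetrizing term by term gives $X=\sum_i t_i\,u(G_i)\Psi[G_i]$, which is the desired expansion after absorbing the $u(G_i)$ normalizations into the coefficients.

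Finally, for (1), part (3) implies that $\mathscr{U}$ restricted to $R_{m,n}^{S_m}$ is a $GL_2$-equivariant isomorphism onto the appropriate graded piece of $\mathscr{A}$, and restricting further to $SL_2$-invariants yields an isomorphism $\mathcal{U}_m^n\xrightarrow{\sim}\mathcal{B}_n^m$. Given $I\in\mathcal{B}_n^m$, pull back to the unique $X\in\mathcal{U}_m^n$ with $\mathscr{U}[X]=I$, expand $X=\sum_i t_i\Psi[G_i]$ by (2), and apply $\mathscr{U}$ using the identity $\mathscr{U}\circ\mathscr{S}=m!\cdot\mathscr{U}$ (immediate from orbit-invariance of $\mathscr{U}$) to obtain $I=\sum_i t'_i\Upsilon[G_i]$ after absorbing scalars. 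The main obstacle throughout is the careful invocation of the FFT: one must verify that the weighted M\"obius action on the $z_i$'s corresponds exactly to the induced representation $R$ on the $a_j$'s, so that $SL_2$-invariance in the $z$-language agrees with $SL_2$-invariance in the $a$-language, and that the bidegree $(n,\ldots,n)$ constraint matches $n$-regularity of $G$ together with the weight-grading of $\mathcal{B}_n$. These are routine bookkeeping checks, but they are essential to establish the isomorphism cleanly and to have any right to invoke the FFT in the first place.
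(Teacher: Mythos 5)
Your outline is essentially correct, and it reconstructs the classical arguments: the paper itself does not prove Theorem \ref{thm_iso} but defers to the literature (Kung--Rota for parts 1 and 3, Elliott for part 2), and those references argue exactly as you do --- the symbolic/umbral method, i.e.\ the first fundamental theorem for $SL_2$ acting on $m$ copies of $\mathbb{C}^2$, together with the orbit decomposition of monomials. Your proof of part (3) is complete as written (and in fact proves the stronger statement for arbitrary $P$, since $\ker\mathscr{U}=\ker\mathscr{S}$ holds on all of $\mathbb{C}[z_1,\dots,z_m]_{\deg\le n}$, not just on spans of graph monomials). The only place where you are deferring real content is the claim that $\mathscr{U}$ intertwines the M\"obius action with multiplier $\prod_i (f'(z_i))^{n/2}$ on uniform states with the induced representation $R$ (twisted by $\det$) on coefficients; this equivariance is precisely what makes $\mathscr{U}(\mathcal{U}_m^n)\subseteq\mathcal{B}_n^m$ and is where the bulk of Kung--Rota's work lies, so calling it ``routine bookkeeping'' understates it slightly --- but you correctly identify it as the point that must be checked, and the reduction of part (1) to parts (2) and (3) via $\mathscr{U}\circ\mathscr{S}=m!\,\mathscr{U}$ is sound.
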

\noindent
For a proof of part 1, see \cite[Theorem 3.1 -- Part I]{kungrota} or 	
\cite[Ch. 6--7]{olverinv}. For a proof of part 2, see \cite[\S 88--89]{elliott} or \cite{pakatchi2018quantum}. Finally, part 3 can be found in \cite[Lemma 3.4 -- Part I]{kungrota}. We emphasize that regular graphs (and their graph-monomials) act as a middleman in the isomorphism $\mathcal{Q}^{-1}:\mathcal{B}_n^m\xrightarrow{\sim} \mathcal{U}_m^n$. The relation between the three concepts is shown in the diagram  \eqref{fig_diag}.
\begin{figure}[t]
	\centering
	\includegraphics{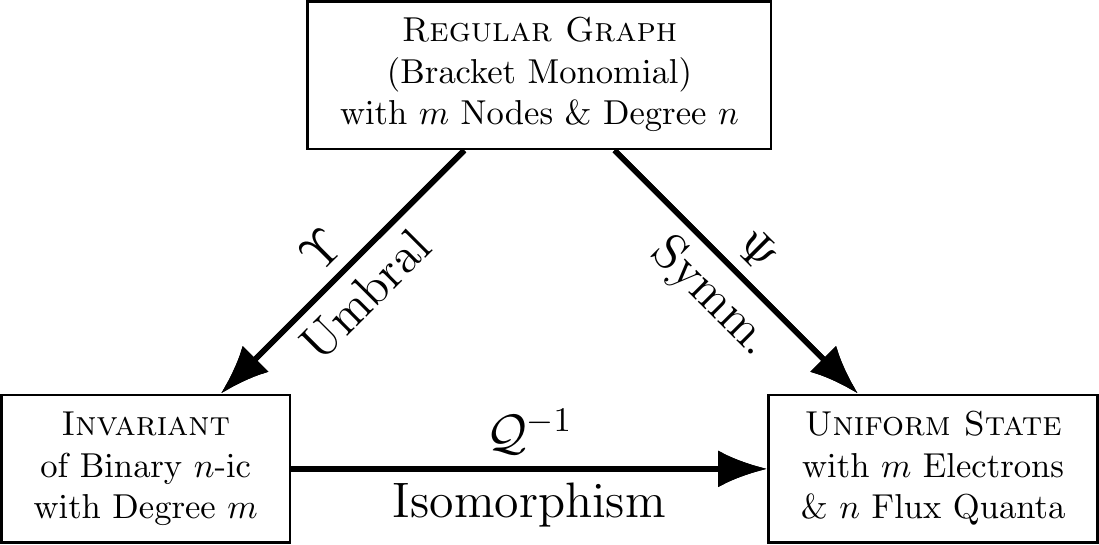}
	\caption{This diagram symbolically illustrates the relation between regular graphs, binary invariants, and uniform states.}
	\label{fig_diag}
\end{figure}

\begin{rmk}
	There exists yet another isomorphism $\mathscr{E}:\mathcal{U}_m^n\xrightarrow{\sim}\mathcal{B}_m^n$. In the literature on binary invariants, the combination of the two isomorphisms $\mathscr{E}\circ\mathcal{Q}^{-1}: \mathcal{B}_n^m\xrightarrow{\sim} \mathcal{B}_m^n$ is called \emph{Hermite reciprocity theorem}. We do not need this isomorphism in the current paper. Nonetheless, since this secondary mapping might also be relevant for future research, the details are provided in \ref{appendix_direct}.
\end{rmk}

\subsection{About the Structure of the $\mathcal{B}_n$ Algebra}
The algebra of invariants of binary forms is very well-studied and quite old. In 1868, Gordan proved $\mathcal{B}_n$ is finitely generated \cite{gordan}, though not utilizing the language of graded algebras. About twenty years later, Hilbert, in two revolutionary papers \cite{hilbert1, hilbert2}, not only proved that $\mathcal{B}_n$ is finitely generated as a graded algebra but also pioneered some of the foundations of modern commutative algebra. In his honor, the generators of $\mathcal{B}_n$ are called a \emph{Hilbert basis} (though the Hilbert basis is a set of generators, not a basis). Additionally, in Ref. \cite{hochster}, the authors proved that $\mathcal{B}_n$ algebras are Cohen-Macaulay. This subsection will briefly discuss some of the consequences of these properties that are relevant to us. To avoid breaking continuity, we do not include the technical definitions in the main body of the paper. Instead, we review the notions used here in \ref{appendix_graded}. The key takeaway is the unique decomposition \eqref{invariant_eq_1}, in which $	\mathfrak{I}^{(b)}_{\nu}$ are a basis for $\mathcal{B}_n^{\deg F}$.

Throughout, the notations $D$ and $\mu$ are used for the Krull dimension and multiplicity of $\mathcal{B}_n$, respectively. Let $I_1, \cdots, I_{D}$ (with degrees $d_1, \cdots, d_{D}$ respectively) be a homogeneous system of parameters (hsop). The full set of generators (i.e. the Hilbert basis) is denoted by $I_1, \cdots, I_{D}, J_{1}, \cdots, J_{g}$. Since $\mathcal{B}_n$ is a finitely generated Cohen-Macaulay graded $\mathbb{C}$-algebra, there exists a set of invariants
\[
\mathcal{Z}=\{1, Z_1, Z_2, \cdots, Z_{\mu-1}\}
\]
such that $\mathcal{Z}$ is a basis for $\mathcal{B}_n$ as a $\mathbb{C}[I_1, \cdots, I_{D}]$-module. We have $J_j\in \mathcal{Z}$ for all $1\leq j\leq g$. In general, any element in $\mathcal{Z}$ is a monomial $\prod_{j=1}^{g} J_j^{a_j}$ for some integer sequence $a=(a_1, a_2, \cdots, a_{g})$. We use the notation $e_b\equiv \deg Z_b$ (with $Z_0\equiv 1$). Finally, any invariant $F\in \mathcal{B}_n$ has a \emph{unique} decomposition
\begin{equation}
	\label{invariant_eq_1}
	F = 
	\sum_{b=0}^{\mu-1}
	\:
	\sum_{\substack{\nu_1, \cdots, \nu_{D}\geq 0\\
			\nu_1d_1+\cdots+\nu_{D}d_{D}=\deg F-e_b}}
	\tilde{\alpha}^{(b)}_{\nu}
	\mathfrak{I}^{(b)}_{\nu}, \qquad 
	\mathfrak{I}^{(b)}_{\nu}:= Z_b\prod_{i=1}^{D} I_i^{\nu_i}, \qquad (\tilde{\alpha}^{(b)}_{\nu}\in \mathbb{C})
\end{equation}
Note that once the Hilbert basis is concretely constructed (and the invariants $Z_s$ are identified), the computation of $F$ reduces to finding the complex numbers $\tilde{\alpha}^{(b)}_{\nu}$. We can use this, in conjunction with the isomorphism $\mathcal{Q}^{-1}:\mathcal{B}_n^m\xrightarrow{\sim} \mathcal{U}_m^n$, to obtain an ansatz for uniforms states; in particular, an ansatz for principal $\mathbb{Z}_k^{(r)}$-WFs. Some further properties about $\mathcal{B}_n$ can be found in \ref{appendix_graded}.

Let us now describe the algebras of binary invariants relevant to us. From now on, if there is no danger of ambiguity, the notations $I_d$ and $J_d$ represent a basic invariant of degree $d$. The information we are about to present is the findings of various authors. However, for an alternative reference listing all that will follow, see Ref. \cite{draisma}.

\paragraph{Example: Binary Quadratics and Cubics} As discussed before, $\mathcal{B}_2=\mathbb{C}[\Delta]$ ($\deg \Delta=2$)
and 
$\mathcal{B}_3=\mathbb{C}[\Delta_3]$ ($\deg \Delta_3=4$), with $\Delta, \Delta_3$ the quadratic/cubic discriminants. The Poincaré series for these two examples are $P_2(t)=(1-t^2)^{-1}$ and $P_3(t)=(1-t^4)^{-1}$. 

\paragraph{Example: Binary Quartics (see \cite{olverinv})} The algebra $\mathcal{B}_4$ has a Hilbert basis consisting of two invariants $I_2, I_3$ of degree $2,3$ respectively. There are no relations between $I_2, I_3$ (meaning $I_2, I_3$ constitute an hsop). The Poincaré series is obtained to be
\begin{equation}
	P_4(t)=\frac{1}{(1-t^2)(1-t^3)}
\end{equation}

\paragraph{Example: Binary sextics (see \cite{gordan})} The algebra $\mathcal{B}_{6}$ has a Hilbert basis $I_2, I_4, I_{6}, I_{10}$ (hsop), together with $J_{15}$. One computes $\mu(6)=2$, meaning, $J_{15}$ squares to a polynomial in $I_2, I_4, I_6, I_{10}$. In other words, $\mathcal{B}_6=(1+J_{15})\mathbb{C}[I_2, I_4, I_6, I_{10}]$ and the Poincaré series is
\begin{equation}
	P_{6}(t)= \frac{1+t^{15}}{(1-t^2)(1-t^4)(1-t^6)(1-t^{10})}
\end{equation}
\paragraph{Example: Binary octavics (see \cite{shioda})} The algebra $\mathcal{B}_8$ has a Hilbert basis $I_2,I_3, I_4, I_5, I_6, I_7$ (hsop) and $J_{8},J_{9}, J_{10}$. The Poincaré series of the octavics is
\begin{equation}
	P_8(t)=\frac{1+t^8 + t^9 + t^{10}+t^{18}}{(1-t^2)(1-t^3)(1-t^4)(1-t^5)(1-t^6)(1-t^7)}
\end{equation}
The space $\overline{\mathcal{B}}_8=\mathcal{B}_8/(\mathrm{hsop})$ is $5$-dimensional (i.e. $\mu(8)=5$), and has a basis $1, J_8, J_9, J_{10}, J_{8}J_{10}$ (that is with our choice of Hilbert basis in \S\ref{sec_octavic}). Note that $J_8J_{10}$ is not basic.

\subsection{A Graphic Ansatz for Principal $\mathbb{Z}_k^{(r)}$-Wavefunctions}
Irrespective of which $n$ is chosen, it is always possible to choose the Hilbert basis of $\mathcal{B}_n$ to be graphic. Concretely, if $D$ is Krull dimension of $\mathcal{B}_n$, one can find $D$ $n$-regular graphs $G_1, G_2, \cdots, G_{D}$ such that $\mathcal{I}=\{I_i\equiv \Upsilon(G_i)\mid 1\leq i\leq D\}$ is a homogeneous system of parameters. One can then augment the hsop with $g$ $n$-regular graphs $G'_1, \cdots, G'_{g}$ so that, with $\mathcal{J}=
\{J_i\equiv \Upsilon(G'_i)\mid 1\leq i\leq g\}
$, the union $\mathcal{I}\cup\mathcal{J}$ constitutes a full Hilbert basis for $\mathcal{B}_n$. This is called a  \emph{graphic} (Hilbert) basis. Similarly, let $H_0,H_1, \cdots, H_{\mu-1}$ be $n$-regular graphs such that
\[
\mathcal{Z}=\{\Upsilon(H_i)\mid 0\leq i\leq \mu-1\}
\]
is a basis for $\mathcal{B}_n$ as a $C[I_1, \cdots, I_D]$-module. We have $H_0=\emptyset$ (no nodes in $H_0$). The $H_i$ graphs are some disjoint unions of $G'_i$ graphs. In accordance with our previous notations, we denote the number of nodes of $G_i, G'_i$, and $H_i$ by $d_i, d'_i$, and $e_i$ respectively.

The principal $\mathbb{Z}_k^{(r)}$-wavefunction $\Psi$ is a $(2k,2r)$ uniform state. Therefore, $\mathcal{Q}(\Psi)$ is an element of $\mathcal{B}_{2r}^{2k}$. Using the graphic Hilbert basis of $\mathcal{B}_{2r}$, we can write $\Psi$ in the form
\begin{equation}
	\Psi(z_1,\cdots, z_{2k}) = \frac{1}{k!^2}\sum_{b=0}^{\mu-1}\sum_{\substack{\nu_1,\cdots, \nu_D\geq 0\\
			\nu_1 d_1+\cdots+\nu_D d_D=2k-e_b}} \alpha^{(b)}_{\nu} \mathfrak{f}_\nu^{(b)}(z_1,\cdots, z_{2k}), \qquad
		\mathfrak{f}_\nu^{(b)}:=\Psi\left[
	H_b\sqcup \bigsqcup_{i=1}^D G_i^{\sqcup \nu_i}
	\right]
\end{equation}
In particular, $\mathfrak{f}_\nu^{(b)}$ is a basis for $\mathcal{U}_{2k}^{2r}$. The above is called the ansatz for principal $\mathbb{Z}_k^{(r)}$-WFs. we call the complex numbers $\alpha^{(b)}_\nu$ the \emph{ansatz constant} or $\alpha$-constants of $\Psi$. We may also find the corresponding invariant:
\begin{equation}
	\label{eq_ansatz}
	\mathcal{Q}(\Psi) = \frac{1}{k!^2}\sum_{b=0}^{\mu-1}\sum_{\substack{\nu_1,\cdots, \nu_D\geq 0\\
			\nu_1 d_1+\cdots+\nu_D d_D=2k-e_b}} \alpha^{(b)}_{\nu} \Upsilon[H_b] \prod_{i=1}^D \Upsilon[G_i]^{\nu_i}
\end{equation}
 In general, our strategy in \S\ref{sec_about} would be to first find a (convenient) graphic Hilbert basis for $\mathcal{B}_{2r}$, and then try to find as many of the $\alpha$-constants of the principal $\mathbb{Z}_k^{(r)}$-wavefunction as we can.

\paragraph{Example: Binary Quadratics} We have seen before that the invariants of binary quadratics are generated by the quadratic discriminant $\Delta = a_0a_2-a_1^2$. Graphically, we have
\begin{equation}
	\label{disc_eq}
	\Delta \equiv \frac{1}{2}(\mathscr{U}\circ \mathscr{M})[G]\equiv \frac{1}{2} (\mathscr{U}\circ \mathscr{M})\left[
	\vcenter{
		\hbox{
			\includegraphics[scale=.33]{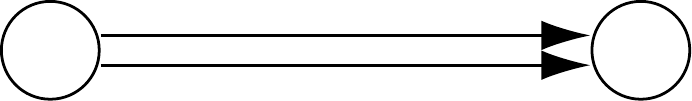}
		}
	}
	\right]= \frac{1}{2}\mathscr{U}[(z_1-z_2)^2]
\end{equation}
For this simple example, let us explicitly work out the effect of $\mathscr{U}$-operator:
\[
\mathscr{U}[(z_1-z_2)^2]=
\mathscr{U}[
z_1^2z_2^0+z_1^0z_2^2-2z_1z_2
]=a_2a_0+a_0a_2-2a_1a_1 = 2(a_0a_2-a_1^2)
\]
Therefore $u(G)=2$, and $\Delta=\Upsilon[G]$. The principal $(k,1)$-wavefunction is then equal to:
\begin{equation}
	\Psi^{(k,1)}(z_1, \cdots, z_{2k}) = \frac{1}{k!^2}\Psi\left[\vcenter{
		\hbox{
			\includegraphics[scale=.4]{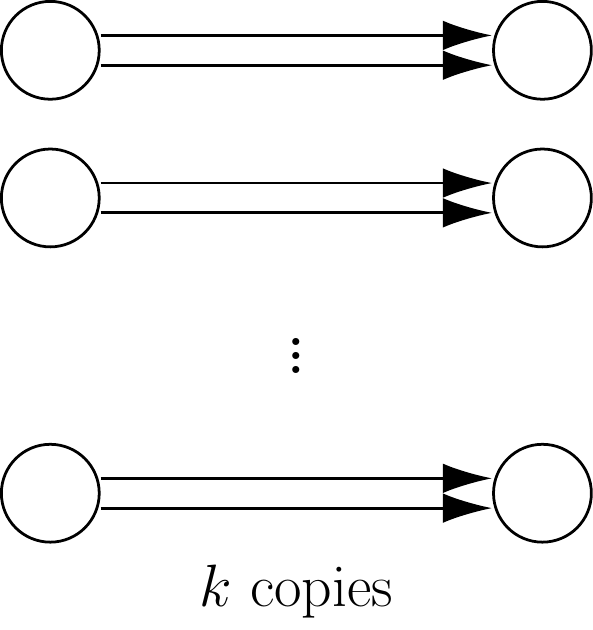}
		}
	}
	\right]=
	\frac{1}{2^k k!^2}\mathscr{S}\Big[
	\prod_{i=1}^k (z_i-z_{k+i})^2
	\Big]
\end{equation}
which is nothing but the principal (i.e. $N=2k$) $\mathbb{Z}_k$ Read-Rezayi state \cite{Read-Rezayi,cappelli}.

\paragraph{Example: Binary Cubics} To compute the principal wavefunctions of $\mathbb{Z}_k^{(r)}$ (with our particular definition of it), we only need to study invariants of binary $2r$-ics (i.e. binary \emph{even}-ics). Nonetheless, let us take a moment and discuss the simplest binary \emph{odd}-ic; namely the binary cubic. We have already seen that $\mathcal{B}_3$ is generated by the cubic discriminant $\Delta_3$ given by Eq. \eqref{cubic_dis_eq}. We have
\begin{equation}
	\Delta_3 \equiv \frac{1}{2}(\mathscr{U}\circ \mathscr{M})[G]\equiv \Upsilon\left[
	\vcenter{
		\hbox{
			\includegraphics[scale=.33]{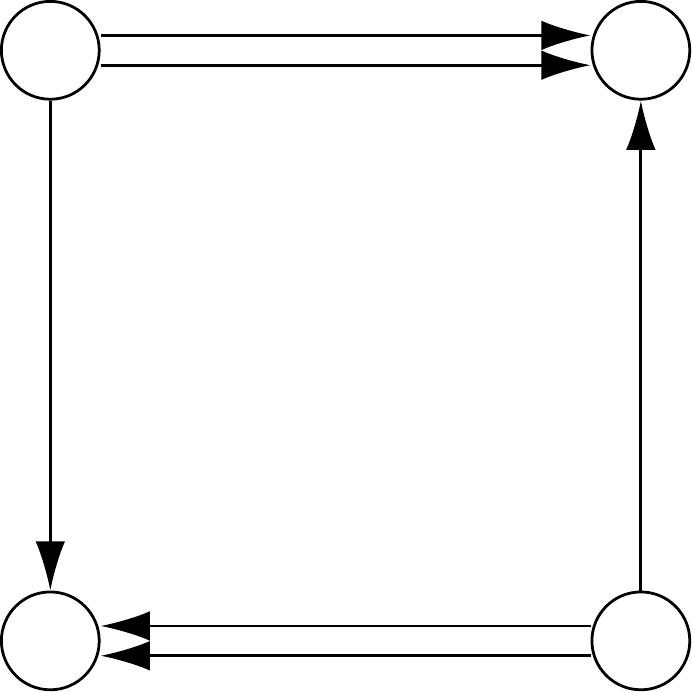}
		}
	}
	\right]
\end{equation}
The $(4,3)$ uniform state corresponding to $\Delta_3$ is the principal (i.e. $N=4$) Gaffnian wavefunction \cite{Gaffnian}. The $(4\ell, 3)$ uniform state corresponding to $\Delta_3^\ell$ is the principal (i.e. $N=4\ell$) Jack with parameter $\alpha = -(2\ell+1)/2$ \cite{Bernevig_Haldane_Model}.

\section{About Principal $\mathbb{Z}_k^{(r)}$-Wavefunctions with $2\leq r\leq 4$}
\label{sec_about}
In this section, we will present a graphic basis for $\mathcal{B}_4, \mathcal{B}_6$, and $\mathcal{B}_8$. In addition, we will develop some technical tools to find the ansatz constants of the principal $\mathbb{Z}_k^{(r)}$-WFs, with $r=2,3,4$, based on the free parameters of $\mathbb{Z}_k^{(r)}$-algebra. The first tool we will introduce is the tricoloring homomorphism (\S\ref{section_tri}), which is our own design. This machinery allows us to find a subset of $\alpha$-constants systematically. The second concept we introduce in this section is that of graph residue (\S\ref{sec_residue}). Residues help facilitate visualizing the link between principal $\mathbb{Z}_k^{(r)}$-wavefunctions and their corresponding minimal polynomial $\chi$. Finally, we will present the detailed discussion of principal $\mathbb{Z}_k^{(r)}$-wavefunctions for $r=2,3,4$ at the end.

\begin{rmk}
	In the main body of the paper, we always fix $r$ first and then study the entire $k$-family (i.e., all $k$ are allowed) simultaneously. It is also possible to fix $k$ and consider the $r$-family. The tools introduced in this section do not apply to the latter approach. In \ref{appendix_direct}, we present the special case $k=2$, i.e. the paired states ($k=1$ are the Laughlin states). 
\end{rmk}

\subsection{Tricoloring Homomorphism $K$}
\label{section_tri}
Let us start by fixing some terminology. Throughout, let $X,Y,Z$ be three formal variables. We use the notations $E_1=X+Y+Z$, $E_2=XY+XZ+YZ$, and $E_3=XYZ$;  the elementary symmetric polynomials in three variables. The ring of symmetric polynomials $\Lambda_3=\mathbb{C}[E_1, E_2, E_3]$ is to be understood as a graded algebra with degrees $\deg E_i=i$. The \emph{even subalgebra} is defined as $\mathcal{B}_{n}^E=\bigoplus_{k\geq 0}\mathcal{B}_n^{2k}$. We understand $\mathcal{B}_n^{2k}$ as the vector space of homogeneous elements of pseudo-degree $k$ (not $2k$). With $A,B$ being two graded algebras, we say a homomorphism $\rho:A\to B$ is \emph{degree-preserving} if $\deg_A x =\deg_B \rho(x)$ for any homogeneous element $x\in A$.

In this subsection, we will build a degree preserving homomorphism $K:\mathcal{B}_{2r}^E\to \Lambda_3$ called the \emph{tricoloring homomorphism}. Tricoloring homomorphism, at its core, is the generating function for 3-point specializations of the uniforms states. More concretely, let $P$ be an arbitrary $(2k,2r)$ uniform state. For any triple of integer $a,b,c\geq 0$ such that $a+b+c=k$ we define the specialization $P_{a,b,c}$ as
\begin{equation}
	\label{eq_sp}
	P_{a,b,c}:=\lim_{w\to \infty} w^{-2r(k-a)}P(w^{\times k-a}, 1^{\times k-b}, 0^{\times k-c})
\end{equation}
Due to how $P$ transforms under M\"{o}bius transformations (being a uniform state), it is straightforward to see that $P_{a,b,c}$ is symmetric with respect to the permutations of the indices. This allows us to construct a symmetric polynomial in $X,Y,Z$:
\begin{equation}
	K[P](X,Y,Z)=\sum_{\substack{a,b,c\geq 0\\
			a+b+c=k}}\frac{P_{a,b,c}}{(k-a)!(k-b)!(k-c)!}X^aY^bZ^c
\end{equation}
Clearly, $K[P]$ acts as a generating function for the specializations $P_{a,b,c}$. Note that $K[P]$ is homogeneous of degree $a+b+c=k$. By abuse of notation, let $K[I]\equiv K[\mathcal{Q}^{-1}(I)]$. So far, we have constructed a collection of linear mappings $\mathcal{B}_{2r}^{2k}\to \Lambda_3^k$ given by $I\mapsto K[I]$. To show that $K$ actually extends to a homomorphism $K:\mathcal{B}_{2r}^E\to \Lambda_3$ we need to explain the naming ``tricoloring''.

Let $G$ be a $2r$-regular graph with nodes $V={1,2,\cdots, |G|}$ (with $|G|$ a positive integer, not necessarily even). Let us illustrate the meaning of $K[G]\equiv K[\Psi[G]]$ in terms of graphs. A (proper) \emph{tricoloring} of $G$ is a map $\eta: V\to \{\infty, 1, 0\}$ so that no two adjacent nodes have the same color (the colors being $\infty,1, 0$). In contrast to the typical definition of graph-coloring, we assume the colors are ordered: $\infty>1>0$. Given an arrow $e=x\to y$ in $G$, the quantity $\sgn(e)=\mathrm{sgn}(\eta(x)-\eta(y))\in \{-1, +1\}$ is called the sign of $e$ induced by $\eta$. The sign of the tricoloring $\eta$ is then defined as  $\mathrm{sgn}(\eta)=\prod_{i,j=1}^{|G|}\mathrm{sgn}(i\to j)^{\mu(i,j)}$ (i.e. the multiplication of all arrow signs). If we denote by $v_\infty, v_1, v_0$ as the number of nodes colored by $\infty,1,0$ respectively, then 
\begin{equation}
	K[G]\equiv K[\Upsilon[G]]=
	\frac{1}{u(G)}\sum_{\eta}\mathrm{sgn}(\eta) X^{|G|/2-v_\infty(\eta)}
	Y^{|G|/2-v_1(\eta)}
	Z^{|G|/2-v_0(\eta)}
\end{equation}
Here, the sum is over all proper tricolorings $\eta$. 
$K[G]$ is homogeneous of degree $|G|/2$
as a function in $X,Y,Z$. A consequence of regularity is that $v_i\leq |G|/2$ for any of $i=\infty, 1, 0$. Therefore, when $|G|=2k$ is even, this is a polynomial, as expected. However, when $|G|$ is odd, it is $K[G]/\sqrt{E_3}$ that is a symmetric polynomial. An important property of $K$ is that $K[G_1\sqcup G_2]=K[G_1]K[G_2]$. In other words,
\begin{equation}
	K[\Upsilon[G_1]\Upsilon[G_2]]=K[\Upsilon[G_1\sqcup G_2]]=K[G_1\cup G_2]=K[G_1]K[G_2]
\end{equation}
Combining this observation with Theorem \ref{thm_iso} shows that $K:\mathcal{B}_{2r}^E\to \Lambda_3$ is indeed a degree-preserving homomorphism of graded algebras.

Though the setting of the definition of tricoloring homomorphism is general, it is the action of $K$ on the principal $\mathbb{Z}_k^{(r)}$-WFs that reveals their true strength. Going to the specializations \eqref{eq_sp}, with find
\begin{equation}
	\Psi_{a,b,c}=\lim_{w\to \infty} w^{-2r(k-a)}\Psi(w^{\times k-a}, 1^{\times k-b}, 0^{\times k-c}) = \frac{g_{k-a}g_{k-b}g_{k-c}}{g_k^2}
	\avg{\psi_a^+(\infty)\psi^+_b(1)\psi^+_c(0)}=C_{a,b,c}^2
\end{equation}
The formula for $C_{a,b,c}^2$ in Eq.\eqref{Cabc_eq} now yields
\begin{equation}
	\begin{aligned}
		\label{K_CFT_eq}
		k!^2 K[\Psi]&=
		k!^2 
		\sum_{\substack{a,b,c\geq 0\\
				a+b+c=k}}
		\frac{C_{a,b,c}^2}{(k-a)!(k-b)!(k-c)!}
		X^a Y^b Z^c\\
		&=
		\sum_{\substack{a,b,c\geq 0\\
				a+b+c=k}}
		\binom{a+b+c}{a,b,c}
		\prod_{p=1}^{r-1}\frac{(t_p)_{a+b}(t_p)_{a+c}(t_p)_{b+c}}{(t_p)_{a}(t_p)_{b}(t_p)_{c}(t_p)_{a+b+c}}
		X^{a}Y^b Z^c
	\end{aligned}
\end{equation}
On the other hand, using the graphic ansatz \eqref{eq_ansatz} we find ($G_i, H_b$, etc. being the same as before)
\begin{equation}
	\label{K_hilbert_eq}
	k!^2 K[\Psi]=\sum_{b=0}^{\mu-1}
	\:
	\sum_{
		\substack{\nu_1, \cdots,\nu_D\geq 0\\
			\nu_1 d_1+\cdots+\nu_D d_D=2k-e_b}}
	\alpha_{\nu}^{(b)}K[H_b]\prod_{i=1}^{D}
	K[G_i]^{\nu_i}
\end{equation}
This leads to what we call the \emph{tricoloring identity}:
\begin{equation}
	\label{tricoloring_identity}
	\sum_{\substack{a,b,c\geq 0\\
			a+b+c=k}}
	\binom{a+b+c}{a,b,c}
	\prod_{p=1}^{r-1}\frac{(t_p)_{a+b}(t_p)_{a+c}(t_p)_{b+c}}{(t_p)_{a}(t_p)_{b}(t_p)_{c}(t_p)_{a+b+c}}
	X^{a}Y^b Z^c
	=
	\sum_{b=0}^{\mu-1}
	\:
	\sum_{
		\substack{\nu_1, \cdots,\nu_D\geq 0\\
			\nu_1 d_1+\cdots+\nu_D d_D=2k-e_b}}
	\alpha_{\nu}^{(b)}K[H_b]\prod_{i=1}^{D}
	K[G_i]^{\nu_i}
\end{equation}
We will use this identity to determine a significant number of ansatz constants $\alpha_\nu^{(b)}$, though not necessarily all, in our analysis of principal $\mathbb{Z}_k^{(r)}$-wavefunctions with $r=2,3,4$.

\subsection{Graph Residue and Residual Basis}
\label{sec_residue}
As mentioned before, a partial reason to explicitly find the principal $\mathbb{Z}_k^{(r)}$ wavefunction $\Psi$ is to investigate whether $\chi\to \Psi_2$ is true. In such a study, it would be convenient to define a basis for $\mathcal{T}_{k+1}^{2r}$ that is naturally obtained from the Hilbert basis of $\mathcal{B}_{2r}$-algebras. We will develop this basis in this subsection.

In general, the minimal polynomial $\chi$ is a specialization of the principal wavefunction:
$$
\chi(z_1, \cdots, z_{k+1})=
\lim_{w\to \infty} w^{-2r(k-1)} \Psi(w^{\times (k-1)},z_1, \cdots, z_{k+1})$$
On the other hand, the principal wavefunction $\Psi$ is a superposition of symmetrized graph polynomials $\Psi[\Gamma]$, with $\Gamma$ a $2r$-regular graph in $2k$ nodes. Hence, we routinely need to calculate specializations of the form:
\begin{equation}
	\label{eq:chiG}
	\chi_{\Gamma}(z_1, \cdots, z_{k+1})\equiv\frac{1}{k!^2} \lim_{w\to \infty} w^{-2r(k-1)} \Psi[\Gamma](w^{\times k-1},z_1, \cdots, z_{k+1})
\end{equation}
We would like to reduce this algebraic calculation to a graph-theoretic one. To do so, we need to introduce a few notions.

\paragraph{Co-Independence Number}
Let $\Gamma=(V,\mu)$ be an arbitrary graph. A subset of nodes $S$ in $\Gamma$ is called an \emph{independent set} if no two nodes in $S$ are adjacent. The \emph{independent number} $\alpha(\Gamma)$ is defined as the size of the largest independent set. The independent sets $S$ with $|S|=\alpha(\Gamma)$ are called \emph{maximum}. We denote the number of maximum independent sets by  $m(\Gamma)$. Note that $\alpha(\Gamma_1\sqcup \Gamma_2)=\alpha(\Gamma_1)+\alpha(\Gamma_2)$ and $m(\Gamma_1\sqcup \Gamma_2)=m(\Gamma_1)m(\Gamma_2)$. While the definitions so far are standard, the following definition of \emph{co-independence number} $\varepsilon(\Gamma)$ is tailored for our specific purposes:
\begin{equation}
	\varepsilon(\Gamma)\equiv \frac{|\Gamma|}{2}-\alpha(\Gamma)
\end{equation}
Using this notion, if $\Gamma$ has $2k$ nodes, a necessary condition for $\chi_\Gamma\neq 0$ is $\varepsilon(\Gamma)=0,1$. If $\Gamma$ is a regular graph, its co-independence number is non-negative. The quantity $\varepsilon(\Gamma)$ is an integer/half-integer iff $|\Gamma|$ is even/odd respectively. Moreover, we have $\varepsilon(\Gamma_1\sqcup \Gamma_2)=\varepsilon(\Gamma_1)+\varepsilon(\Gamma_2)$.

\paragraph{Graph Residue} Suppose $\Gamma$ is a $2r$-regular graph with $\varepsilon(\Gamma)=1$, say in $2k$ nodes. Given any maximum independent set $S$ (i.e., $|S|=k-1$), the graph $\Gamma-S$ is obtained from $\Gamma$ by deleting the nodes in $S$ (and all arrows with one end in $S$). We also define $n(S)$ as the number of arrows\ of the form $(V-S)\to S$ (i.e., arrows ending in $S$). We say $\Gamma$ is reductive if for all maximum independent sets $S$:
\begin{enumerate}
	\item $n(S)=$even.
	\item There exists a graph $\Gamma^\star$ with $k+1$ nodes, called the \emph{residue} of $\Gamma$, such that $\Gamma-S$ is isomorphic to $\Gamma^\star$.
\end{enumerate}
If $\Gamma$ is reductive in $2k$ nodes, then the corresponding $\chi$ polynomial is relatively simple (normalization chosen for future convenience)
\begin{equation}
	\chi_\Gamma(z_1, \cdots, z_{k+1})=\frac{1}{kk!}\frac{m(\Gamma)}{u(\Gamma)}(\mathscr{S}\circ \mathscr{M})[\Gamma^\star]
\end{equation}
Moreover, since $\Gamma^\star$ has $2r$ arrows, this polynomial is an element of $\mathcal{T}_{k+1}^{2r}$ as expected. Note that if $\varepsilon(\Gamma)=1$, then $\chi_{\Gamma}(1, 0^{\times k})=0$.

Let us now study the concepts mentioned above in the context of the graphic bases of $\mathcal{B}_{2r}$. We want to establish three rules of thumb for a ``good'' graphic basis. We do not know if these rules can always be satisfied, but in our explicit construction for $r=2,3,4$, it is possible to abide by them. Suppose $\mathcal{B}_{2r}$ has $N(d)\neq 0$ basic invariants of degree $d$. We denote the corresponding graphs by $G_d^{(i)}$, with $1\leq i\leq N(d)$. We will omit the superindex $(i)$ if $N(d)=1$. Due to ansatz \eqref{eq_ansatz}, we have a special interest in graphs $\Gamma$ with an even number of nodes that are some disjoint union of basic graphs. We call such a graph \emph{constructible}. To get to the first rule, let us first recite a fact: The algebra $\mathcal{B}_{2r}$ always has exactly one basic invariant of degree two. Graphically, this invariant is represented by
\begin{equation}
	G_2 =
	\vcenter{
		\hbox{
			\includegraphics{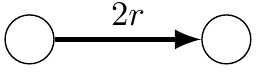}
		}
	}
\end{equation}
and clearly has $\varepsilon(G_2)=0$. For the first rule, we desire that ``$\varepsilon(G_d^{(i)})>0$ for all $d>2$''. This would make $G_2$ the only basic graph with a co-independence number of zero. As the second rule, we put in place: ``any constructible graph either has $\varepsilon> 1$ or is reductive''. Among all of the constructible graphs which do not have a component equal to $G_2$, only two types have co-independence number one; namely
\begin{subequations}
	\begin{align}
		&\mathcal{G}^{(i,j)}_{d_1,d_2} = G_{d_1}^{(i)}\sqcup
		G_{d_2}^{(j)}&& \text{if }\varepsilon(G_{d_1}^{(i)})=\varepsilon(G_{d_2}^{(j)})=\frac{1}{2}\\
		&\mathcal{G}^{(i)}_{d} = G_{d}^{(i)}&& \text{if }\varepsilon(G_{d}^{(i)})=1
	\end{align}
\end{subequations}
In general, any constructible graph $\Gamma$, which has $2k$ nodes and satisfies $\varepsilon(\Gamma)=1$, is of the form  $\Gamma=H\sqcup G_2^{\sqcup(k-l)}$, where $H$ is one of the above graphs satisfying $|H|=2l\leq 2k$. Computing the $\chi$ polynomial corresponding to $\Gamma^\star$ (recall that by hypothesis such $\Gamma$ is reductive)
\begin{equation}
	\chi_{H\sqcup G_2^{\sqcup (k-l)}}(z_1,\cdots, z_{k+1})
	= 
	\frac{1}{k(k)^-_{l}}
	\frac{m(H)}{u(H)}
	\sum_{i_1<i_2<\cdots<i_{l+1}}
	(\mathscr{S}\circ \mathscr{M})[
	H^\star
	](z_{i_1}, \cdots z_{i_{l+1}})
	\label{eq_chiH}
\end{equation}
reduces, basically, to the computation of symmetrized graph polynomial of the residue $H^\star$. From now on, if $H=\mathcal{G}^{(i,j)}_{d_1,d_2}$ (resp. $\mathcal{G}^{(i)}_{d}$), we use the notation $\chi_{d_1, d_2}^{(i,j)}$ (resp. $\chi_{d}^{(i)}$) instead of $\chi_{H\sqcup G_2^{\sqcup (k-l)}}$. There is one more constructible graph $\Gamma$ in $2k$ variables with non-vanishing $\chi_\Gamma$; namely $\Gamma=G_2^{\sqcup k}$. The $\chi_2\equiv \chi_{G_2^{\sqcup k}}$ has a simple universal form:
\begin{equation}
	\chi_2(z_1,\cdots, z_{k+1}) = \frac{1}{k}\sum_{i<j} (z_i-z_j)^{2r}
	\label{eq_chi2}
\end{equation}
This brings us to the third rule: ``For any $k\geq 1$, the collection
\begin{equation}
	\mathscr{B} = \{\chi_2\}\cup \{\chi_{d_1,d_2}^{(i,j)}\neq 0\mid d_1, d_2=\text{odd},\; d_1+d_2\leq 2k\}
	\cup \{\chi_{d}^{(i,j)}\neq 0\mid d=\text{even},\; 2<d\leq 2k\}
\end{equation}
is not redundant (i.e., there is no polynomial appearing twice), is linearly independent, and spans $\mathcal{T}_{k+1}^{2r}$.'' If these three rules can be satisfied, which is the case at least for $r=1,2,3,4$, then we can find a natural basis for $\mathcal{T}_{k+1}^{(2r)}$ directly from the Hilbert basis. We call this the \emph{residual basis} of $\mathcal{T}_{k+1}^{(2r)}$.

\begin{rmk}
	Due to universality of $\chi_2$ we can compute the specialization $\xi^2_{a}=\chi_2(1, x^{\times a}, 0^{\times k-a})$ in general:
	\begin{equation}
		\xi^2_{a}(x)=\frac{(a+1)(k-a)}{k}+\frac{a(k-a)}{k}\left(\frac{(x-1)^{2r}}{k-a}+x^{2r}-1\right)=1 - \frac{2ar}{k}x + \cdots
	\end{equation}
	We will use this formula on a few occasions.
\end{rmk}

\begin{rmk}
	Regarding the graph drawing conventions:\\
	Before starting our construction of Hilbert bases, we need to clarify our drawing convention for the graphs to appear. Let $G=(V,\mu)$ be a graph. Recall that we only care about those graphs which satisfy $\mu(i,j)\mu(j,i)=0$ for any pair $i,j\in V$. This means that all arrows shared between $i,j$ will have the same orientation. In our convention, if $i,j$ is such that $\mu(i,j)+\mu(j,i)=2m$, then we will 
	draw these as $2m$ \emph{undirected} edges between $i,j$. One is free to orient those edges in either of the two ways (as long as $\mu(i,j)\mu(j,i)=0$). We can afford this ambiguity since the invariant $\Upsilon[G]$ is insensitive to inverting the direction of an arrow with even multiplicity.
\end{rmk}

\subsection{Binary Quartics ($2r=4$) and the Principal $(k,2)$-Wavefunction}
\label{sec_quartic}

\paragraph{Hilbert Basis} The algebra of invariants of the binary quartics $\mathcal{B}_4$ has two basic invariants $I_2, I_3$ with degrees $2,3$ respectively. The graphs for these invariants are
\begin{equation}
	G_2 = 
	\vcenter{
		\hbox{
			\includegraphics{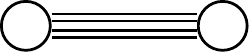}
		}
	}
	,\qquad  G_{3}=
	\vcenter{
		\hbox{
			\includegraphics{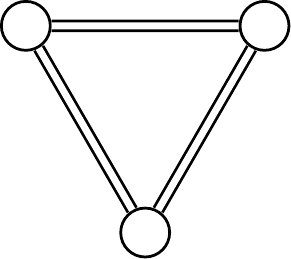}
		}
	}
\end{equation}
Computing the image of $G_2, G_3$ under the tricoloring homomorphism $K$ we have
\begin{equation}
	K[G_2]=E_1, \qquad  K[G_3]=\sqrt{E_3}
\end{equation}
The co-independence number of these graphs are $\varepsilon(G_2)=0$ (as expected) and $\varepsilon(G_3)=1/2$. We have $u(G_2)=m(G_2)=2$,  $u(G_3)=6$ and $m(G_3)=3$. Using the ansatz \eqref{eq_ansatz}, the principal $\mathbb{Z}_k^{(2)}$-wavefunction can be written as
\begin{equation}
	\Psi = \frac{1}{k!^2}\sum_{m=0}^{\lfloor k/3\rfloor} \alpha_m \Psi\left[
	\vcenter{
		\hbox{
			\includegraphics[scale=.75]{quartic_2}
		}
	}^{\sqcup(k-3m)}
	\bigsqcup
	\vcenter{
		\hbox{
			\includegraphics[scale=.75]{quartic_3}
		}
	}^{\sqcup(2m)}
	\right]
\end{equation}
We now proceed to find the constants $\alpha_m$.

\paragraph{Tricoloring Identity}
To determine the $\mathbb{C}$-numbers $\alpha_m$, we
put the tricoloring identity \eqref{tricoloring_identity} to use. We find that
\begin{equation}
	\label{accidental_eq}
	k!^2 K[\Psi] = \sum_{\substack{a,b,c\geq 0\\
			a+b+c=k}}
	\binom{k}{a,b,c}
	\frac{(t)_{k-a}(t)_{k-b}(t)_{k-c}}{(t)_{a}(t)_{b}(t)_{c}(t)_k}
	X^{a}Y^b Z^c=
	\sum_{m=0}^{\lfloor k/3\rfloor}
	\alpha_m E_1^{k-3m}E_3^m
\end{equation}
Expanding the right-hand side in powers of $X,Y,Z$, this identity reduces to (recall that $(x)_n^-=x(x-1)\cdots (x-n+1)$ is the falling factorial, while $(x)_n=x(x+1)\cdots (x+n-1)$ is the Pochhammer symbol)
\begin{equation}
	\label{place-holder_eq}
	\sum_{m=0}^b \alpha_m \frac{(b)_m^-(a)_m^-(k-a-b)_m^-}{(k)_{3m}^-}=\frac{(t+a)_b(t+k-b-a)_b}{(t)_b(t+k-b)_b}
\end{equation}
At the same time, due to the Saalsch\"{u}tz's theorem (see \cite[\S 2.2]{bailey1935hypergeometric}), for any non-negative integer $b$, we have the hypergeometric identity
\begin{equation}
	{}_3F_2(-b,-a, b+a-k; t, 1-k-t;1)\equiv
	\sum_{m=0}^b \binom{b}{m} \frac{(a)_m^-(k-a-b)_m^-}{(t)_m(t+k-1)_m^-}=
	\frac{(t+a)_b(t+k-a-b)_b}{(t)_b(t+k-b)_b}
\end{equation}
where ${}_3F_2(\alpha_1, \alpha_2; \alpha_3, \beta_1, \beta_2;x)$ is the generalized Gauss hypergeometric function. Consequently, the ansatz constants are found to be
\begin{equation}
	\alpha_m = \frac{1}{m!}\frac{(k)_{3m}^-}{(t)_m(t+k-1)_m^-}
\end{equation}
To summarize, the principal $(k,2)$-wavefunction is the following polynomial:
\begin{equation}
	\Psi=
	\frac{1}{k!}
	\sum_{m=0}^{\lfloor k/3\rfloor}\frac{1}{(k-3m)!m!}\frac{1}{(t)_m(t+k-1)_m^-}\:\Psi\left[
	\vcenter{
		\hbox{
			\includegraphics[scale=.75]{quartic_2}
		}
	}
	^{\sqcup (k-3m)}
	\sqcup
	\vcenter{
		\hbox{
			\includegraphics[scale=.75]{quartic_3}
		}
	}
	^{\sqcup 2m}
	\right]
\end{equation}
Note that $\varepsilon(G_2^{\sqcup k-3m}\sqcup G_{3,3}^m)=m$. So only the summands for $m=0,1$ contribute to the minimal polynomial of the $(k,2)$-algebra.
\paragraph{Minimal Polynomial} It is obvious from the high degree of symmetry of $G_3$ that it leads to reductive graphs. The residue of $G_{3,3}=G_3\sqcup G_3$ is then obtained to be
\begin{equation}
	G_{3,3}^\star =
	\vcenter{
		\hbox{
			\includegraphics[scale=.5]{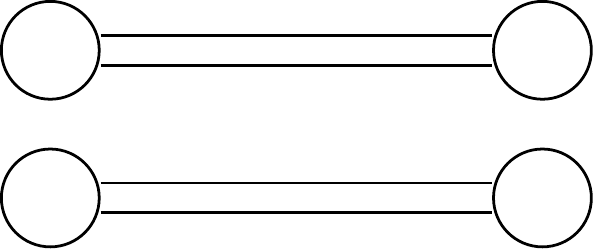}
		}
	}
\end{equation}
We have  $\chi_2(1, 0^{\times k})=1$ and $\chi_{3,3}(1, 0^{\times k})=0$. This, in particular, shows that $\chi_2, \chi_{3,3}$ are linearly independent. They also span $\mathcal{T}_{k+1}^{4}$ since $\dim \mathcal{T}_{k+1}^{4}=2$ (for $k\geq 3$). Thus, our three rules of thumb are satisfied. The minimal polynomial can now be written as
\[
\chi = \chi_2 + \beta \chi_{3,3}
\]
for some coefficient $\beta$. To find $\beta$ and the central charge $c$, consider the specialization $\xi_a=\chi(1,x^{\times a}, 0^{\times k-a})$. We calculate
\begin{align}
	&\xi_{a}^{2}\equiv \chi_{2}(1,x^{\times a}, 0^{\times k-a})=
	\frac{(a+1)(k-a)}{k}+\frac{a(k-a)}{k}\left(\frac{(x-1)^{4}}{k-a}+x^{4}-1\right)
	\label{xi22}\\
	&\xi_{a}^{3,3}\equiv \chi_{2}(1,x^{\times a}, 0^{\times k-a})=
	\frac{(a+1)_2^-(k-a)_2^-}{k(k)_3^-}x^2
	+\frac{(a)_2^-(k-a)_2^-}{k(k)_3^-}x^2\left(
	\frac{2(x-1)^2}{k-a-1}+x^2-1
	\right)
\end{align}
and $\xi_a(x) = \xi_a^2(x) + \beta \xi^{3,3}_a(x)$. Now, since $\xi_a(1)=C_{1,a}^2$, we can find $\beta$ as a function of $t$:
\begin{equation}
	\beta = \frac{(k)_3^-}{t(t+k-1)}
\end{equation}
On the other hand, interpreting $\xi_1(x)$ as a conformal block, we have $\xi_1(x)=1-\gamma x + [\gamma(\gamma-1)/2+2h^2/c] x^2+O(x^3)$. Therefore, as discussed in subsection \ref{sec_central}, since
\begin{equation}
	\xi_1(x) = 1 - \frac{4}{k}x+ \left(\frac{6}{k}+\frac{2(k-1)(k-2)}{k t (t+k-1)}\right)x^2-\frac{4}{k}x^3+x^4
\end{equation}
we can read off the central charge 
\begin{equation}
	c = \frac{4(k-1)t(t+k-1)}{(k+2t-2)(k+2t)}
\end{equation}
This is the same central charge Zamolodchikov-Fateev report in Ref. \cite{Zamolodchikov_Fateev_Parafermion}. One can, in principle, use the formula for $\xi_a(x)$, in conjunction with the conformal block interpretation of $\xi_a(x)$ to find certain constraints on the chiral weights of $\mathbb{Z}_k^{(2)}$-algebra. Nevertheless, due to the complexity of the formulas, we will refrain from doing so.

\begin{rmk}
	\label{remark_2}
	We have shown that, in the case of $\mathbb{Z}_k^{(2)}$-algebras, both the minimal polynomial and the principal wavefunction is parametrized by $t$. In other words, both polynomials carry the same information. The major agents responsible for making this equivalence happen are (1) the constraint $C_{a,b}=g_{a+b}/g_a g_b$, (2) the polynomiality of $C_{1,a}^2$. Polynomiality is a consequence of the separability of the principal wavefunction. We can reformulate the hidden constraint in the form ($a+b\leq k$)
	\begin{equation}
		\chi^{(k+b)}(1^{\times a+b}, 0^{\times k-a})
		=
		\chi(1^{\times b+1}, 0^{\times k-b})
		\prod_{s=1}^{a-1}
		\frac{\chi(1^{\times s+b+1}, 0^{\times k-s-b})}{
			\chi(1^{\times s+1}, 0^{\times k-s})
		}
		\label{eq_nontriv}
	\end{equation}
	These puts some non-trivial constraints on the  $(k+b)$-factors $\chi^{(k+b)}$. It is unclear whether these constraints are also a result of separability. If they are not, then $\mathbb{Z}_k^{(r)}$-WFs have even more structure than originally anticipated. This raises the question: Suppose $\chi$ is the minimal polynomial of $\mathbb{Z}_k^{(r)}$-algebra for some $t$. Is it possible to build a separable $(k,4)$-clustering uniform sequence of wavefunctions that is different from the $\mathbb{Z}_k^{(r)}$ sequence but has minimal polynomial $\chi$? If such a sequence exists, then the Hamiltonian $H_\chi$ will not have a unique ground state.
\end{rmk}

\subsection{Binary Sextics ($2r=6$) and Principal $(k,3)$-Wavefunctions}
\label{sec_sextic}
\begin{table}[t]
	\centering
	\begin{tabular}{|c|c|c|c|c|}
		\hline
		\textbf{Invariant} & \Large$I_2$ & \Large$I_4$ & \Large$I_{6}$ & \Large$I_{10}$\\
		\hline
		\parbox[c]{1.3cm}{\centering\textbf{Graph}} &
		\parbox[c]{2cm}{\centering\includegraphics[scale=.8]{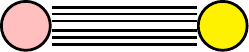}}&
		\parbox[c]{2.5cm}{\centering\includegraphics[scale=.8]{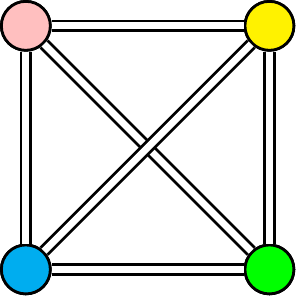}}&
		\parbox[c]{3.2cm}{\centering\includegraphics[scale=.8]{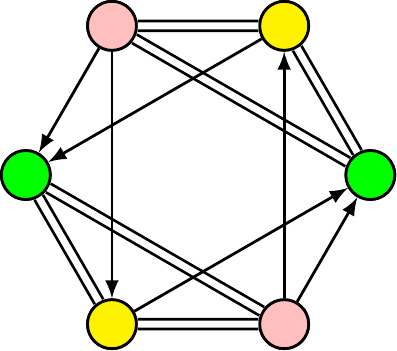}}&
		\parbox[c]{3.6cm}{\centering\includegraphics[scale=.8]{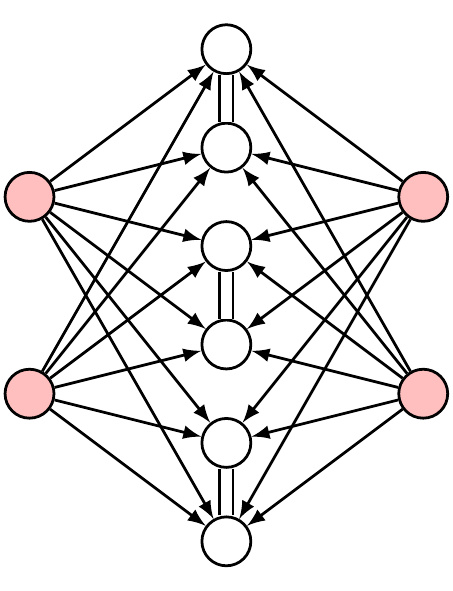}}
		\\
		\hline
		\Large $u$ & \large$2$ & \large$24$& \large$6$ & \large$8$ 
		\\
		\hline
		\Large $m$ & \large$2$ & \large$4$& \large$3$ & \large$1$ 
		\\
		\hline
		\Large $K$ & \large$E_1$ & \large$0$ & \large$E_3$ & \large$E_2E_3$
		\\
		\hline
		\Large $\varepsilon$ & \large$0$ & \large$1$ & \large$1$ & \large$1$
		\\
		\hline
		\parbox[c]{1.3cm}{\centering\textbf{Residue}} & 
		N/A&
		\parbox[c]{2.5cm}{\centering\includegraphics[scale=.8]{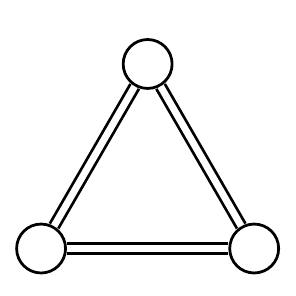}}&
		\parbox[c]{3.2cm}{\centering\includegraphics[scale=.8]{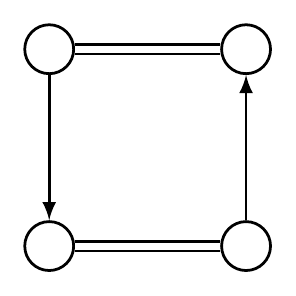}}&
		\parbox[c]{3.6cm}{\centering\includegraphics[scale=.8]{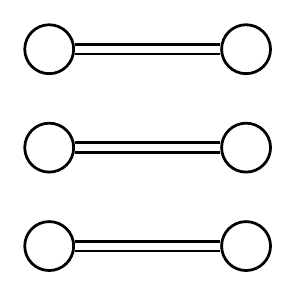}}\\
		\hline
	\end{tabular}
	\caption{A graphic Hilbert basis for the algebra of invariants of binary sextics. For each graph, its $u$-value, $m$-value, co-independence number $\varepsilon$, and its image under $K$ is listed. The residue for graphs with $\varepsilon= 1$ is included as well. The colorings in the first row represent all possible maximum independent sets (the number of colors is the $m$-value). Note that removing each color leads to the graph residue.}
	\label{tab:sextic}
\end{table}
As discussed previously, invariants of binary sextics have a Hilbert basis $I_2, I_4, I_6, I_{10}$ (hsop) together with an extra invariant $J_{15}$. The degree-15 invariant $J_{15}$ squares to a polynomial in $I_2,I_4,I_6, I_{10}$. Since we are only interested in invariants with even degrees, we may ignore $J_{15}$. We have gathered our choice for the basic graphs in Table \eqref{tab:sextic}. 

\paragraph{Minimal Polynomial} Although one can check the linear independence of $\chi_2, \chi_4, \chi_6, \chi_{10}$ directly, we prove this by computing the specializations $\xi^d_a\equiv \chi_d(1, x^{\times a}, 0^{\times k-a})$. 
\begin{subequations}
	\begin{align}
		\xi^2_a(x)&=\frac{(a+1)(k-a)}{k}+\frac{a(k-a)}{k}\left(\frac{(x-1)^6}{k-a}+x^6-1\right)\\
		\xi^4_a(x)&=\frac{a(k-a)}{k^2(k-1)}x^2(x-1)^2\\
		\xi_{a}^6(x) &=\frac{(a+1)_2^-(k-a)_2^-}{k(k)_3^-}x^3+\frac{(a)^-_2(k-a)^-_2}{k(k)_3^-}x^3\left(\frac{2(x-1)^3}{k-a-1}+x^3-1\right)\\
		\xi_{a}^{10}(x)&= \frac{(a+1)_3^-(k-a)}{k(k)_5^-}x^4 +\frac{(a)_3^-(k-a)_3^-}{k(k)_5^-}x^4\left(\frac{3(x-1)^2}{k-a-2}+x^2-1\right)
	\end{align}
\end{subequations}
Let us also write
\begin{equation}
	\label{eq_mini_sextic}
	\chi = \beta_2\chi_2 + \beta_4 \chi_4 + \beta_6 \chi_6+\beta_{10}\chi_{10}
\end{equation}
for some coefficients $\beta_d$. To check linear independence, we put $\chi\to 0$. Now $\xi_a^{2}, \xi_{a}^{4}, \xi_{a}^{6}$ and $\xi_{a}^{10}$ are respectively $O(1), O(x^2), O(x^3)$ and $O(x^4)$. This immediately shows linear independence. However, the true advantage of the residual basis is that it grants us a certain degree of interpretability for the coefficients $\beta_d$. To illustrate, since $\xi_{a}^2$ is the only specialization surviving when $a=0$ (in fact, $\xi_0^2=1$) the coefficient $\beta_2$ is controlled by the normalization of $\chi$ (namely, $\beta_2=1$). Realizing $\xi_a(x)=\chi(1, x^{\times a}, 0^{\times k-a})$ as a conformal block, one interprets $\beta_4, \beta_6$ to be controlled by the central charge and the $W_3$-weight of $\psi$ respectively. Finally, note that $\xi_a^{10}=0$ when $a=1$. Using the polynomiality of chiral weight, we can write the $W_4$-weight of $\psi_a$ in the form
\[
h^{W_4}_a = \frac{a(k-a)}{k-3}\left[-\frac{(a-2)(k-a-2)}{(k-1)}w_1 
+
\frac{(a-1)(k-a-1)}{2(k-2)} w_2
\right]
\]
where $w_1, w_2$ are the $W_4$-weights of $\psi, \psi_2$ respectively. The coefficient $\beta_{10}$ is controlled by $w_2$. However, the better way to interpret $\beta_6, \beta_{10}$ would be in terms of structure constant parameters $t_1, t_2$. Note that $\xi_a(1)=C_{1,a}^2$. We have $\xi_{a}^{4}(1)=0$; i.e. $\beta_4$ is a priori independent of $t_1, t_2$. The coefficients $\beta_6, \beta_{10}$ are just an alternative parametrization to $t_1, t_2$. In other words, while $\beta_4$ is controlled by the central charge, the $W_3$ and $W_4$ weights are controlled by the central charge $c$ and the structure constant parameters $t_1, t_2$. As this analysis shows, the vectors $\chi_{2}, \chi_{4}, \chi_{6}, \chi_{10}$ of the residual basis have (to some degree) specific roles in the language of CFT.

\paragraph{Tricoloring Homomorphism} Using the ansatz \eqref{eq_ansatz}, we can write the $(k,3)$-wavefunction in the form
\begin{equation}
	\Psi = \frac{1}{k!^2}\sum_{\nu_2+2\nu_4+3\nu_6+5\nu_{10}=k} \alpha(\nu_4, \nu_6, \nu_{10})
	\Psi[G_2^{\sqcup\nu_2}\sqcup G_4^{\sqcup\nu_4}\sqcup G_6^{\sqcup\nu_6}\sqcup G_{10}^{\sqcup\nu_{10}}]
\end{equation}
In terms of ansatz coefficients, the $\beta$-coefficients of the minimal polynomial $\chi$ \eqref{eq_mini_sextic} are $\beta_4=\alpha(1,0,0)$, $\beta_6=\alpha(0,1,0)$ and $\beta_{10}=(0,0,1)$. The normalization also forces $\alpha(0,0,0)=1$. As summarized in Table \ref{tab:sextic}, we have $K(G_2)=E_1, K(G_4)=0, K(G_6)=E_3$ and $K(G_{10})=E_2E_3$. The tricoloring identity \eqref{tricoloring_identity} now yields
\begin{equation}
	\label{K_sextic_eq}
	\sum_{\substack{a,b,c\geq 0\\
			a+b+c=k}}
	\binom{k}{a,b,c}
	\prod_{i=1,2}\frac{(t_i)_{k-a}(t_i)_{k-b}(t_i)_{k-c}}{(t_i)_{a}(t_i)_{b}(t_i)_{c}(t_i)_k}
	X^{a}Y^b Z^c=
	\sum_{\nu_2+3\nu_6+5\nu_{10}=k}
	\alpha(0, \nu_6, \nu_{10})
	E_1^{\nu_2}E_2^{\nu_{10}}E_3^{\nu_6+\nu_{10}}
\end{equation}
Although we do not have a closed formula, this identity can be used to compute the constants $\alpha(0, \nu_6, \nu_{10})$ in terms of $t_1, t_2$. Unfortunately, however, except for $\alpha(1,0,0)$, we do not know of a general formalism that allows us to compute $\alpha(\nu_4, \nu_6, \nu_{10})$ with $\nu_4\neq 0$.

\begin{rmk}
	The machinery that allows us to extend our knowledge of $\alpha(0,1,0)$ and $\alpha(0,0,1)$ is the tricoloring identity. Tricoloring identity takes advantage of the constraint $C_{a,b}=g_{a+b}/g_ag_b$. As discussed in Remark \ref{remark_2}, this is highly non-trivial condition in terms of $(k+b)$-factors. In other words, this constraint reveals some hidden structure in the principal $\mathbb{Z}_k^{(r)}$-polynomial. If we did not know about this structure, we would not have had much success in analysing the principal $\mathbb{Z}_k^{(2)}$ wavefunctions either. It is possible that $\mathbb{Z}_k^{(r)}$-algebra/wavefunctions, at least when $r\geq 3$, possess extra structure/hidden constraints that would allow us to strengthen the method presented in this paper and, for example, let us compute all of the constants $\alpha(\nu_4, \nu_6, \nu_{10})$.
\end{rmk}

\begin{rmk}
	A possible approach for finding the previously unknown ansatz constants (at least a few more of them) is through conformal blocks similar to $\xi_a(x)$. Define
	\begin{equation}
		\begin{aligned}
			\xi_{a,b}(x)&=\lim_{w\to \infty}w^{-2r(k-b)}\Psi(w^{\times(k-b)}, 1^{\times b}, x^{\times a}, 0^{\times k-a})\\
			&=\avg{\psi_b^+(\infty)\psi_b(1)\psi_a(x)\psi_a^+(0)}(1-x)^{ab\gamma}x^{2h_a}\\
			&=(1-x)^{ab\gamma}\left[1+\sum_{s\geq 2}\left(\sum_{i}\widehat{h}_a^ih^i_b\delta_{s, \Delta_i}\right)\right](-x)^s{}_2F_1(s,s;2s;x)
		\end{aligned}
	\end{equation}
	The idea is to use the chiral weights' polynomialities to reduce this block's unknowns to a ``handful'' of constants. Unfortunately, this method comes with severe limitations. Aside from the fact that a lot of tedious computations are involved in this method, not knowing any details about the chiral algebras is restrictive as well. For one, at the very least, we need to know how many simple fields exist at each level. Equally problematic is that to compute the weight of a qNOP of two simple fields (neither being the energy-momentum tensor), some limited knowledge of the Lie algebra of the modes is required.
\end{rmk}

\begin{table}
	\centering
	\begin{tabular}{|c|c|c|c|c|}
		\hline
		\textbf{Invariant} & \Large$I_2$ & \Large$I_3$ & \Large$I_4$ &  \Large$I_5$\\
		
		\hline
		\parbox[c]{1.1cm}{\centering \textbf{Graph}}  &\parbox[c]{2.5cm}{\centering\includegraphics[scale=.75]{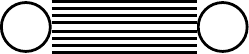}} & \parbox[c]{3cm}{\centering\includegraphics[scale=.75]{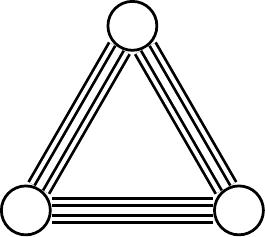}} &
		\parbox[c]{3cm}{\centering\includegraphics[scale=.75]{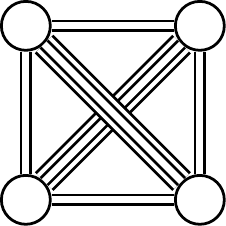}} &
		\parbox[c]{3cm}{\centering\includegraphics[scale=.75]{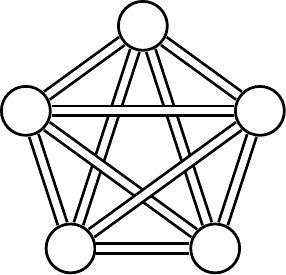}}
		\\
		\hline
		\Large$u$ & \large$2$ & \large$6$ & \large$16$ & \large$120$ 
		\\
		\hline
		\Large$\varepsilon$ & \large$0$ & \large$1/2$ & \large$1$  & \large$3/2$
		\\
		\hline
		\Large$K$ & \large$E_1$ & \large$\sqrt{E_3}$ & \large$0$ & \large$0$\\
		\hline
	\end{tabular}
	
	\begin{tabular}{|c|c|c|c|}
		\hline
		\textbf{Invariant} &\Large$I_6$ & \Large$I_7$ & \Large$J_8$\\
		\hline
		\parbox[c]{1.1cm}{\centering \textbf{Graph}} 
		&\parbox[c]{2.5cm}{\centering \includegraphics[scale=.75]{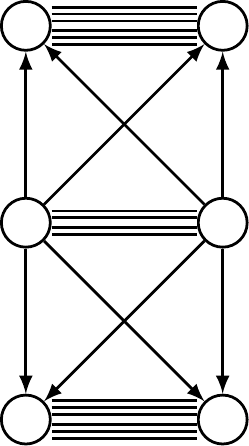}}
		&\parbox[c]{4.4cm}{\centering\includegraphics[scale=.75]{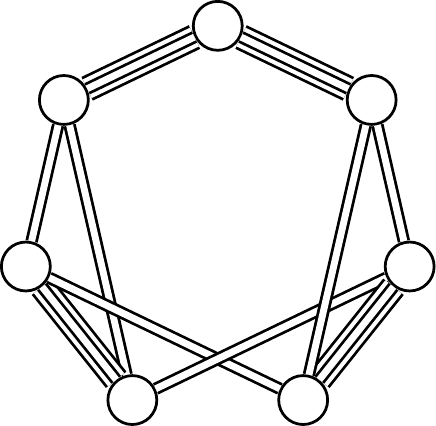}} & \parbox[c]{5.04cm}{\centering \includegraphics[scale=.75]{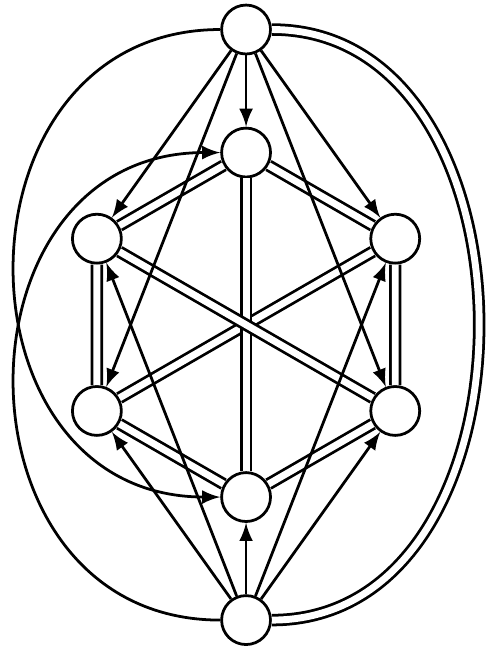}} 
		\\
		\hline
		\Large$u$ & \large$8$ & \large$8$ & \large$4$
		\\
		\hline
		\Large$\varepsilon$ & \large$1$ & \large$1/2$ & \large$1$
		\\
		\hline
		\Large$K$ & \large$0$ & \large$E_2\sqrt{E_3}$ & \large $0$ \\
		\hline
	\end{tabular}

	\begin{tabular}{|c|c|c|}
		\hline
		\textbf{Invariant} & \Large $J_9$ & \Large $J_{10}$\\
		\hline
		\parbox[c]{1.1cm}{\centering \textbf{Graph}}  &\parbox[c]{6.2cm}{\centering\includegraphics[scale=.75]{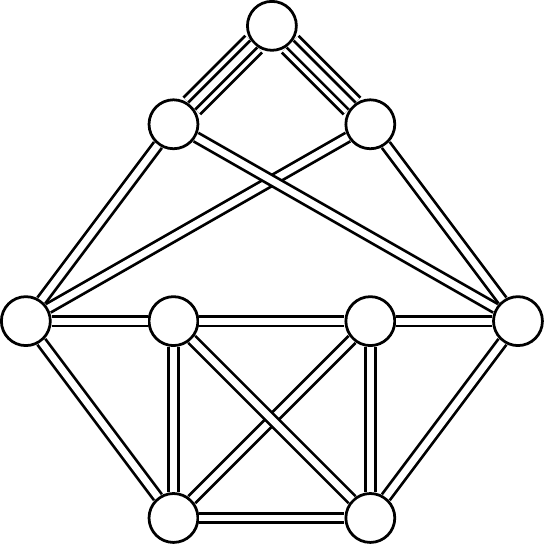}} & \parbox[c]{6.17cm}{\centering\includegraphics[scale=.75]{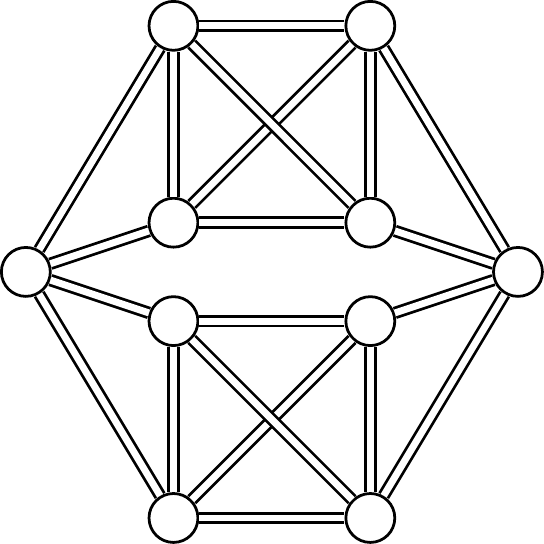}} 
		\\
		\hline
		\Large$u$ & \large$8$ & \large$32$ 
		\\
		\hline
		\Large$\varepsilon$ & \large$3/2$ & \large$2$ 
		\\
		\hline
		\Large$K$ & \large$0$ & \large$0$ \\
		\hline
	\end{tabular}
	\caption{A graphic Hilbert basis for the algebra of invariants of binary octavics. For each graph, its $u$-value, its co-independence number $\varepsilon$, and its image under $K$ are also listed. One reads from this table that $I_5, J_9, J_{10}$ have $\varepsilon>1$ and therefore will not contribute to the residual basis. One also reads that the only basic invariants with $K\neq 0$ are $I_2, I_3$ and $I_7$.}
	\label{tab:octavics_hilbert}
\end{table}
\subsection{Binary Octavics ($2r=8$) \& Principal $(k,4)$ Wavefunctions}
\label{sec_octavic}
Our inability to compute the ansatz constants gets even worse in the case of $\mathbb{Z}_k^{(4)}$-algebras. Despite this, the brief analysis of binary octavics brings to light new oddities we should expect for $r\geq 4$. $\mathcal{B}_8$ has an hsop consisting of invariants $I_2,I_3,I_4,I_5,I_6, I_7$. In addition to the hsop, the Hilbert basis contains invariants $J_8,J_9, J_{10}$. The algebra $\mathcal{B}_8$ can be written as
\begin{equation}
	\mathcal{B}_8 = (1+J_8+J_9+J_{10}+J_8 J_{10})\:\mathbb{C}[I_2, I_3, I_4, I_5, I_5, I_6, I_7]
\end{equation}
This is the first instance that we encounter a situation where we need to keep track of the $J$-invariants appearing on a Hilbert basis. This will be a common sight in $r\geq 4$. The graphic representation of these invariants can be found in Table \eqref{tab:octavics_hilbert}. The residual basis is described in Fig. \ref{fig:octavic_res}.   A general invariant $F$ of even degree $2k$ can now be written in the form
\begin{equation}
	\begin{aligned}
		k!^2 F&=
		\sum_{2\nu_2+4\nu_4+6\nu_{3,3}+6\nu_6+14\nu_{7,7}=2k}
		\alpha^{(0)}_{\nu}I_2^{\nu_2}I_4^{\nu_4}I_{3,3}^{\nu_{3,3}}I_{6}^{\nu_6}I_{7,7}^{\nu_{7,7}}\\
		&+J_8
		\sum_{2\nu_2+4\nu_4+6\nu_{3,3}+6\nu_6+14\nu_{7,7}=2k-8}
		\alpha^{(1)}_{\nu}I_2^{\nu_2}I_4^{\nu_4}I_{3,3}^{\nu_{3,3}}I_{6}^{\nu_6}I_{7,7}^{\nu_{7,7}}\\
		&+I_{3,7}
		\sum_{2\nu_2+4\nu_4+6\nu_{3,3}+6\nu_6+14\nu_{7,7}=2k-10}
		\alpha^{(2)}_{\nu}I_2^{\nu_2}I_4^{\nu_4}I_{3,3}^{\nu_{3,3}}I_{6}^{\nu_6}I_{7,7}^{\nu_{7,7}}+\cdots
	\end{aligned}
\end{equation}
where the ellipsis involves basic invariants that have a co-independence number greater than $1$ and $K$-value zero. Using the tricoloring identity \eqref{tricoloring_identity} the constants $\alpha^{(0)}_{\nu}, \alpha^{(2)}_{\nu}$ with $\nu_4=\nu_5=0$ can be obtained in terms of $t_1, t_2, t_3$. Other $\alpha$-constants can be obtained by computing the specialization $\xi_a(x)$ (and relating the surviving $\alpha$-constants to the central charge and chiral weights). However, many ansatz constants remain undetermined. In particular, since the basic invariants, $I_5, J_9, J_{10}$ have a co-independence number greater than one and a $K$-value zero, we cannot say anything about the $\alpha$-constant of any component involving one of these three invariants.

\begin{rmk}
	Is it perhaps possible that the principal $\mathbb{Z}_k^{(4)}$-wavefunctions have no component involving $I_5, J_9, J_{10}$? In other words, in building the ansatz \eqref{eq_ansatz}, do we need to involve \emph{all} of the basis invariants? Let us briefly discuss the $\mathcal{B}_{10}$-algebra to expand on this question. The invariants of binary decimics has an hsop with degrees $2,4,6,6,8,9,10,14$ \cite{brouwer2010invariants} and a total of 104 basic invariants \cite[p. 40]{olverinv}. If the principal wavefunction were to involve every single one of these invariants, then we would expect $\mathbb{Z}_k^{(5)}$-algebra to have, give or take, $104$ degrees of freedom! We certainly do not expect this to be the case. If our expectation is true, then the ansatz \eqref{eq_ansatz} has to be improved, and we possibly do not need some of the basic invariants.
\end{rmk}

\begin{figure}
	\centering
	\includegraphics{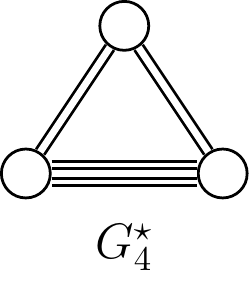}\hspace{15pt}
	\includegraphics{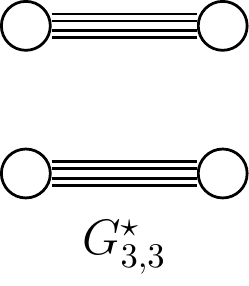}\hspace{15pt}
	\includegraphics{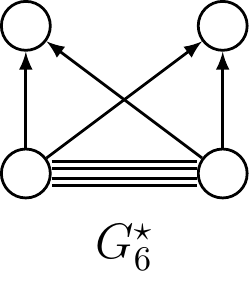}\hspace{15pt}
	\includegraphics{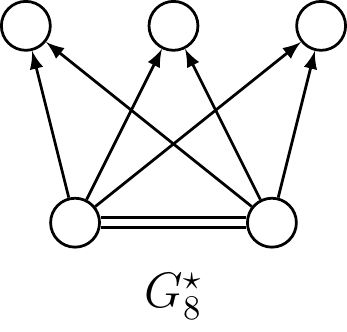}
	
	\includegraphics{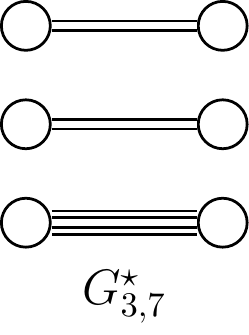}\hspace{15pt}
	\includegraphics{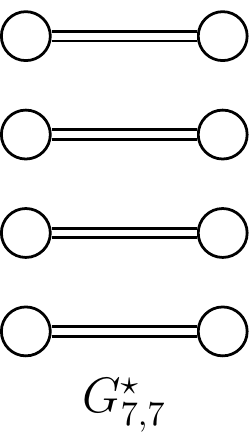}
	
	\caption{The residue of the graphs representing $I_2, I_{4}, I_{3,3}, I_{6}, J_8, I_{3,7}, I_{7,7}$.}
	\label{fig:octavic_res}
\end{figure}

\paragraph{Acknowledgments} I am thankful for insightful discussions with my research supervisor Xiao-Gang Wen. This work is supported by NSF FRG grant:  DMS-1664412.

\appendix
\section{Scaling Dimensions of Zamolodchikov-Fateev Parafermions}
\label{appendix_pattern}
For the purposes of this appendix, we will work with a more general parafermionic CFT. We still have $k$ primary fields $1, \psi, \psi_2, \cdots, \psi_{k-1}$, together with cyclic fusion rules $\psi_a\star \psi_b=\psi_{a+b}$ and charge conjugation symmetry $\psi^+_a=\psi_{k-a}$. However, we allow the scaling dimensions $h_a=h(\psi_a)$ to be general. In other words, we are looking for the most general $h_a$ that would be consistent with an associative solution of the following OPA
\begin{subequations}
	\begin{align}
		&\psi_a(z)\psi_b(w)(z-w)^{\gamma(a,b)}=C_{a,b}\psi_{a+b}(w)+\cdots &(0< a+b<k)\\
		&\psi^+_{a}(z)\psi^+_{b}(w)(z-w)^{\gamma(a,b)}=C_{a,b}\psi^+_{a+b}(w)+\cdots &(0< a+b<k)\\
		&\psi_a(z)\psi_a^+(w)(z-w)^{2h_a}= 1+ 
		{\textstyle\frac{2h_a}{c}}(z-w)^2L(w)+\cdots
	\end{align}
\end{subequations}
Here, $\gamma(a,b)=h_{a}+h_b-h_{a+b}$ is the relative scaling dimension. 
Now, in studying the local properties of wavefunctions, we often find ourselves in need of multi-parafermion OPEs of the form
\begin{equation}
	\Phi_{\vv{a}}(\vv{z})\equiv
	\psi_{a_1}(z_1)\psi_{a_2}(z_2)\cdots \psi_{a_\ell}(z_\ell) \prod_{i<j}(z_i-z_j)^{\gamma(a_i, a_j)}
\end{equation}
where $\vv{a}=(a_1, \cdots, a_\ell)$ is some sequence of integers $0\leq a_i<k$. Due to the simple fusions $\psi_a\star \psi_b=\psi_{a+b}$, the result of $\Phi_{\vv{a}}$ fusion would be $\psi_{|\vv{a}|}$ (where $|\vv{a}|=a_1+\cdots+a_\ell$) and its $\mathcal{W}$-descendants. Therefore, we have
\begin{equation}
	\Phi_{\vv{a}}(\vv{z})= \mathcal{F}_{\vv{a}}(z_1, \cdots, z_\ell)\psi_{a_1+\cdots+a_\ell}\left(\widehat{z}\right)+\cdots
\end{equation} 
Here, $\widehat{z}$ is the weighted center-of-mass $\widehat{z}=(a_1 z_1+\cdots+a_\ell z_\ell)/|\vv{a}|$. With this choice of $\widehat{z}$, the function $\mathcal{F}_{\vv{a}}(z_1, \cdots, z_\ell)$
is a translational invariant homogeneous polynomial. In what follows, we will calculate the degrees $\Delta(\vv{a})=\deg \mathcal{F}_{\vv{a}}$ and relate them to the scaling dimensions $h_a$ and pattern of zeros \cite{Pattern_of_Zeros}.

Before analyzing $\Delta(\vv{a})$, let's make a basic observation. Recall that, with $\gamma=2h-h_2$, the wavefunctions are defined via
\[
\Psi_n (z_1, \cdots, z_{nk})=\frac{1}{g_k^n}\avg{\psi(z_1)\cdots\psi(z_{nk})}\prod_{i<j}(z_i-z_j)^{\gamma}
\]
Let $f(z)=(az+b)/(cz+d)$ be a M\"{o}bius map ($ad-bc\neq 0$). Then, using the transformations of correlation functions
\[
\Psi_n (f(z_1), \cdots, f(z_{nk}))=\prod_{i=1}^{nk}[f'(z_i)]^{N_\phi/2}
\Psi_n (z_1, \cdots, z_{nk})
\]
with $N=nk$ and $N_\phi=N\gamma - (2h+\gamma)$. Thus,  $\gamma$ is the inverse of the ``filling fraction'' (in quotation because we have not proven that $\gamma$ is a rational number yet), and the shift is $\mathcal{S}=2h+\gamma$. Let $d\equiv k\gamma$ and $\mathcal{S}\equiv d-\sigma$. Now since $\Psi_n$ is a polynomial for all $n$, we need $N_\phi=nd-\mathcal{S}$ to be non-negative integer for all $n$ (choose $f(z)=-1/z$ to see this). Thus, we find that $d,\sigma$ are both integers and $0\leq \sigma\leq d$. In particular, $\gamma=d/k$ is rational. Rearranging $2h+\gamma = d-\sigma$, we find
\begin{equation}
	h = \frac{k-1}{2}\gamma - \frac{\sigma}{2}
	\label{eq:app_h}
\end{equation}
We will come back to this equation shortly.

We now begin our analysis of $\Delta(\vv{a})$. Given a sequence $\vv{a}=(a_1, \cdots, a_\ell)$ with $|\vv{a}|=pk+q$ (throughout $p\geq 0$ and $0\leq q<k$), define the \emph{augmentation} of $\vv{a}$  as $\vv{a}'=(a_0=k-q, \vv{a})$. Note that $|\vv{a}'|$ is always a multiple of $k$. Using the multi-parafermion OPE, we find
\begin{equation}
	\mathcal{F}_{\vv{a}'}(w, \vv{z})=\avg{\psi^+_q(w)\Phi_{\vv{a}}(\vv{z})}\prod_{i=1}^\ell (w-z_i)^{\gamma(a_0, a_i)}=
	\frac{\prod_{i=1}^\ell (w-z_i)^{\gamma(a_0, a_i)}}{(w-\widehat{z})^{2h_q}}
	\mathcal{F}_{\vv{a}}(\vv{z})
\end{equation}
Since $\mathcal{F}_{\vv{a}'}$ is a polynomial in $w$, and because it is homogeneous in all of its variables, by looking at its leading diverging term as $w\to \infty$ (on the right-hand-side), we find that
\begin{equation}
	\label{eq:degF2}
	\deg \mathcal{F}_{\vv{a}'} = \sum_{i=1}^\ell \gamma(a_0, a_i)-2h_{|\vv{a}|}+\Delta(\vv{a})
\end{equation}
At the same time, since $\mathcal{F}_{\vv{a}'}(w, \vv{z})$ is a correlation function, using the transformation of Virasoro primary fields under scaling, we find that
\begin{equation}
	\label{eq:degF1}
	\deg \mathcal{F}_{\vv{a}'}= \sum_{i=1}^\ell \gamma(a_0,a_i)+\sum_{1\leq i<j\leq \ell}\gamma(a_i,a_j)
	- h_{|\vv{a}|}-\sum_{i=1}^\ell h_{a_i}
\end{equation}
Combining \eqref{eq:degF1} and \eqref{eq:degF2} yields:
\begin{equation}
	\Delta(\vv{a})=h_{|\vv{a}|}-\sum_{i=1}^\ell h_{a_i}+
	\sum_{1\leq i<j\leq \ell}\gamma(a_i,a_j)\in \mathbb{N}
	\label{eq:Delta}
\end{equation}
The quantity on the right-hand-side is a non-negative integer for all sequences $\vv{a}$. This serves as a constraint on the scaling dimensions.

For now, let us limit ourselves to studying \eqref{eq:Delta} when $\vv{a}=(1^{\times{a}})$. We use the notation $S_a\equiv \Delta(1^{\times a})$. The infinite sequence $S=(S_1, S_2, S_3, \cdots)$ is the pattern of zeros. We now compute $h_a$ in terms of $S_a$. Combining \eqref{eq:app_h} and \eqref{eq:Delta}, we have
\begin{equation}
S_a = h_a - ah+\frac{a(a-1)}{2}\gamma = 
h_a - a\left(
\frac{k-1}{2}\gamma - \frac{\sigma}{2}
\right)+\frac{a(a-1)}{2}\gamma
\end{equation}
First of all, note that $d=k\gamma=S_{k+1}-S_k$. Moreover,
rearranging yields a formula for $h_a$
\begin{equation}
	\label{eq:scaling_dim}
	h_a =\frac{a(k-a)}{2}\gamma-\frac{a\sigma}{2}+S_a
\end{equation}
This formula is supposed to work for any $a$ (not just $0\leq a<k$). There is a series of consistency relations on the pattern of zeros $S$. The determination of scaling dimensions \eqref{eq:scaling_dim} is therefore reduced to finding a ``good'' sequence $S$. We now list the consistency conditions. Most of these conditions, albeit with a different method, have been found a long time ago by Wen \& Wang \cite{Pattern_of_Zeros}. 

\paragraph{First Two Elements} It can be seen from the definition of $h$ that $S_1=0$. Moreover, since by definition $\gamma = \gamma(1,1)=2h-h_2$, a quick check shows that $S_2=0$ as well.

\paragraph{Periodicity} Let $a=pk+q$ with $0\leq q$. Since we have $h_{a}=h_q$, we find
\begin{subequations}
	\begin{align}
		S_{pk+q}&=
		pS_k+S_{q}+\mathcal{S}_k(a)d\\
		\mathcal{S}_k(a)&=\frac{p(p-1)}{2}k+pq
		\label{eq:curlyS}
	\end{align}
\end{subequations}
 In other words, knowing $(\frac{d}{k};S_1=S_2=0, S_3, \cdots, S_k)$ is enough to determine the entire pattern of zeros $S$. In the Wen-Wang terminology, this is referred to as `$k$-cluster condition'.

\paragraph{Charge Conjugation} Let $0\leq a\leq k$ (with convention $h_0=S_0=0$). We know that $h_{k-a}=h_a$ due to duality of $\psi_a$ and $\psi_{k-a}$. Therefore, we must have $S_{k-a}-S_a=S_k-a\sigma$. The condition $\sigma=2S_k/k\in \mathbb{N}$ is the $a=k$ special case of this condition.

\paragraph{Positivity of $\Delta(\vv{a})$ and Convexity} Now that we have obtained the scaling dimensions $h_a$, we can recast the formula for $\Delta(\vv{a})$ \eqref{eq:Delta} in terms of the pattern of zeroes:
\begin{equation}
	\Delta(\vv{a})=S_{|\vv{a}|}-\sum_{I<J}S_{a_I+a_J}+(\ell-2)\sum_{I}S_{a_I}\in \mathbb{N}
\end{equation}
where $\ell$ is the length of $\vv{a}$. The first non-trivial case appears when $\vv{a}=(a,b,c)$. This yields
\begin{equation}
	\begin{aligned}
		\Delta(a,b,c)&=h_{a+b+c}-(h_{a+b}+h_{a+c}+h_{b+c})+h_a+h_b+h_c\\
		&=S_{a+b+c}-(S_{a+b}+S_{a+c}+S_{b+c})+S_a+S_b+S_c\geq 0
	\end{aligned}
\end{equation}
In particular, using $S_1=S_2=0$, we find
\begin{equation}
	\Delta(a-1,1,1)=S_{a+1}-2S_a+S_{a-1}\geq 0
\end{equation}
In other words, $S$ is a \emph{convex sequence}. In particular, with $l_a=S_{a}-S_{a-1}$, we find $l_a\geq l_{a-1}$. Since $l_2=S_2-S_1=0$, we find that $l_a\geq 0$ for all $a\geq 2$. In other words, $S_a$ is a non-decreasing sequence.

\paragraph{Positivity of $D(a,b)$} We can also recast $D(a,b)$ in terms of pattern of zeros:
\begin{equation}
	D(a,b)=ab\gamma - \gamma(a,b)=S_{a+b}-S_a-S_b
\end{equation}
Note that $\Delta(a,b,c)=D(a,b+c)-D(a,b)-D(a,c)$. Moreover $D(a,1)=l_{a+1}\geq 0$.  Thus, $D(a,b+1)\geq D(a,b)+l_{a+1}\geq D(a,b)$. A simple induction now shows that $D(a,b)\geq 0$ for any pair $a,b$.

\paragraph{Intermission: Semi-Locality Conditions} Recall that the commutation factor $\mu_{a,b}$ is defined via the semi-locality condition
\[
\psi_a(z)\psi_b(w)(z-w)^{\gamma(a,b)}=
\mu_{a,b}\psi_b(w)\psi_a(z)(w-z)^{\gamma(a,b)}
\]
In this appendix we will show that $\mu_{a,b}=(-1)^{D(a,b)}$ for a general $\mathbb{Z}_k$ parafermionic theory. Consider the following reformulation of the semi-locality
\[
\psi_a(Z+\zeta/2)
\psi_b(Z-\zeta/2)\zeta^{\gamma(a,b)}=
\mu_{a,b}
\psi_b(Z-\zeta/2)
\psi_a(Z+\zeta/2)
(-\zeta)^{\gamma(a,b)}
\]
Taking the limit $\zeta\to 0$, this equality leads to $C_{a,b}=\mu_{a,b}C_{b,a}$, where $C_{a,b}$ is defined through $\psi_a(z)\psi_b(w)(z-w)^{\gamma(a,b)}=C_{a,b}\psi_{a+b}(w)+\cdots$. This immediately shows that $\mu_{a,a}=1$ (we prohibit $C_{a,a}=0$). Now, consider the (regular) $3$-field
\[
\begin{aligned}
	R(x,y,z) &= \psi_a(x)\psi_b(y)\psi_c(z)(x-y)^{\gamma(a,b)}(x-z)^{\gamma(a,c)}(y-z)^{\gamma(b,c)}\\
	&=\mu_{a,b}\mu_{a,c}
	\psi_b(y)\psi_c(z)\psi_a(x)(y-x)^{\gamma(a,b)}(z-x)^{\gamma(a,c)}(y-z)^{\gamma(b,c)}
\end{aligned}
\]
with $a,b,c$ arbitrary. Taking the limit $y\to z$ we find
\[
\begin{aligned}
	\lim_{y\to z}R(x,y,z) &= C_{b,c}\psi_{a}(x)\psi_{b+c}(z)
	(x-z)^{\gamma(a,b+c)}
	(x-z)^{\Delta(a,b,c)}\\
	&=C_{b,c}\mu_{a,b}\mu_{a,c}\psi_{b+c}(z)\psi_a(x) (z-x)^{\gamma(a,b+c)}(z-x)^{\Delta(a,b,c)}
\end{aligned}
\]
Defining $\widetilde{\mu}_{a,b}=(-1)^{D(a,b)}\mu_{a,b}$ and using $\Delta(a,b,c)=D(a,b+c)-D(a,b)-D(a,c)$ we find
\begin{equation}
\widetilde{\mu}_{a,b+c}=\widetilde{\mu}_{a,b}\widetilde{\mu}_{a,c}
\end{equation}
Furthermore, since $D(1,1)=S_2-2S_1=0$, we have $\widetilde{\mu}_{1,1}=1$. Choosing $a=c=1$, the above identity inductively leads to $\mu_{1,a}=1$. At the same time, since $\mu_{b,a}=1/\mu_{a,b}$, we also find $\widetilde{\mu}_{b+c,a}=\widetilde{\mu}_{b,a}\widetilde{\mu}_{c,a}$. Choosing $c=1$ in the latter identity, we inductively find that $\widetilde{\mu}_{a,b}=1$ for all $a,b$. In other words, $\mu_{a,b}=(-1)^{D(a,b)}$. In particular, since the identity commutes with everything, we have $1=\mu_{1,k}=(-1)^{D(1,k)}$. In other words, $d=S_{k+1}-S_k=D(1,k)$ has to be even. Throughout the main text, we have used the notation $d=2r$. Here, however, we will keep working with $d$.

\paragraph{Evenness of $D(a,a)$} The fact that $\mu_{a,a}=1$, together with the general formula $\mu_{a,a}=(-1)^{D(a,a)}$, forces $D(a,a)=S_{2a}-2S_a$ to be even. In other words, $S_{2a}$ needs to be even. Now, since $d=S_{k+1}-S_k$ is even, and either $k$ or $k+1$ is even, we find that both $S_k, S_{k+1}$ are even.

\paragraph{Total Angular Momenta on Sphere} As far as transformations of $\mathcal{F}_{\vv{a}'}$ under M\"{o}bius maps go, so far, we have discussed translations and scaling. To complete the discussion on the effects of conformal symmetry, we also need to consider the M\"{o}bius transformation $z\to -1/z$.  Under this transformation we find
\begin{equation}
	\label{inversion}
	\mathcal{F}_{\vv{a}'}(z_0, z_1, \cdots, z_\ell)=
	\prod_{I=0}^\ell z_I^{\delta_{I}(\vv{a})}
	\mathcal{F}_{\vv{a}'}\left(-\frac{1}{z_0}, -\frac{1}{z_1}, \cdots, -\frac{1}{z_\ell}\right)
\end{equation}
where
\begin{equation}
	\delta_I(\mathbf{a}) = \sum_{J\neq I}\gamma(a_I, a_J)-2h(a_I)
	\label{eq:delta_constraint_eq}
\end{equation}
Since $\mathcal{F}_{\vv{a}'}$ is a translational invariant polynomial, the quantity $\delta_I(\vv{a})$  is nothing but the degree of $z_I$ in $\mathcal{F}_{\vv{a}'}$. Thus, we must have $\delta_I(\mathbf{a})\geq 0$. To our knowledge, this condition has not been reported in the literature. Recasting this condition in terms of pattern of zeros, we get a reformulation: Let $\vv{b}$ be any sequence such that $|\vv{b}|=nk-a$ for some $n$ and $a$. Then
\begin{equation}
	\sum_{i}D(a,b_i)\leq D(a, nk-a)
\end{equation}
The quantity $D(a,nk-a)$ is $2J_a$, where $J_a=\frac{1}{2}aN_\phi(n)-S_a$ is the maximum possible total angular momentum of a bound state of $a$ particles (in a system of size $n$). Here $N_\phi(n)=(n-1)d+\sigma$ is the number of flux quanta.

\section{Basic Definitions of Chiral Algebra}
\label{appendix_chiral}
\paragraph{Quasiprimary Fields and Field-Space $\mathcal{W}$} Let $L(z)$ be the (chiral) energy-momentum tensor in a conformal field theory. This is the conserved current corresponding to the conformal symmetry. We use the convention $L(z)=\sum_n z^{-n-2} L_n$ for the mode expansion of $L(z)$. The modes of $L(z)$ generate a copy of the Virasoro algebra $\mathfrak{Vir}$; i.e.
\begin{equation}
	[L_m, L_n] = (m-n)L_{m+n}+\frac{m(m-1)(m+1)}{12}c\:\delta_{m+n,0}
\end{equation}
The triple $L_{-1}, L_0, L_1$ generated the global conformal transformations. A holomorphic field $\phi(z)$ is \emph{quasi-primary} with \emph{scaling dimension} $\Delta\in \mathbb{N}$ if
\begin{equation}
	[L_n, \phi(z)]=\Delta (n+1)z^n\phi(z)+ z^{n+1}\partial\phi(z)
\end{equation}
for $n=-1,0,1$, and \emph{Virasoro primary} if the commutation relations hold for all $n\geq -1$. The space of all \emph{holomorphic quasi-primary fields with integer scaling dimension} in a CFT is denoted by $\mathcal{W}$. Both identity and $L$ belong to $\mathcal{W}$. In general, the conserved current associated to any continuous symmetry of the system lies in $\mathcal{W}$. We call $\mathcal{W}$ the field space of the \emph{chiral (symmetry) algebra}. Throughout this section, let $I$ be an index set and $\{W^i\}_{i\in I}$  a basis for $\mathcal{W}$.

\paragraph{OPE Ansatz of Quasi-primary Fields} Here we will review the general OPE ansatz of quasi-primary fields \cite{nahm}. Given any triple of holomorphic quasi-primary fields $\phi^i, \phi^j, \phi^k$ (not necessarily with integer dimensions), we define $\Delta(ijk)=\Delta_i+\Delta_j-\Delta_k$. Furthermore, $d_{ij}$ and $C_{ijk}$ are defined through the following correlation functions:
\begin{equation}
	\avg{\phi^i(z_i)\phi^j(z_j)}\equiv\frac{d_{ij}\delta_{\Delta_i, \Delta_j}}{z_{ij}^{2\Delta_i}}, \qquad
	\avg{\phi^i(z_i)\phi^j(z_j)\phi^k(z_k)}\equiv \frac{C_{ijk}}{z_{ij}^{\Delta(ijk)}
		z_{jk}^{\Delta(kji)}
		z_{ik}^{\Delta(kij)}
	}
\end{equation}
where $z_{ij}\equiv z_i-z_j$. With these notations, the OPE of two quasi-primary fields has the general form
\begin{equation}
	\phi^i(z)\phi^j(w)=\sum_{k}C_{ij}^k (z-w)^{-\Delta(ijk)}
	\sum_{n\geq 0}\frac{(z-w)^n}{n!}\frac{(\Delta(kij))_n}{(2\Delta_k)_n}\partial^n \phi^k(w)
	\label{OPE_Nahm}
\end{equation}
where $C_{ijk}:=C_{ij}^ld_{lk}$ (Einstein summation is used), the $k$ sum is over all quasi-primaries in the CFT, and $(x)_a=x(x+1)\cdots(x+a-1)$ is the Pochhammer symbol.

\paragraph{Chiral Algebra} Let $W$ stand for a field in $\mathcal{W}$ with scaling dimension $\Delta$. The mode expansion of $W$ is as follows: $W(z)=\sum_n z^{-n-\Delta}W_n$. Let $W^i$ be a basis for $\mathcal{W}$ and $\Delta_i$ the scaling dimension of $W^i$. The OPEs \eqref{OPE_Nahm} endow the modes of $\mathcal{W}$ with the structure of a infinite dimensional Lie algebra:
\begin{align}
	[W^i_m, W^j_n]&=\sum_{k}
	\theta(\Delta^{k}_{ij})\,
	C_{ij}^k
	\,
	\mathfrak{P}_{\Delta_i,\Delta_j}^{\Delta_k}(m,n)
	\,
	W^k_{m+n}
	+
	d_{ij}\delta_{m,-n}\binom{m+\Delta_i-1}{2\Delta_i-1}
	\\
	\mathfrak{P}_{\Delta_i,\Delta_j}^{\Delta_k}(m,n)&=
	\sum_{r+s=\Delta(ijk)-1}
	\frac{(-1)^{s}}{r!s!}\frac{
		(\Delta(kji)-2)_r
		(m+1-\Delta_i)_r
		(\Delta(kij)-2)_s
		(n+1-\Delta_j)_s}{(2\Delta_k)_{r+s}}
\end{align}
where $k$ runs over the all the basis fields $W^k$ in $\mathcal{W}$. Here, $\theta(x)$ is the step function. The field space, together with this Lie structure, is called the \emph{chiral algebra}. Assuming the field space is fully known, once the structure constants $d_{ij}$ and $C_{ij}^k$ are determined, the chiral algebra is fixed.

\paragraph{Triangular Decomposition of $\mathcal{W}$}  Let $\mathcal{W}^{\pm}=\mathrm{span}\{W_n^i\mid i\in I,  \pm n>0\}$ and $\mathcal{W}^{0}=\mathrm{span}\{W_0^i\mid i\in I\}$. We have a vector space decomposition $\mathcal{W}=\mathcal{W}^-\oplus \mathcal{W}^0\oplus \mathcal{W}^+$. Let $\mathcal{W}_\circ$ denote the maximal abelian subalgebra of $\mathcal{W}$. Then $\mathcal{W}_\circ\subseteq \mathcal{W}^0$ and $L_0\in \mathcal{W}_\circ$. Moreover, there exists subspaces $\mathcal{W}_{\pm}\subseteq \mathcal{W}^0\oplus \mathcal{W}^{\pm}$ such that $\mathcal{W}=\mathcal{W}_-\oplus \mathcal{W}_\circ \oplus \mathcal{W}_+$. This is a triangular decomposition of $\mathcal{W}$, i.e., $[\mathcal{W}_{\pm}, \mathcal{W}_{\circ}\oplus \mathcal{W}_{\pm}]\subseteq \mathcal{W}_{\pm}$.

\paragraph{State-Field Correspondence} Let $\ket{0}$ stand for the vacuum state. If $W\in \mathcal{W}$ has scaling dimension $\Delta$, then for all $n>-\Delta$ we have $W_n\ket{0}=0$. The state corresponding to $W(z)$ is defined as $\ket{W}=\lim_{z\to 0}W(z)\ket{0}=W_{-\Delta}\ket{0}$, while the field $W(z)$ corresponding to state $\ket{W}$ is the vertex-operator $W(z)=Y(\ket{W};z)$. The \emph{state-field correspondence} allows us to think of $\mathcal{W}$ as quasi-primary descendants of identity. Generalizing this idea, for any holomorphic field $\phi(z)$ in the CFT (not necessarily with integer scaling dimension), one defines the asymptotic state $\ket{\phi}$ as $\ket{\phi}=\lim_{z\to 0}\phi(z)\ket{0}$, where $\ket{0}$ is the vacuum state. The field corresponding to the state $\ket{\phi}$ is defined as a vertex operator $\phi(z)=Y(\ket{\phi},z)$.

\paragraph{Highest Weight Representations} A representation $\mathcal{H}_\phi$ of the Lie algebra $\mathcal{W}$ is called \emph{highest weight} if there exists unique vector $\ket{\phi}\in M$ and a functional $\mathfrak{h}_\phi: \mathcal{W}_\circ\to \mathbb{C}$ such that
\begin{enumerate}
\item $\mathcal{W}_+\ket{\phi}=0$, i.e. for any operator $O\in \mathcal{W}_+$ we have $O\ket{\phi}=0$.
\item The abelian subalgebra $\mathcal{W}_\circ$ acts diagonally on $\ket{\phi}$ via $\mathfrak{h}_\phi$. Concretely, let $W\in \mathcal{W}$ be such that $W_0\in \mathcal{W}_\circ$. Then $W_0\ket{\phi}=h_\phi^W\ket{\phi}$. The value $h_\phi^W$ is called the \emph{$W$-weight} of $\phi$
\end{enumerate}
One may construct a basis for $\mathcal{H}_{\phi}$ by repeated application of operators in $\mathcal{W}_-$ to $\ket {\phi}$. We say the field $\phi(z)$ corresponding to $\ket{\phi}$ is $\mathcal{W}$-primary. All other fields corresponding to vectors in $\mathcal{H}_{\phi}$ are a $\mathcal{W}$-descendant of $\phi(z)$. Any $\mathcal{W}$-primary field $\phi(z)$ is Virasoro primary with scaling dimension being the $L$-weight of $\phi$.

\paragraph{qNOPs \& Simple Fields}
For purposes of fractional quantum Hall effect (rational CFTs), we are only interested in chiral algebras $\mathcal{W}$ that are \emph{finitely generated}.
To explain what this means, we need to define the \emph{quasi-primary normal order product (qNOP)} \cite{nahm}. Going back to the OPE \eqref{OPE_Nahm}, we define $\mathcal{N}(\partial^s W^i, W^j)$ via
\begin{equation}
	\begin{aligned}
		W^i(z)W^j(w)&=\sum_{s=-(\Delta_i+\Delta_j)}^{-1}
		(z-w)^{s}
		\sum_k\delta_{-s, \Delta(ijk)}
		C_{ij}^k
		\sum_{n\geq 0}\frac{(z-w)^n}{n!}\frac{(2\Delta_i+s)_n}{(2\Delta_i+2\Delta_j + 2s)_n}\partial^n W^k(w)
		\\
		&+\sum_{s\geq 0}
		\frac{(z-w)^s}{s!}
		\sum_{n\geq 0} \frac{(z-w)^{n}}{n!} \frac{(2\Delta_i+s)_n}{(2\Delta_i+2\Delta_j + 2s)_n}\partial^n \mathcal{N}(\partial^s W^i, W^j)(w)
	\end{aligned}
\end{equation}
In general, $\mathcal{N}(\partial^s W^i, W^j)=(\partial^s W^i, W^j)+\cdots$, where $(\partial^s W^i, W^j)$ is the usual normal order product (NOP). The qNOP is a modification of NOP so that $\mathcal{N}(\partial^s W^i, W^j)$ is quasi-primary. A field of the form $\mathcal{N}(A,B)\in \mathcal{W}$ will be called a \emph{qNOP}. A field $W\in \mathcal{W}$ is called \emph{simple} if $d_{WX}=0$ for any qNOP field $X$. The only simple field that is not Virasoro primary is the energy-momentum tensor $L$. It can be shown that the commutation relations of the chiral algebra are fixed once the commutation between the simple fields are fixed. We say $\mathcal{W}$ is finitely generated if it has a finite number of simple fields.

\paragraph{Chiral Algebra of Parafermionic Theories} Let $\mathcal{W}$ be the chiral algebra of a parafermionic current algebra. In other words, $\mathcal{W}$ consists of mutually local bosonic conserved currents that describe the extended symmetry. This paragraph aims to point out some simple facts about $\mathcal{W}$. Firstly, recall that part of the definition of the parafermionic theories is the OPEs of the following form
\begin{equation}
	\psi_a(z)\psi_a^+(w)(z-w)^{2h_a} = 1 + \frac{2h_a}{c}(z-w)^2 L(w) + O(z-w)^3
\end{equation}
The fields in the chiral algebra $\mathcal{W}$ are the quasi-primary fields appearing in the above OPEs (that are not null). Looking closely at the OPE, there is no conserved current at dimension one, and the only quasi-primary at dimension two is the energy-momentum tensor. 

\paragraph{Chiral Algebra of $\mathbb{Z}_k^{(r)}\times U(1)$} Let $\phi$ be the free boson of a $U(1)$ theory and $\mathcal{W}_{\text{full}}$ be the chiral algebra of $\mathbb{Z}_k^{(r)}\times U(1)$. With $\mathcal{W}$ being the chiral algebra of the parafermionic part, we have $\mathcal{W}\subset \mathcal{W}_{\text{full}}$ a quotient of $\mathcal{W}_{\text{full}}$.  In addition, $\mathcal{W}_{\text{full}}$ contains (at least) five Virasoro primary chiral fields: $j(z)=i\partial \phi(z)$ with dimension one, the `electron' $V_e$ and `anti-electron' $V_e^+$ have scaling dimension $r$
\begin{equation}
V_e(z) =\psi(z) :\exp \left(i \sqrt{2r/k}\;\phi(z)\right):, \quad
V^+_e(z) =\psi^+(z) :\exp \left(-i \sqrt{2r/k}\;\phi(z)\right):
\end{equation}
and $\Gamma^{\pm}(z)=:\exp [\pm i \sqrt{2kr}\;\phi(z)]:$ with dimension $kr$. The three fields $j, \Gamma^{\pm}$ are the generators of the chiral algebra of the $U(1)$ theory. Let us also mention that, when $r=1$, the fields $j, V_e, V_e^+$ generate $\widehat{\mathfrak{su}}(2)_k$. In fact, Zamolodchikov-Fateev \cite{Zamolodchikov_Fateev_Parafermion} have shown that $\mathbb{Z}_k^{(1)}=\widehat{\mathfrak{su}}(2)_k/U(1)$.

\section{Multi-Parafermion OPE Ansatz}
\label{appendix_OPE}
In this appendix we will develop an ansatz for the full expansion of the $a$-field ($a=pk+q$, $p\geq 0$, $0<q\leq k$)
\begin{equation}
	\Phi_a(z_1, \cdots, z_a)=\frac{1}{g_k^pg_q}\psi(z_1)\cdots\psi(z_a)\prod_{i<j}(z_i-z_j)^{\gamma}
\end{equation}
Our method will be a direct generalization of the ansatz Nahm et. al. \cite{nahm} introduced for the OPE of quasi-primaries: Given two quasi-primary fields $\phi_i, \phi_j$ with scaling dimensions $d_i, d_j$, the OPE $\phi_i\phi_j$ is of the following form
\begin{equation}
	\phi_i(z)\phi_j(w)=\sum_{k}C_{ij}^k(z-w)^{d_k-d_i-d_j}\sum_{n\geq 0}\frac{(z-w)^{n}}{n!}\frac{(d_k+d_i-d_j)_n}{(2d)_n}\partial^n \phi_k(w)
\end{equation}
with $k$ running over all quasi-primary fields in the theory. The determination of the OPE is then reduced to finding the structure constants $C_{ij}^k$.

Let $\mathcal{Q}(q;s)$ be the space of quasi-primary $\mathcal{W}$-descendants of $\psi_q$, with scaling dimension $h_q+s$. Suppose $\{\psi_{q;s;i}\}$ is a basis for $\mathcal{Q}(q;s)$. Note that the charge conjugation symmetry $\psi_q\leftrightarrow \psi_q^+=\psi_{k-q}$ provides an isomorphism 
$\mathcal{Q}(q;s)\simeq \mathcal{Q}(k-q;s)$. In the case $k=q$, by convention, we define $\mathcal{Q}(0;s)= \mathcal{Q}(k;s)$ and understand ``charge conjugation'' as the identity map. We define $\psi^+_{q;s;i}\in \mathcal{Q}(k-q;s)$ as the isomorphic image of $\psi_{q;s;i}\in \mathcal{Q}(q;s)$. With these conventions, consider the ansatz
\begin{equation}
	\Phi_a(z_1, \cdots, z_a) = \sum_{s\geq 0}\sum_i
	\sum_{n\geq 0}
	\chi^{(a)}_{n;s;i}(z_1, \cdots, z_{a})\partial^n\psi_{q;s;i}(\widehat{z})
\end{equation}
where, as usual $\widehat{z}=(z_1+\cdots+z_a)/a$, and  $\chi^{(a)}_{n;s;i}(z_1, \cdots, z_{a})$ is a translational invariant symmetric polynomial, homogeneous of degree $2r\mathcal{S}_k(a)+s+n$. Since $\dim\mathcal{Q}(q,s=0)=1$, in that case we omit the indexes $s,i$ and write $\chi_n^{(a)}$. Now define
\begin{subequations}
	\begin{align}
		F^{(a)}_{s;i}(w;z_1, \cdots, z_a)&=\frac{1}{g_k^pg_q}\avg{\psi_{q;s;i}^+(x)\psi(z_1)\cdots\psi(z_a)}\prod_{i<j}(z_i-z_j)^{\gamma}\prod_i (w-z_i)^{(k-q)\gamma +s}\\
		Y^{(a)}_{s;i}(z_1, \cdots, z_a)&=
		\frac{1}{g_k^pg_q}\avg{\psi_{q;s;i}^+(\infty)\psi(z_1)\cdots\psi(z_a)}\prod_{i<j}(z_i-z_j)^{\gamma}
	\end{align}
\end{subequations}
Both $F$ and $Y$-functions are homogeneous translational invariant polynomials in their arguments, and are symmetric in the $z$-variables. For simplicity of notation, let us drop $s,i,a$ for the moment. Using the $f(z)=-1/z$ M\"{o}bius transformation on $F$, we find
\begin{equation}
	F(w;z_1, \cdots, z_a)=w^{d}\prod_{j=1}^a z_j^{2pr+s}
	F\left(-\frac{1}{w};-\frac{1}{z_1}, \cdots, -\frac{1}{z_a}\right)
\end{equation}
with $d=2p(k-q)r+(a-2)s$. In other words, $\deg_w F=d$ and $\deg_{z_i}F=2pr+s$. Moreover, using the scaling dimensions, we find that $F$ is homogeneous of degree
\begin{equation}
	\deg F =\frac{d+a(2pr+s)}{2}
	=
	2r\mathcal{S}_k(a)+s+d
\end{equation}
Now note that $Y^{(a)}=\lim_{w\to \infty}w^{-d}F^{(a)}$. Thus, $\deg Y^{(a)}=\deg F - d = 2r\mathcal{S}_k(a)+s$. We would like to find $F$ from $Y$. To this end, expand $F$ in powers of $w$
\[
F(w,\vv{z})=\sum_{m=0}^{d}\frac{w^{d-m}}{m!}Y_m(\vv{z})
\]
for some polynomials $Y_m$. Define the differential operators $\nabla := -w^2\partial_w + dw+\ell_+$, and $\ell_+:=-D_++(2pr+s)(z_1+\cdots+z_a)$, and $D_+=\sum_i z_i^2\partial_i$. Using conformal Ward identities, it is straightforward to show $\nabla F=0$. This then results in the recursive relations
\begin{equation}
	Y_{m+1}=-\ell_+Y_m, \qquad \ell_+Y_d=0
\end{equation}
Combined with the fact that $Y_0\equiv Y$, we find that
\begin{equation}
	\begin{aligned}
		F(w;z_1, \cdots, z_a)&=w^d\exp\left[-\frac{\ell_+}{w}\right]Y(\vv{z})=w^d\prod_{j=1}^a \left(1-\frac{z_j}{w}\right)^{2pr+s}\exp\left[\frac{D_+}{w}\right]Y(z_1, \cdots, z_a)\\
		&=w^d\prod_{j=1}^a\left(1-\frac{z_j}{w}\right)^{2pr+s}Y\left(\frac{z_1}{1-\frac{z_1}{w}},\cdots, \frac{z_a}{1-\frac{z_a}{w}}\right)\\
		&=\prod_{j=1}^a (w-z_j)^{2pr+s}Y\left(\frac{1}{w-z_1}, \cdots, \frac{1}{w-z_a}\right)
	\end{aligned}
\end{equation}
To see the validity of the above identities, let $P(x)$ be a one-variable polynomial of degree $m$. Then, it is straightforward to see that
\begin{equation}
	\exp(\tau x^2d/dx)P(x) = P\left(\frac{x}{1-\tau x}\right), \qquad 
	\exp(\tau [mx-x^2d/dx])P(x) = (1+\tau x)^m P\left(\frac{x}{1+\tau x}\right)
\end{equation}
The identities we have used are $N$-variable generalizations of these relations.

In order to relate $F,Y$ functions to the polynomials in the ansatz of $\Phi_a$, we need one more transformation; namely
\begin{equation}
	F_{s;i}(w; \vv{z})=F_{s;i}(w-\widehat{z}; \vv{z}-\widehat{z}) = (w-\widehat{z})^d\prod_{j=1}^a \left(1-\frac{z_j-\widehat{z}}{w-\widehat{z}}\right)^{2pr+s}
	\sum_{l\geq 0}
	\frac{1}{l!}\frac{(D_+^l Y_{s;i})(\vv{z}-\widehat{z})}{(w-\widehat{z})^l}
\end{equation}
Now, by using the ansatz in the original definition of $F_{s;i}$, we have
\begin{align*}
	F^{(a)}_{s;i}(w;z_1, \cdots, z_a)&=\frac{1}{g_k^pg_q}\avg{\psi_{q;s;i}^+(x)\psi(z_1)\cdots\psi(z_a)}\prod_{i<j}(z_i-z_j)^{\gamma}\prod_i (w-z_i)^{(k-q)\gamma +s}\\
	&=
	\prod_i (w-z_i)^{(k-q)\gamma +s}
	\sum_{t\geq 0}\sum_j
	\sum_{n\geq 0}
	\chi^{(a)}_{n;t;j}(\vv{z})\partial^n
	\avg{\psi_{q;s;i}^+(w)\psi_{q;t;j}(\widehat{z})}\\
	&=
	(w-\widehat{z})^{d}
	\prod_i \left(1-\frac{z_i-\widehat{z}}{w-\widehat{z}}\right)^{(k-q)\gamma +s}
	\sum_j d^{[q;s]}_{ij}
	\sum_{n\geq 0}
	\chi^{(a)}_{n;t;j}(\vv{z})
	\frac{(2h_q+2s)_n}{(w-\widehat{z})^{n}}
\end{align*}
where we have defined $d^{[q;s]}_{ij}=\avg{\psi_{q;s;i}^+(\infty)\psi_{q;t;j}(0)}$. To finish the job, note that
\[
\prod_{i}\left(1-\tau x_i\right)^{-d}= \sum_{m\geq 0}\tau^m \sum_{|\lambda|=m}\frac{(d)_{\lambda}}{\lambda!}m_{\lambda}(x_i)
\]
where, the second sum is over all partitions of $m$. For a partition $\lambda=(\lambda_1, \lambda_2, \cdots)$, we have defined $\lambda!=\prod_{j}\lambda_j!$ and $(d)_{\lambda}=\prod_j (d)_{\lambda_j}$. Also, $m_{\lambda}$ are the symmetric monomials. Putting everything together, we find
\begin{equation}
	\sum_j d^{[q;s]}_{ij}
	\chi^{(a)}_{n;s;j}(\vv{z})
	=
	\frac{1}{(2h_q+2s)_n}
	\sum_{l+|\lambda|=n}\frac{(2r-a\gamma)_\lambda}{\lambda!l!}m_{\lambda}(\vv{z}-\widehat{z})(D^l_+Y^{(a)}_{s;i})(\vv{z}-\widehat{z})
\end{equation}
In other words, determining $Y^{(a)}_{s;i}\in \mathcal{T}_a^{2r\mathcal{S}_k(a)+s}$ fixes the entire OPE. The polynomials $Y^{(a)}_{s;i}$ play a similar role the three point correlations play in the case of two-point OPEs. Determination of $Y^{(a)}_{s;i}$ itself often boils down to relating its specializations to other data of the CFT.

\section{The Field Corresponding to the State $\ket{\partial;W,a}$}
\label{appendix_null}
Let $W\in \mathcal{W}$ have spin $s$, $\ket{a}=\lim_{z\to 0}\psi_a(z)\ket{0}$ and define $\ket{\partial; W,a}=N(2W_{-1}L_0 - sL_{-1}W_0)\ket{a}$, with $N$ is an appropriate normalization. Let us symbolically denote the field by $\partial_W\psi_a$. Note that since $W$ is quasi-primary $[L_1, W_n]:=(s-n-1)W_{n+1}$ and 
$[L_0, W_n]:=-nW_{n}$. Thus,
in particular, $[L_1, W_{-1}]:=sW_{0}$ and $[L_0, W_0]=0$. This can be used to show that $L_1\ket{\partial;W,a}=0$. Furthermore, $L_2\ket{\partial;W,a}$ lies in $\mathcal{W}_+\ket{a}$ which is zero since $\ket{a}$ is $\mathcal{W}$-primary. This means $\partial_W\psi_a$ is Virasoro primary. We want to show $\partial_W\psi_a$ is (at least) ``effectively'' null; i.e. $\partial_W\psi_a$ decouples from the parafermions. Now, either $\ket{\partial;W,a}=0$ (in which case we are done), or $N$ can be chosen so that
\[
\braket{\partial; W,-a}{\partial; W,a}=1
\]
This normalization amounts to $\avg{\partial_W\psi_a^+(\infty)\,\partial_W\psi_a(0)}\equiv d^W_{-a,a}=1$. Let $a+b\leq k$ and consider the block
\[
\rho_{a,b}(x)=\avg{\psi_b^+(\infty)\psi_a^+(1)\psi_a(x)\psi_b(0)}(1-x)^{2h_a}x^{ab\gamma}
\]
To compute this we, first make the observation that
\[
\avg{\psi_a(x)\psi_b(y)\partial_W\psi_{a+b}^+(z)}=\frac{C_{a,b}^{\partial_W \psi_{a+b}}}{(x-y)^{\gamma(a,b)-1}(x-z)^{\gamma(-a,a+b)+1}(y-z)^{\gamma(-b,a+b)+1}}
\]
Taking the complex conjugate of the above relation, with $\psi_a^\dagger(x)=\bar{x}^{-2h_a}\psi_a(1/\bar{x})$, we find
\[
C_{a,b}^{\partial_W\psi_{a+b}} C_{-b,-a}^{\partial_W\psi^+_{a+b}} = 
C_{a,b}^{\partial_W\psi_{a+b}} C_{b,a}^{\partial_W\psi_{a+b}} = |C_{a,b}^{\partial_W\psi_{a+b}}|^2
\]
Using this, and computing $\rho_{a,b}(x)$ to first order in $x$, it can be shown that
\[
\rho_{a,b}(x) = C_{a,b}^2 - \frac{2rab}{a+b}C_{a,b}^2 x + x\sum_{k} |C_{a,b}^{\partial_{W_k}\psi_{a+b}}|^2+O(x^2)
\]
where $k$ goes over a basis of $\mathcal{W}$.
\begin{lem}
$\rho_{a,b}(x)=C_{a,b}^2(1-2rab/(a+b)x)+O(x^2)$. Consequently, $C_{a,b}^{\partial_{W_k}\psi_{a+b}}=0$. The latter means that none of $\partial_W\psi_{a}$ can appear in the OPEs between the parafermions $\psi_a$.
\end{lem}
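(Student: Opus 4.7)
My plan is to pin down the coefficient of $x$ in $\rho_{a,b}(x)$ by an independent route and compare with the Nahm expansion already carried out in the excerpt. The target identity is
\[
\rho_{a,b}(x) = C_{a,b}^{2} - \frac{2rab}{a+b}\,C_{a,b}^{2}\,x + O(x^{2}),
\]
which, set beside the text's expression
\[
\rho_{a,b}(x) = C_{a,b}^{2} - \frac{2rab}{a+b}\,C_{a,b}^{2}\,x + x\sum_{k}\bigl|C_{a,b}^{\partial_{W_{k}}\psi_{a+b}}\bigr|^{2} + O(x^{2}),
\]
forces the manifestly nonnegative sum $\sum_{k}\bigl|C_{a,b}^{\partial_{W_{k}}\psi_{a+b}}\bigr|^{2}$ to vanish. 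Since each summand is nonnegative, every $C_{a,b}^{\partial_{W_{k}}\psi_{a+b}}$ must be zero, and this is precisely the decoupling of $\partial_{W}\psi_{a}$ from the parafermionic OPEs claimed in the lemma.

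To produce the target expression I would exploit that $\psi_{a+b}$ is the \emph{unique} $\mathcal{W}$-primary appearing in the fusion $\psi_{a}\star \psi_{b}$, so that the entire $s$-channel $\mathcal{W}$-conformal block is generated by $\mathcal{W}$-descendants of $\ket{\psi_{a+b}}$. At level zero only $\psi_{a+b}$ itself contributes, with three-point function $\avg{\psi_{b}^{+}(\infty)\psi_{a}^{+}(1)\psi_{a+b}(0)} = C_{a,b}$. At level one, the Virasoro descendant $\partial\psi_{a+b}=L_{-1}\psi_{a+b}$ enters with Nahm coefficient $\frac{a}{a+b}C_{a,b}$, and differentiating the explicit three-point form $\avg{\psi_{b}^{+}(\infty)\psi_{a}^{+}(1)\psi_{a+b}(z_{0})} = C_{a,b}/(1-z_{0})^{a(k-a-b)\gamma}$ at $z_{0}=0$ supplies $\avg{\psi_{b}^{+}(\infty)\psi_{a}^{+}(1)\partial\psi_{a+b}(0)} = a(k-a-b)\gamma\,C_{a,b}$. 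For the candidate quasi-primary $\partial_{W}\psi_{a+b}$, I would use the decomposition $W_{-1}\ket{\psi_{a+b}} = \frac{1}{2h_{a+b}}\ket{\partial;W,a+b} + \frac{\Delta_{W}h^{W}_{a+b}}{2h_{a+b}}L_{-1}\ket{\psi_{a+b}}$ together with the $W$-Ward identity that fixes the total coefficient of $W_{-1}\psi_{a+b}$ in $\psi_{a}(z)\psi_{b}(w)$ in terms of $C_{a,b}$ and the weights $h_{a}^{W},h_{b}^{W},h_{a+b}^{W}$; subtracting off the $L_{-1}$-piece leaves the residual $\partial_{W}\psi_{a+b}$ three-point contribution equal to zero. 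Combining the surviving $\partial\psi_{a+b}$ contribution with the $(1-x)^{2h_{a}}$ prefactor of $\rho_{a,b}$, and simplifying via $h_{a}=a(k-a)\gamma/2$ and $k\gamma=2r$, produces exactly $-\frac{2rab}{a+b}C_{a,b}^{2}\,x$ at order $x$.

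The main obstacle is executing the $\mathcal{W}$-Ward identity step universally: explicit commutation relations of the full $\mathcal{W}$-algebra of $\mathbb{Z}_{k}^{(r)}$ are not available for generic $k,r$, so the argument cannot lean on particular structure constants and must rely on universal Ward-identity content. The crucial subtlety is that Nahm's ansatz treats $\partial_{W}\psi_{a+b}$ as an independent quasi-primary carrying a free coefficient, whereas once it is recognized as a $\mathcal{W}$-descendant of the primary $\psi_{a+b}$, its coefficient is a derived quantity pinned to zero by the Ward identity. As an independent sanity check I would verify the statement first in $\mathbb{Z}_{k}^{(1)}$, where $\Psi_{2}$ is known explicitly and $\rho_{a,b}(x)$ can be read off as the leading $w$-coefficient of $\lim_{w\to\infty}w^{-2r(k-b)}\Psi_{2}(w^{\times k-b},1^{\times k-a},x^{\times a},0^{\times b})$; the universality of the formula then follows from the polynomiality of the chiral weights established in \S\ref{sec_poly_chiral}.
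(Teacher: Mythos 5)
Your overall logic is the same as the paper's: compute the $O(x)$ coefficient of $\rho_{a,b}(x)$ by an independent route, compare with the Nahm expansion $\rho_{a,b}(x)=C_{a,b}^2-\frac{2rab}{a+b}C_{a,b}^2x+x\sum_k|C_{a,b}^{\partial_{W_k}\psi_{a+b}}|^2+O(x^2)$, and let positivity kill the sum of squares term by term. The gap is in your ``independent route.'' You propose to fix the coefficient of $W_{-1}\psi_{a+b}$ by a $W$-Ward identity and then observe that subtracting the $L_{-1}$ piece leaves zero. But for a chiral field $W$ of spin $\Delta\geq 3$ the global modes $W_n$ ($|n|<\Delta$) give only $2\Delta-1$ identities on a three-point function, while the OPE $W(z)\psi_a(w)$ injects $\Delta-1$ unknown descendant couplings on each leg; the system does not close on $C_{a,b}$ and the weights $h_a^W,h_b^W,h_{a+b}^W$ alone. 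The coefficient $C_{a,b}^{\partial_W\psi_{a+b}}$ of the level-one quasi-primary is an independent dynamical structure constant, not a kinematic one, so asserting that ``subtracting off the $L_{-1}$-piece leaves zero'' is assuming exactly what the lemma is supposed to prove. You flag this obstacle yourself but do not resolve it, and neither the $\mathbb{Z}_k^{(1)}$ check nor the polynomiality of the chiral weights (which concerns the $a$-dependence of $h_a^W$, not descendant couplings) closes it.

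The missing idea is the paper's purely polynomial computation of the $O(x)$ term. One writes $\rho_{a,b}(x)=\chi^{(k+b)}(1^{\times k-a},x^{\times a},0^{\times b})$, where $\chi^{(k+b)}\in\mathcal{T}_{k+b}^{2rb}$ is a translation-invariant homogeneous symmetric polynomial of degree $S_{k+b}=2rb$, and expands it in the basis $P_\lambda=\prod_jP_{\lambda_j}$ with $P_n=\sum_i(z_i-\widehat{z})^n$ and $|\lambda|=2rb$. A direct expansion gives $P_n(1^{\times k-a},x^{\times a},0^{\times b})=P_n(1^{\times k-a},0^{\times a+b})\bigl(1-\frac{na}{a+b}x\bigr)+O(x^2)$, so every basis element satisfies $P_\lambda(1^{\times k-a},x^{\times a},0^{\times b})=P_\lambda(1^{\times k-a},0^{\times a+b})\bigl(1-\frac{2rab}{a+b}x\bigr)+O(x^2)$ because $\sum_j\lambda_j=2rb$. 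Hence the $O(x)$ coefficient is universally $-\frac{2rab}{a+b}$ times the $O(1)$ coefficient $C_{a,b}^2$, with no CFT input beyond the degree of the semi-invariant. That is the step your proposal needs and does not supply.
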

\begin{proof}
Recall that
\[
\chi^{(k+b)}(z_1, \cdots, z_{k+b})=\frac{1}{g_k g_b}\avg{\psi_b^+(\infty)\psi(z_1)\cdots \psi(z_{k+b})}\prod_{i<j}(z_i-z_j)^{\gamma}
\]
is a semi-invariant of degree $S_{k+b}=2rb$. As per usual, define
\[
P_n(z_1, \cdots, z_{k+b}) = \sum_{i=1}^{k+b}\left(z_i - \widehat{z}\right)^n
\]
with $\widehat{z}=(z_1+\cdots+z_{k+b})/(k+b)$. The polynomials $P_\lambda$ with $\lambda=(\lambda_1, \cdots, \lambda_{\ell})$ with $\lambda_1\leq k+b$ and $\lambda_\ell>1$ and $|\lambda|=2kb$ make a basis for $\mathcal{T}_{k+b}^{2rb}$. Therefore, we have
\[
\chi^{(k+b)}=\sum_{\lambda} \alpha_\lambda P_\lambda
\]
for some $\alpha_\lambda$. At the same time, $\rho_{a,b}(x) = \chi^{(k+b)}(1^{\times k-a}, x^{\times a}, 0^{\times b})$. We will use the properties of $P_n$ to compute $\rho_{a,b}(x)$ up to first order in $x$. Firstly,
\begin{align*}
	P_n(1^{\times k-a}, x^{\times a}, 0^{\times b}) &= (k-a)\left(1-\frac{k-a+ax}{k+b}\right)^n
	+
	a\left(x-\frac{k-a+ax}{k+b}\right)^n+
	b\left(-\frac{k-a+ax}{k+b}\right)^n\\
	&=P_n(1^{\times k-a}, 0^{\times a+b})\left(1-\frac{na}{a+b}x\right)+O(x^2)
\end{align*}
Therefore,
\[
\begin{aligned}
	P_\lambda
	(1^{\times k-a}, x^{\times a}, 0^{\times b}) &= 
	P_\lambda
	(1^{\times (k-a)}, 0^{\times a+b})
	\prod_{j=1}^{\ell}\left(1-\frac{\lambda_ja}{a+b}x\right)+O(x^2)\\
	&=
	P_\lambda
	(1^{\times k-a}, 0^{\times a+b})
	\left(1-\frac{2rab}{a+b}x\right)+O(x^2)
\end{aligned}
\]
At the same time,
\[
C_{a,b}^2=
\frac{g_{a+b}}{g_ag_b}\avg{\psi_b^{+}(\infty)\psi_a^+(1)\psi_{a+b}(0)}=
\rho_{a,b}(1^{\times k-a}, 0^{\times a+b})
=\sum_{\lambda}\alpha_\lambda 
P_\lambda
(1^{\times k-a}, 0^{\times a+b})
\]
Consequently,
\[
\rho_{a,b}(x) = 
C_{a,b}^2\left(1 - \frac{2rab}{a+b} x\right)+O(x^2)
\]
\end{proof}
\noindent
If we treat the $\mathbb{Z}_k^{(r)}$ algebra as a subalgebra of a representation of the chiral algebra $\mathcal{W}$, then even though the fields $\partial_W \psi_a$ might not be null for some $W\in \mathcal{W}$, they do not couple to the $\mathbb{Z}_k^{(r)}$. So for all intents and purposes we can treat them as null fields.

\section{Finitely Generated Graded Algebras}
\label{appendix_graded}
\paragraph{Definition} A commutative ring $R$ (with unity) is called an $\mathbb{N}$-\emph{graded ring}  if $R$ is a direct sum $R=\bigoplus_{m\geq 0} R^m$ such that $R^iR^j\subseteq R^{i+j}$. We say $R$ is a \emph{graded $\mathbb{C}$-algebra} if furthermore $R^0=\mathbb{C}$. An element $x\in R^m$ is called \emph{homogeneous of degree $m$}. We say $R$ is finitely generated, if there are finitely many homogeneous elements $T_1, T_2, \cdots, T_N$ in $R$ such that for any element $F\in R$ there exists a polynomial $f\in \mathbb{C}[x_1, \cdots, x_N]$ (not necessarily unique) such that $F = f(T_1, T_2, \cdots, T_N)$. Being finitely generated, in particular, means $\dim_{\mathbb{C}}R^m<\infty$ (dimension as a vector space) for all $m$.

\paragraph{Example} The canonical example of a finitely generated graded $\mathbb{C}$-algebra is $\mathbb{C}[x_1, \cdots, x_N]$. The standard grading is obtained by choosing $\deg x_1=\deg x_2=\cdots=\deg x_N=1$. However, one can choose a different grading for $R$. Let $\mathfrak{d}=(d_1, d_2, \cdots, d_N)$ be a sequence of positive integers (we also assume $d_i\leq d_{i+1}$). We call $\mathfrak{d}$ the degree profile of the grading of $\mathbb{C}[x_1, \cdots, x_N]$ induced by taking $\deg x_i = d_i$. We use the notation $S_N^\mathfrak{d}$ for the algebra $\mathbb{C}[x_1, \cdots, x_g]$ graded via the degree profile $\mathfrak{d}$. The standard grading has $\mathfrak{d}=(1,1,\cdots,1)$. An example with a different $\mathfrak{d}$ would be the ring of symmetric polynomials in $n$ variables: $\Lambda_N=\mathbb{C}[e_1, e_2, \cdots, e_N]$, where $e_i$ is the $i$th elementary symmetric polynomial. Obviously it is more natural to choose $\deg e_i = i$, making $\mathfrak{d}=(1,2,3,\cdots, N)$, and $\Lambda_N = S_N^{\mathfrak{d}}$. 

\paragraph{Convention for Generators} From now on, $R$ is taken to be a finitely generated algebra. We symbolically write the generators of $R$ as $X_1, X_2, \cdots, X_D, Y_1, \cdots Y_g$. It will become clear shortly why we have separated the generators into $X,Y$ sets. The meaning of $D$ and $g$ will be explained in due course as well. We also make the convention that $g=0$ will mean that $R$ has no $Y$-generators. We will use the notation $d_i=\deg X_i$, and  $\mathfrak{d}=(d_1, \cdots, d_D)$. Furthermore, let $\deg Y_j = d'_j$.

\paragraph{Poincaré Series, Krull Dimension \& Multiplicity}
Since $R=\bigoplus_{m}R^m$ is finitely generated, each $R^m$ is a finite dimensional vector space. This allows us to define the formal power series:
\begin{equation}
	P_R(t)=\sum_{n\geq 0} [\dim_{\mathbb{C}}R^m] t^m
\end{equation}
This is called the \emph{Poincaré series} of $R$. A standard result in theory of finitely generated graded algebras (Hilbert-Serre theorem; see for example \cite[Thm. 11.1]{atiyah}) states that there exists an integral polynomial $q_R\in \mathbb{Z}[t]$ so that:
\begin{equation}
	P_R(t)= \frac{p_R(t)}{\prod_{i=1}^D (1-t^{d_i})}, \qquad
	p_R(t)=\frac{q_R(t)}{\prod_{i=1}^g (1-t^{d'_i})}
\end{equation}
We have chosen $D$ so that the rational function $p_R(t)$ is regular at $t=1$. In other words, $D$ is the degree of the pole of $P_R(t)$ at $t=1$. This is called the \emph{Krull dimension} of $R$ (important to note that there are a multitude of equivalent definitions for Krull dimension). The \emph{multiplicity} of $R$ is defined as $\mu = p_R(1)$.

\paragraph{Homogeneous System of Parameters (hsop)}
With $D$ the Krull dimension of $R$, a \emph{homogeneous system of parameters (hsop)} for $R$ is a set of $D$ homogeneous elements $X_1, \cdots, X_D\in R$, such that the quotient $\overline{R}=R/(X_1, X_2, \cdots, X_D)$ is a finite-dimensional $\mathbb{C}$-vector space. It is a standard result that an hsop always exists (if $R$ is finitely generated), and $X_1, X_2, \cdots, X_D$ are \emph{algebraically independent}\footnote{Algebraic independence means that: if $f\in \mathbb{C}[x_1, \cdots, x_D]$ is such that $f(X_1, \cdots, X_D)=0$, then $f=0$.}. Let $l$ be defined through $\dim_{\mathbb{C}}\overline{R}=l+1$. The algebra $R$ is a finite $S_D^{\mathfrak{d}}\equiv \mathbb{C}[X_1, \cdots, X_D]$-module. In fact, there exists a finite number of homogeneous elements $Z_0\equiv 1,Z_1, \cdots, Z_l\in R$ such that any $F\in R$ can be written in the form
\begin{equation}
	F = \sum_{j=0}^l Z_j f_j(X_1, \cdots, X_l)
	\label{eq_decom}
\end{equation}
where $f_j\in \mathbb{C}[x_1, \cdots,x_D]$. In general, $R$ is \emph{not} a free $S_D^{\mathfrak{d}}$-module; i.e. no matter which $Z_i$'s we choose, the decomposition \eqref{eq_decom} is not necessarily unique.

\paragraph{Cohen-Macaulay Property} 
Let $e_i$ be the degree of $Z_i$ (with $e_0=\deg 1 =0$). A finitely generated graded algebra $R$ is \emph{Cohen-Macaulay} if $R$ is a \emph{free} finite $S_D^{\mathfrak{d}}$-module; i.e. the decomposition \eqref{eq_decom} is now unique. More concretely, for any homogeneous $F$ we have
\begin{equation}
	F = \sum_{b=0}^{l}\:
	\sum_{\substack{
			\nu_1, \cdots, \nu_D\geq 0\\
			\nu_1d_1+\cdots +\nu_D d_D = \deg F - e_b}}
	C_{\nu}^{(b)} Z_b \prod_{i=1}^D X_i^{\nu_i}
\end{equation}
where $C_\nu^{(b)}\in \mathbb{C}$. Another consequence of Cohen-Macaulay condition is that Poincaré series is now
\begin{equation}
	P_R(t)=\frac{1+t^{e_1}+\cdots+t^{e_l}}{\prod_{i=1}^D(1-t^{d_i})}
\end{equation}
meaning $p_R(t)=1+t^{e_1}+\cdots+t^{e_l}$. The multiplicity is then $\mu=l+1=\dim \overline{R}$. At this point, the only topic remaining to be discussed is the relation between $Z_b$'s with $b>0$ and $Y_j$'s. The generators $Y_1,Y_2, \cdots, Y_g$ are always an elements of $\{Z_1, \cdots, Z_{l}\}$. Those $Z_b$ not equal to some $Y_j$ (if they exist) can always be reduced to $\prod_{j=1}^g Y_j^{a_j}$ for some integral sequence $a=(a_1, \cdots, a_g)$ satisfying $\sum_j a_j d'_j=e_b$. In other words, $g\leq l=\mu-1$.

\paragraph{About the Special Case $R=\mathcal{B}_n$}
For the special case $R=\mathcal{B}_n$ (we use $D(n)$, $\mu(n)$ for Krull dimension and multiplicity of $\mathcal{B}_n$), Springer has found a closed formula for the Poincaré series \cite{springer1, springer2}:
\begin{equation}
	P_n(t)=\sum_{0\leq j<\frac{n}{2}}(-1)^j \phi_{n-2j}\left(\frac{(1-t^2)t^{j(j+1)}}{(j,t^2)!(n-j, t^2)!}\right)
\end{equation}
where $(d,t)!\equiv (1-t)(1-t^2)\cdots (1-t^d)$ and the operator $\phi_d$ transforms a rational function $f$ in $t$ to a rational function $\phi_df$ according to
\begin{equation}
	(\phi_d f)(t^d)=\frac{1}{d}\sum_{j=1}^d f(e^{2\pi i j/d}t)
\end{equation}
From this relation, it can be shown that $D(1)=1$ and $D(n)=n-2$ for $n\geq 3$.  For $n=2,3,4$ we have $\mu(n)=1$, meaning $\mathcal{B}_n$ is polynomial algebras. When $n>4$, then $\mathcal{B}_n$ is no longer freely generated. Let $d_1, \cdots, d_{n-2}$ be the degrees of the invariants in the hsop. For $n\geq 5$ we have $\mu(n)>1$ and the multiplicity is given by the formula
\begin{equation}
	\frac{\mu(n)}{\prod_{i=1}^{n-2}d_{n-2}} = -\frac{2^{\frac{1}{2}[1+(-1)^n]}}{4n!}\sum_{0\leq e<n/2}(-1)^e \binom{n}{e}\left(\frac{n}{2}-e\right)^{n-3}
\end{equation}

\section{Invariants of Binary Forms in Terms of Roots}
\label{appendix_direct}
In section \ref{sec_iso} we found an isomorphism $\mathcal{B}_n^m\simeq \mathcal{U}_m^n$. In this appendix we explore yet another isomorphism: $\mathcal{B}_n^m\simeq \mathcal{U}_n^m$. First of all, let us define the inverted coefficients 
\begin{equation}
	\vv{a}^\tau =(a^{\tau}_0,a^\tau_1, \cdots, a^\tau_{n-1},a^\tau_n)=(a_n, a_{n-1}, \cdots, a_1, a_n)
\end{equation}
This is the transformation of $\vv{a}$ under $\tau\in \mathbf{GL}_2$ given by $\tau(x,y)^t=(y,x)^t$. In terms of the inverted coefficients, the binary form can be written as
\begin{equation}
	Q_n(X,Y)=\sum_{j=0}^n \binom{n}{j}a^\tau_j X^{n-j}Y^j
\end{equation}
Suppose $a^\tau_0=a_n\neq 0$. Then, using the fundamental theorem of algebra we can write
\begin{equation}
	Q_n(X,Y)=\sum_{j=0}^n \binom{n}{j}a^\tau_j X^{n-j}Y^j=a^\tau_0\prod_{i=1}^n (X-Yz_i)
\end{equation}
where $z_i$ are the roots of the monic polynomial $P(x)=\sum_{j=0}^{n}\binom{n}{j}\frac{a^\tau_j}{a^\tau_0} x^{n-j}$. Let us find how the roots transform under $g\in \mathbf{GL}_2$. We define
\begin{equation}
	g =\begin{pmatrix}
		\alpha& \beta\\
		\gamma& \delta
	\end{pmatrix},\qquad f_g(z) = \frac{\alpha z+\beta}{\gamma z + \delta}
\end{equation}
One can check that under $g\in \mathbf{GL_2}$, we have $z_i\mapsto f_g(z_i)$. Also, note that $f_{\tau}(z)=1/z$.

Let us now relate the roots to coefficients explicitly. The elementary symmetric polynomials $e_j(z_1, \cdots, z_n)$ can be defined through the generating function $\prod_{i=1}^n (1+tz_i)=\sum_{j=0}^n e_j(z_1, \cdots, z_n)t^n$. Therefore, we find
\begin{equation}
	\label{eq_replacement}
	\binom{n}{j}\frac{a^\tau_j}{a^\tau_0}=(-1)^{j} e_{j}(z_1, \cdots, z_n)
\end{equation}
Suppose $I(a_0, a_1, \cdots, a_n)$ is an invariant of binary $n$-ic of degree $m$ and weight $w=nm/2$. We would like to know what $I$ would look like in terms of the roots. First of all, we have $I(\vv{a}^\tau)=(-1)^wI(\vv{a})$. Moreover, using homogeneity and Eq. \eqref{eq_replacement}, we have
\begin{equation}
	\label{eq_temp2}
	I(\vv{a}) = (-1)^{w}[a_0^\tau]^m I(\vv{a}^\tau/a_0^\tau)=
	(-1)^{w}[a_0^\tau]^m P_I(z_1, \cdots, z_n)
\end{equation}
where
\[
P_I(z_1, \cdots, z_n) =
I\left(\cdots, \frac{(-1)^j}{\binom{n}{j}}e_{j}(z_1, \cdots, z_{n}),\cdots\right)
\]
Clearly $P_I$ is a symmetric polynomial of local degree $m$.

Next, let $g\in \mathbf{GL}_2$. We would like to know how $P_I$ transforms under $f_g$. We start by finding the transformation of $a_0^\tau$. After a bit of computation, one finds:
\begin{equation}
	a^\tau_0\mapsto
	a_0^\tau \prod_{i=1}^n \frac{(\gamma z_i +\delta)}{\det g} = a_0^\tau
	(\det g)^{-n/2} \prod_i  \left(\frac{df_g}{dx}\right)^{-1/2}_{x=z_i}
\end{equation}
Combined with Eq. \eqref{eq_temp2}, and $I$ being an invariant of degree $m$, we find that
\[
P_I(f_g(z_1), \cdots, f_g(z_n))=
\prod_i\left(\frac{df_g}{dx}\right)^{m/2}_{x=z_i}
P_I(z_1, \cdots, z_n)
\]
In other words, $P_I$ is an $(n,m)$ uniform state. The mapping $I\mapsto P_I$ is an isomorphism $\mathcal{B}_n^m\simeq \mathcal{U}_n^m$. It is worth mentioning that: the composition of $\mathcal{B}_n^m\simeq \mathcal{U}_n^m$ and the isomorphism 
$\mathcal{B}^n_m\simeq \mathcal{U}_n^m$ discussed in section \ref{sec_binary}, leads to $\mathcal{B}_n^m\simeq \mathcal{B}_m^n$. This is called the \emph{Hermite reciprocity} theorem.

\begin{rmk}
	To find the algebra of invariants of binary quadratics and cubics, all we need to do is to find all uniform states in two and three variables. Starting with quadratics, the only symmetric translational invariant polynomials are
	\[
	P_{m}(z_1, z_2) = (z_1-z_2)^{2m}
	\]
	The polynomial $P_m$ corresponds to the invariant $\Delta^m$, with $\Delta$ being the discriminant. Moving on to the cubic invariants, let $P(z_1, z_2, z_3)$ be a uniform state. Due to conformal invariance (exactly the same way three point correlators are found in CFT), we find that $P$ is necessarily of the form
	\[
	P_m(z_1, z_2, z_3)=(z_1-z_2)^{2m} (z_1-z_3)^{2m}(z_2-z_3)^{2m}
	\]
	for some $m\in \mathbb{N}$. The polynomial $P_m$ corresponding to the invariant $\Delta_3^m$, with $\Delta_3$ being the cubic discriminant.
	
\end{rmk}

\subsection{Principal $\mathbb{Z}_2^{(r)}$-WFs: Paired States}
Previously, the isomorphism $\mathcal{B}_{2r}^{2k}\simeq \mathcal{U}_{2k}^{2r}$, together with our knowledge of $\mathcal{B}_{2r}$, allowed us to construct an ansatz for $\mathbb{Z}_k^{(r)}$-WFs with a fixed $r$. Similarly, we can use the isomorphism $\mathcal{B}^{2r}_{2k}\simeq \mathcal{U}_{2k}^{2r}$, combined with a known basis for $\mathcal{B}_{2k}$, to obtain an ansatz for $\mathbb{Z}_k^{(r)}$-algebra; however, this time we are fixing $k$ but varying $r$. The latter isomorphism $\mathscr{E}:\mathcal{U}_{2k}^{2r}\xrightarrow{\sim} \mathcal{B}_{2k}^{2r}$ can be obtained explicitly as follows:

\begin{enumerate}
\item Let $P(z_1, \cdots, z_{2k})$ be a $(2k,2r)$ uniform state.
\item Find the ``symmetric reduction'' of $P$; i.e. the decomposition of $P$ in the basis of elementary symmetric polynomials:
\[
P(z_1, \cdots, z_{2k}) = \sum_{\lambda} \beta_\lambda e_{\lambda}(z_1, \cdots, z_{2k})
\]
where $\lambda=(\lambda_1, \lambda_2, \cdots, \lambda_{2r})$ are partitions of $2kr$, $e_\lambda=e_{\lambda_1}e_{\lambda_2}\cdots$ and $\beta_\lambda\in \mathbb{C}$.

\item Replace $e_j$ with 	$(-1)^j\binom{2k}{j}a_j/a_0$; i.e.
\[
P'(\vv{a})=
\sum_{\lambda}\binom{2k}{j} \beta_\lambda 
\prod_{i=1}^{2r}\frac{a_{\lambda_j}}{a_{0}}
\]
\item Finally, $\mathscr{E}[P](\vv{a})\equiv a_0^{2r} P'(\vv{a})$ which is a homogeneous polynomial, and in fact an invariant of binary $2k$-ic with degree $2r$. Note that $\mathscr{E}[P_1P_2]=\mathscr{E}[P_1]\mathscr{E}[P_2]$.
\end{enumerate}
In the case of $2k=4$, we know that $\mathcal{B}_4$ is generated by two invariants $I_2, I_3$ of degrees $2,3$. Explicitly, we have
\begin{equation}
I_2 = (\mathscr{E}\circ \Psi)\left[\vcenter{
\hbox{
\includegraphics[scale=.75]{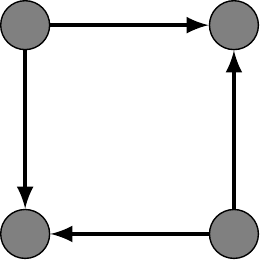}
}
}\right], \qquad
I_3 = (\mathscr{E}\circ \Psi)\left[\vcenter{
	\hbox{
		\includegraphics[scale=.75]{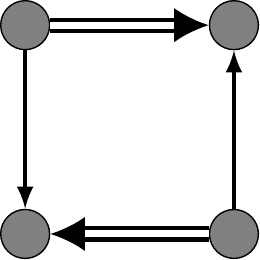}
	}
}\right]
\end{equation}
Thus, the most general invariant of degree $2r$ is of the form
\[
I= \sum_{m=0}^{\lfloor r/3\rfloor}
\alpha_m I_2^{r-3m} I_3^{2m}, \qquad \alpha_m\in \mathbb{C}
\]
Utilizing the isomorphism, our ansatz for the $\mathbb{Z}_2^{(r)}$-WFs is
\begin{equation}
	\label{Eq_Z2r}
\Psi = 
\sum_{m=0}^{\lfloor r/3\rfloor}
\alpha_m \Psi\left[\vcenter{
	\hbox{
		\includegraphics[scale=.75]{shard_pf}
	}
}\right]^{r-3m} \Psi\left[\vcenter{
\hbox{
	\includegraphics[scale=.75]{shard_gf}
}
}\right]^{2m}
\end{equation}
The normalization of $\Psi$ leads to $\sum_{m}\alpha_m=1$. Moreover, by computing $\xi(x)=\lim_{w\to \infty}w^{-2r}\Psi(w, 1,x,0)$, we find the coefficient of $x^2$ leads to
\begin{equation}
\sum_{m=0}^{\lfloor r/3\rfloor} m\alpha_m= \frac{4r}{27}\left(1-\frac{r}{2c}\right)
\end{equation}
In general, we suspect that (other than the normalization constraint) the coefficient $\alpha_m$ are free parameters of the CFT and equation \eqref{Eq_Z2r} is the final solution. In other words, we do not need to determine $\alpha_m$; rather we need to interpret their role as CFT data.

\bibliographystyle{elsarticle-harv}
\bibliography{foo} 

\end{document}